\numberwithin{equation}{section}
\def\E{\ensuremath\mathbb{E}}
\def\P{\ensuremath\mathbb{P}}
\def\Z{\ensuremath\mathbb{Z}}
\theoremstyle{plain}
\newtheorem{theorem}{Theorem}[]
\newtheorem{lemma}[theorem]{Lemma}
\newtheorem{corollary}[theorem]{Corollary}
\theoremstyle{definition}
\newtheorem{definition}[theorem]{Definition}
\title{Power of $d$ Choices with Simple Tabulation\footnote{This research is supported by Mikkel Thorup's Advanced
Grant DFF-0602-02499B from the Danish Council for Independent
Research and by his Villum Investor grant 16582.}}
\author{Anders Aamand\footnote{BARC, University of Copenhagen, Universitetsparken 1, Copenhagen, Denmark.}, Mathias B\ae k Tejs Knudsen\footnote{University of Copenhagen and Supwiz, Copenhagen, Denmark.}, and Mikkel Thorup$^\dagger$}
\date{}
\begin{document}
\maketitle
\begin{abstract}
Suppose that we are to place $m$ balls into $n$ bins sequentially using the $d$-choice paradigm: For each ball we are given a choice of $d$ bins, according to $d$ hash functions $h_1,\dots,h_d$ and we place the ball in the least loaded of these bins breaking ties arbitrarily. Our interest is in the number of balls in the fullest bin after all $m$ balls have been placed. 

Azar et al.~[STOC'94] proved that when $m=O(n)$ and when the hash functions are fully random the maximum load is at most $\frac{\lg \lg n }{\lg d}+O(1)$ whp (i.e. with probability $1-O(n^{-\gamma})$ for any choice of $\gamma$). 

In this paper we suppose that $h_1,\dots,h_d$ are simple tabulation hash functions which are simple to implement and can be evaluated in constant time. Generalising a result by Dahlgaard et al~[SODA'16] we show that for an arbitrary constant $d\geq 2$ the maximum load is $O(\lg \lg n)$ whp, and that expected maximum load is at most $\frac{\lg \lg n}{\lg d}+O(1)$.  We further show that by using a simple tie-breaking algorithm introduced by V\"ocking~[J.ACM'03] the expected maximum load drops to $\frac{\lg \lg n}{d\lg \varphi_d}+O(1)$ where $\varphi_d$ is the rate of growth of the $d$-ary Fibonacci numbers. Both of these expected bounds match those of the fully random setting.

The analysis by Dahlgaard et al. relies on a proof by P\u{a}tra\c{s}cu and Thorup~[J.ACM'11] concerning the use of simple tabulation for cuckoo hashing. We require a generalisation to $d>2$ hash functions, but the original proof is an 8-page tour de force of ad-hoc arguments that do not appear to generalise. Our main technical contribution is a shorter, simpler and more accessible proof of the result by P\u{a}tra\c{s}cu and Thorup, where the relevant parts generalise nicely to the analysis of $d$ choices.
 \end{abstract}

\section{Introduction}
Suppose that we are to place $m=O(n)$ balls sequentially into $n$ bins. If the positions of the balls are chosen independently and uniformly at random it is well-known that the maximum load of any bin is\footnote{All logarithms in this paper are binary.} $\Theta(\log n/\log \log n)$ whp (i.e.~with probability $1-O(n^{-\gamma})$ for arbitrarily large fixed $\gamma$). See for example~\cite{Gon} for a precise analysis. 

Another allocation scheme is the \textbf{$d$-choice paradigm} (also called the $d$-choice balanced allocation scheme) first studied by Azar et al.~\cite{Azar}: The  balls are inserted sequentially by for each ball choosing $d$ bins, according to $d$ hash functions $h_1,\dots,h_d$ and placing the ball  in the one of these $d$ bins with the least load, breaking ties arbitrarily. Azar et al.~\cite{Azar} showed that using independent and fully random hash functions the maximum load surprisingly drops to at most $\frac{\log \log n}{\log d}+O(1)$ whp. This result triggered an extensive study of this and related types of load balancing schemes. Currently the paper by Azar et al. has more than 700 citations by theoreticians and practitioners alike. The reader is referred to the text book~\cite{Mit} or the recent survey~\cite{Wieder} for thorough discussions. Applications are numerous and  are surveyed in~\cite{Mitz2,Mitz3}. 

An interesting variant was introduced by V\"ocking~\cite{Voc}. Here the bins are divided into $d$ groups each of size $g=n/d$ and for each ball we choose a single bin from each group. The balls are inserted using the $d$-choice paradigm but in case of ties we always choose the leftmost of the relevant bins i.e.~the one in the group of the smalles index. V\"ocking proved that in this case the maximum load drops further to $\frac{\log \log n}{d \log \varphi_d}+O(1)$ whp.

In this paper we study the use of simple tabulation hashing in the load balancing schemes by Azar et  al. and by V\"ocking.

\subsection{Simple tabulation hashing}
Recall that a hash function $h$ is a map from a key universe $U$ to a range $R$ chosen with respect to some probability distribution on $R^U$. If the distribution is uniform we say that $h$ is fully random but we may impose any probability distribution on $R^U$. 

Simple tabulation hashing was first introduced by Zobrist~\cite{Zobrist}. In simple tabulation hashing $U=[u]=\{0,1,\dots,u-1\}$ and $R=[2^r]$ for some $r$. We identify  $R$ with the $\Z_2$-vector space $(\Z_2)^r$. The keys $x\in U$ are viewed as vectors consisting of $c> 1$ characters $x=(x[0],\dots,x[c-1])$ with each $x[i]\in \Sigma\overset{def}{=}[u^{1/c}]$. We always assume that $c=O(1)$. The simple tabulation hash function $h$ is defined by
$$
h(x)=\bigoplus_{i=0}^{c-1}h_i(x[i])
$$
where $h_0,\dots,h_{c-1}:\Sigma \to R$ are chosen independently  and uniformly at random from $R^\Sigma$. Here $\oplus$ denotes the addition in $R$ which can in turn be interpreted as the bit-wise XOR of the elements $h_i(x[i])$ when viewed as bit-strings of length $r$. 

Simple tabulation is trivial to implement, and very efficient as the character tables $h_0,\dots,h_{c-1}$ fit in fast cache. 
P\u{a}tra\c{s}cu and Thorup~\cite{Pat} considered the hashing of 32-bit keys divided into 4 8-bit characters, and found it to be as fast as two 64-bit multiplications. On computers with larger cache, it may be faster to use 16-bit characters. We note that the $c$ character table lookups can be done in parallel and that character tables are never changed once initialised.

In the $d$-choice paradigm, it is very convenient that all the output bits of simple tabulation are completely independent (the $j$th bit of $h(x)$ is the XOR of the $j$th bit of each $h_i(x[i])$). Using $(dr)$-bit hash values, can therefore be viewed as using $d$ independent $r$-bit hash values, and the $d$ choices can thus be computed using a single simple tabulation hash function and therefore only $c$ lookups.

\subsection{Main results}
We will study the maximum load when the elements of a fixed set $X\subset U$ with $|X|=m$ are distributed into $d$ groups of bins $G_1,\dots,G_d$ each of size $g=n/d$ using the $d$-choice paradigm with independent simple tabulation hash functions $h_1,\dots,h_d:U \to [n/d]$. The $d$ choices thus consist of a single bin from each group as in the scheme by V\"ocking but we may identify the codomain of $h_i$ with $[n/d]\times \{i\}$ and think of all $h_i$ as mapping to the same set of bins $[n/d]\times [d]$ as in the scheme by Azar et al.

Dahlgaard et al.~\cite{Dahl1} analysed the case $d=2$. They proved that if $m=O(n)$ balls are distributed into two tables each  consisting of $n/2$ bins according to the two choice paradigm using two independently chosen simple tabulation hash functions, the maximum load of any bin is $O(\log \log n)$ whp. For $k=O(1)$ they further provided an example where the maximum load is at least $\lfloor k^{c-1}/2\rfloor \log \log n-O(1)$ with probability $\Omega(n^{-2(k-1)(c-1)})$. Their example generalises to arbitrary fixed $d\geq 2$ so we cannot hope for a maximum load of $(1+o(1))\frac{\log \log n}{\log d}$ or even $100\times \log \log n$ whp when $d$ is constant. However, as we show in Appendix~\ref{finishingapp}, their result implies that even with $d=O(1)$ choices the maximum load is $O(\log \log n)$ whp.

Dahlgaard et al. also proved that the expected maximum load is at most  $\log \log n+O(1)$ when $d=2$. We prove the following result which generalises this to arbitrary $d=O(1)$. 
\begin{theorem}\label{mythm}
Let $d>1$ be a fixed constant. Assume $m=O(n)$ balls are distributed into $d$ tables each of size $n/d$ according to the $d$-choice paradigm using $d$ independent simple tabulation hash functions $h_1,\dots,h_d:U \to [n/d]$. Then the expected maximum load is at most $\frac{\log \log n}{\log d}+O(1)$.
\end{theorem}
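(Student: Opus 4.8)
The plan is to run the layered‑induction/witness‑tree argument of Azar et al.~\cite{Azar} and to replace, one by one, the points where that argument uses full randomness by estimates that survive for simple tabulation. Recall the skeleton. Let $B_i$ be the set of bins with final load at least $i$. Since the total load is $m=O(n)$, we get $|B_{i_0}|\le\varepsilon n$ \emph{deterministically} for a suitable constant $i_0$ and any fixed $\varepsilon=\varepsilon(d)$. For $i\ge i_0$, the $(i{+}1)$‑st ball placed in a bin of $B_{i+1}$ saw all $d$ of its choices at load $\ge i$, so $B_{i+1}\subseteq\{x\in X:\ h_j(x)\in B_i\text{ for all }j\}$; revealing the balls in insertion order, the indicator that a ball contributes to $B_{i+1}$ is dominated, conditionally on the past, by the indicator that all its $d$ choices land in the current heavy set, which lies inside $B_i$. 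For fully random $h_j$ this conditional probability is $\le(|B_i|d/n)^d\le(\beta_i d/n)^d$ once $|B_i|\le\beta_i$, so a Chernoff bound gives $|B_{i+1}|\le\beta_{i+1}:=2m(\beta_i d/n)^d$ whp; the recursion $y\mapsto Cy^{d}$ on $y_i=\beta_i/n$ contracts doubly exponentially, so after $\log_d\log n+O(1)$ layers only $\mathrm{poly}(\log n)$ bins remain loaded, and once $m(\beta_i d/n)^d<1$ a Markov step (failure probability $n^{-(d-1)+o(1)}$) empties $B_{i+1}$. Unrolling this is exactly a height‑$(\log_d\log n+O(1))$ witness tree with $\Theta(\log n)$ nodes.

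Two independence uses must be repaired for simple tabulation. Conditioned on the hash values of the earlier balls, the law of $(h_1(x),\dots,h_d(x))$ is uniform and product only on a coset: it is fully uniform precisely when $x$ has a \emph{fresh character} (a character value not occurring in the corresponding position of any earlier ball), since then each $h_{j}$ picks up a fresh random table entry. Balls with a fresh character therefore obey the random‑model bound above verbatim; the obstacle is the remaining balls — those lying in the ``closure'' of the earlier ones — whose number and whose contribution must be shown to be small. This is where the cuckoo hypergraph $G$ enters: the $d$‑partite $d$‑uniform hypergraph on $\bigsqcup_j[n/d]$ with one hyperedge $\{h_1(x),\dots,h_d(x)\}$ per ball. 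A realised witness tree embeds into $G$ as a connected sub‑hypergraph, and the balls forming any $B_i$ form a bounded piece of $G$; the generalisation of Pătrașcu and Thorup's cuckoo‑hashing analysis — the technical heart of the paper — says that with $d$ independent simple tabulation functions $G$ behaves like a random sparse hypergraph: dense sub‑hypergraphs (more hyperedges than sparsity allows) are rare, no large ``coherent'' component occurs in the relevant sparse regime, and outside the rare bad key configurations the collision probabilities along a witness tree are genuinely multiplicative. Feeding this in, each of the $O(\log\log n)$ layers succeeds with failure probability $n^{-\Omega(1)}$ (whether it is a large‑$\beta_i$ Chernoff layer or a deep Markov layer, the data‑dependence of $B_i$ being tamed by the sparseness of $G$), and a union bound over layers gives $\P[\text{maxload}\ge \log_d\log n+C(d)]\le n^{-\Omega(1)}$. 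I expect this hypergraph estimate for $d$ hash functions to be by far the most delicate part; the remaining steps are bookkeeping on top of the fully random argument.

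Finally I would convert this tail bound into the claimed bound on the expectation. Write $\E[\text{maxload}]=\sum_{k\ge1}\P[\text{maxload}\ge k]$ and split at $k^\star:=\lceil\log_d\log n\rceil+C(d)$. The first $k^\star$ terms contribute $k^\star=\frac{\log\log n}{\log d}+O(1)$. For $k^\star<k\le K\log\log n$, where $K$ is the constant from the already established fact (from Dahlgaard et al.~\cite{Dahl1} and Appendix~\ref{finishingapp}) that $\text{maxload}=O(\log\log n)$ with probability $1-O(n^{-2})$, each term is at most $\P[\text{maxload}\ge k^\star]\le n^{-\Omega(1)}$, so these $O(\log\log n)$ terms sum to $o(1)$; and the remaining terms are handled by the trivial bound $\text{maxload}\le m$ together with the same $1-O(n^{-2})$ estimate, contributing $m\cdot O(n^{-2})=o(1)$. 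Hence $\E[\text{maxload}]\le\frac{\log\log n}{\log d}+O(1)$. Throughout, $d=O(1)$ is used so that the number of layers stays $O(\log\log n)$ and a height‑$(\log_d\log n+O(1))$ witness tree has only $\Theta(\log n)$ nodes, keeping every union bound and every per‑layer failure probability at $n^{-\Omega(1)}$.
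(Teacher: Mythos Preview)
Your sketch has a genuine gap at the step where you claim $\P[\text{maxload}\ge \log_d\log n+C]\le n^{-\Omega(1)}$ via a layered Chernoff argument. The ``fresh character'' observation is correct as far as it goes, but it covers almost none of the balls: once a constant fraction of the keys have been inserted, a typical new key shares \emph{every} position character with earlier keys, so its hash values are fully determined by the conditioning and your per-layer Chernoff bound simply does not apply to it. You acknowledge that these non-fresh balls are the obstacle and defer to ``the generalisation of P\u{a}tra\c{s}cu and Thorup's cuckoo-hashing analysis'', but that analysis (Theorem~\ref{tightthm} here) only controls the probability of a \emph{tight} sub-hypergraph; it says nothing about concentration of $|B_{i+1}|$ around its mean, which is what the layered induction actually needs. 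No known argument recovers Azar--Broder--Karlin--Upfal's Chernoff layer under simple tabulation, and indeed the counterexample of Dahlgaard et al.\ (see the introduction) shows that $\text{maxload}\ge K\log\log n$ occurs with probability $\Omega(n^{-c(K)})$ for any constant $K$, so one cannot hope for the $n^{-\gamma}$-for-all-$\gamma$ tail you implicitly use.

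The paper proceeds quite differently. Rather than a layered induction, it argues structurally: if some bin has load $k+1$ then the hash graph contains either a $d$-nomial witness tree $B_k$ or a tight sub-hypergraph of size $O(k)$ (Lemma~\ref{smalltight}). The tight-subgraph case is handled by Theorem~\ref{tightthm}. For the witness tree, there is no Chernoff step at all; instead one directly union-bounds over all ways the (pruned) tree can be realised, using the combinatorial counts of Lemma~\ref{deplemma} and Lemmas~\ref{struclem1}--\ref{struclem2} to control the dependent configurations. Crucially, this yields only $\P[\text{maxload}\ge \tfrac{\log\log n}{\log d}+r]=O((\log\log n)^{-1})$, \emph{not} $n^{-\Omega(1)}$; the expectation bound then follows because the middle block of the sum $\sum_k\P[L\ge k]$ has $O(\log\log n)$ terms each of size $O((\log\log n)^{-1})$, contributing $O(1)$, while the tail is killed by the whp $O(\log\log n)$ bound from Theorem~\ref{dahlthm}. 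Your expectation-splitting paragraph is fine in spirit, but it rests on a tail bound you have not (and, as stated, cannot) establish.
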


When in the $d$-choice paradigm we sometimes encounter ties when placing a ball --- several bins among the $d$ choices may have the same minimum load. As observed by V\"ocking~\cite{Voc} the choice of tie breaking algorithm is of subtle importance to the maximum load. In the fully random setting, he showed that if we use the \textbf{Always-Go-Left algorithm} which in case of ties places the ball in the leftmost of the relevant bins, i.e.~in the bin in the group of the smallest index, the maximum load drops to $\frac{\log \log n}{d\log\varphi_d}+O(1)$ whp. Here $\varphi_d$ is the \emph{unique} positive real solution to the equation $x^d=x^{d-1}+\dots+x+1$. We prove that his result holds in expectation when using simple tabulation hashing.

\begin{theorem}\label{alwaysleftthm}
Suppose that we in the setting of Theorem~\ref{mythm} use the Always-Go-Left algorithm for tie-breaking. Then the expected maximum load of any bin is at most $\frac{\log \log n}{d\log\varphi_d}+O(1)$.
\end{theorem}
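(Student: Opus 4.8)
The plan is to follow the proof of Theorem~\ref{mythm}, replacing the symmetric layered-induction/witness-tree argument for the Azar et al.\ scheme by V\"ocking's asymmetric one, and keeping the simple-tabulation machinery (our generalisation of the P\u{a}tra\c{s}cu--Thorup result) essentially untouched. Recall the witness tree certifying that some bin reaches load $h$: taking the earliest ball $b$ that brings a group-$i$ bin to height $h$, all $d$ of its hash choices had load $\geq h-1$ at insertion time, and --- this is exactly the effect of Always-Go-Left --- the choices in the groups of index $<i$ had load $\geq h$. Unrolling this, a node at ``load level $h$ in group $i$'' gets $d$ children: one at level $h$ for each group of smaller index and one at level $h-1$ for each of the remaining groups (including its own). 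The number of ball-nodes in such a full tree therefore obeys V\"ocking's $d$-ary Fibonacci recursion, and hence grows like $\varphi_d^{dh}$, where $\varphi_d$ is the dominant root of $x^d=x^{d-1}+\dots+1$ as in the statement. As in the proof of Theorem~\ref{mythm} (and of Dahlgaard et al.\ for $d=2$) the tree must be truncated so that its size stays $(\log n)^{O(1)}$, which costs only an additive constant in $h$.

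Next I would bound the probability that a fixed truncated witness tree of the relevant shape is \emph{realised} by $h_1,\dots,h_d$. The dependencies of simple tabulation are handled exactly as for Theorem~\ref{mythm}: invoking our simplified and generalised version of the P\u{a}tra\c{s}cu--Thorup result, one argues that, except with probability $O(n^{-\gamma})$, the set of keys occurring in \emph{any} potential witness tree of the bounded size contains no ``obstruction'' --- no sub-configuration of the $d$-partite hash hypergraph with more hyperedges than vertices. Conditioned on obstruction-freeness, the keys along a fixed tree can be processed character-by-character in the P\u{a}tra\c{s}cu--Thorup manner, and each ball-node contributes an independence-like factor $O(m/n)$; the asymmetry of V\"ocking's tree changes only the bookkeeping of which hash coordinates are already fixed versus still free at each step, not the per-node factor.

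Combining the two parts, $\P[\text{max load}\geq h]$ is at most the number of truncated witness trees of the relevant shape times $(O(m/n))^{\#\text{ball-nodes}}$, plus the $O(n^{-\gamma})$ obstruction term. Since the number of ball-nodes is $\Theta(\varphi_d^{dh})$, this is $o(1)$ as soon as $h\geq \frac{\log\log n}{d\log\varphi_d}+O(1)$, because then $\varphi_d^{dh}\gtrsim \log n$; summing this tail bound over $h$ and using that the maximum load never exceeds $m=O(n)$ converts it into the claimed bound on $\E[\text{max load}]$, just as in the proof of Theorem~\ref{mythm}. (Note that only the bound in expectation is sought, which is consistent with the lower-bound example recalled in the introduction that rules out an $O(\log\log n)$ whp bound with a small constant for fixed $d$.)

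The hard part, assuming Theorem~\ref{mythm} and its machinery are in hand, will be the combinatorial bookkeeping for V\"ocking's asymmetric witness tree in the presence of $d$ groups of size $n/d$: one must check that the obstruction-free estimate still yields the clean factor $O(m/n)$ per ball-node with nothing lost in the $d$ distinct (but equal) group sizes, and that the truncation making obstruction-freeness affordable does not erode the exponent constant $d\log\varphi_d$. The remaining ingredient --- that the $d$-ary Fibonacci numbers grow at rate $\varphi_d$ --- is a standard fact that transfers verbatim from V\"ocking's analysis.
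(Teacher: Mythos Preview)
Your high-level plan is the paper's: replace the $d$-nomial witness tree by V\"ocking's asymmetric Fibonacci tree $S_i(k)$ (whose recursive structure you describe correctly), keep the tight-subgraph step unchanged, and assemble the pieces as for Theorem~\ref{mythm}. But the core of your witness-tree bound does not work as stated. You claim that, conditioned on obstruction-freeness, ``each ball-node contributes an independence-like factor $O(m/n)$''. Since $m=O(n)$ this factor is only $O(1)$, and a tree with $s$ edges gives a union bound of order $n(d^2m/n)^s$, which explodes whenever $d^2m/n>1$. What actually rescues Case~1 (independent keys) in the paper is the $\ell$-\emph{pruning}: at every internal node one retains $\ell$ sibling edges whose children are all leaves, and the $(\ell!)$-fold symmetry among those edges divides the count of labelled trees by $(\ell!)^{\#\text{such nodes}}$. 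For the Fibonacci tree this is the content of Lemma~\ref{usefulfacts}, which controls the ratio of such internal nodes to edges; choosing $\ell$ large then beats any constant $d^2m/n$. This is the substance of the ``bookkeeping'' you allude to, and without it the argument fails already for fully random hashing.

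Second, you are conflating two unrelated notions. ``Obstruction-freeness'' in the paper's sense (no tight subgraph, i.e.\ no $G=(V,E)$ with $|V|\le (d-1)|E|-1$) is a \emph{structural} property of the hash graph; it says nothing about whether the \emph{keys} labelling a given tree are independent under simple tabulation --- four keys on a path can perfectly well satisfy $x_1\oplus x_2\oplus x_3\oplus x_4=\emptyset$. Hence the paper needs a separate and substantial case analysis (Cases~2--4 of Theorem~\ref{nofibtree}, resting on Lemmas~\ref{struclem1}--\ref{struclem2}) to handle dependent keys in the pruned tree; the generalised P\u{a}tra\c{s}cu--Thorup result is used \emph{only} for step~(3), the tight-subgraph event, and plays no role in the witness-tree probability. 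Finally, your tail sum is also off: Theorem~\ref{tightthm} gives only $n^{-1/3+o(1)}$, not $O(n^{-\gamma})$ for arbitrary $\gamma$, so summing your bound up to $h=m$ would diverge; the paper instead invokes the Dahlgaard et al.\ high-probability $O(\log\log n)$ bound (Theorem~\ref{dahlthm}) to truncate the expectation sum at $\alpha\log\log n$.
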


Note that $\varphi_d$ is the rate of growth of the so called $d$-ary Fibonacci numbers for example defined by $F_d(k)=0$ for $k\leq 0$, $F_d(1)=1$ and finally $F_d(k)=F_d(k-1)+\dots+F_d(k-d)$ when $k>1$. With this definition $\varphi_d=\lim_{k\to \infty} \sqrt[k]{F_d(k)}$. It is easy to check that $(\varphi_{d})_{d>1}$ is an increasing sequence converging to 2.

\subsection{Technical contributions}

In proving Theorem~\ref{mythm} we would ideally like to follow the approach by Dahlgaard et al.~\cite{Dahl1} for the case $d=2$ as close as possible. They show that if some bin gets load $k+1$ then either the hash graph (informally, the $d$-uniform hypergraph with an edge $\{(h_i(x),i))\}_{1\leq i \leq d}$ for each $x\in X$) contains a subgraph of size $O(k)$ with more edges than nodes or a certain kind of ``witness tree'' $T_k$. They then bound the probability that either of these events occur when $k=\log \log n+r$ for some sufficiently large constant $r$. Putting $k=\frac{\log \log n}{\log d}+r$ for a sufficiently large constant $r$ we similarly have three tasks:

\begin{enumerate}
\item[(1)] Define the $d$-ary witness trees and argue that if some bin gets load $k+1$ then either \textbf{(A)}: the hash graph contains a such, or \textbf{(B)}: it contains a subgraph $G=(V,E)$ of size $O(k)$ with $|V|\leq (d-1)|E|-1$.
\item[(2)] Bound the probability of \textbf{(A)}.
\item[(3)] Bound the probability of \textbf{(B)}.
\end{enumerate}

Step (1) and (2) require intricate arguments but the techniques are reminiscent to those used by Dahlgaard et al. in~\cite{Dahl1}. It is not surprising that their arguments generalise to our setting and we will postpone our work with step (1) and (2) to the appendices.

Our main technical contribution is our work on step (3) as we now describe. Dealing with step (3) in the case $d=2$ Dahlgaard et al. used the proof by P\u{a}tra\c{s}cu and Thorup~\cite{Pat} of the result below concerning the use of simple tabulation for cuckoo hashing\footnote{Recall that in cuckoo hashing, as introduced by Pagh and Rodler~\cite{Pagh}, we are in the 2-choice paradigm but we require that no two balls collide. However, we are allowed to rearrange the balls at any point and so the feasibility does only depend on the choices of the balls.}. 
 \begin{theorem}[P\u{a}tra\c{s}cu and Thorup~\cite{Pat}]\label{cuckoothm}
Fix $\varepsilon>0$. Let $X\subset U$ be any set of $m$ keys. Let $n$ be such that $n>2(1+\varepsilon)m$. With probability $1- O(n^{-1/3})$ the keys of $X$ can be placed in two tables of size $n/2$ with cuckoo hashing using two independent simple tabulation hash functions $h_0$ and $h_1$. 
\end{theorem}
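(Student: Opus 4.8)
The plan is to reduce the statement to a purely graph-theoretic obstruction and then bound its probability, first pretending $h_0,h_1$ are fully random and afterwards paying for the ways simple tabulation deviates from this.

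\textbf{Step 1 (obstruction).} I would start from the folklore characterisation: cuckoo hashing of $X$ with $h_0,h_1$ succeeds if and only if the \emph{cuckoo graph} --- the bipartite (multi)graph on vertex set $([n/2]\times\{0\})\sqcup([n/2]\times\{1\})$ with an edge $\{(h_0(x),0),(h_1(x),1)\}$ for every $x\in X$ --- has no connected component with more edges than vertices. So it suffices to bound the probability that the cuckoo graph contains a connected subgraph $G=(V,E)$ with $|E|\ge|V|+1$. Passing to an edge-minimal such $G$ and suppressing degree-$2$ vertices, $G$ becomes a subdivision of one of a constant number of ``core'' multigraphs of cyclomatic number $2$ and minimum degree $3$ (a theta graph, a dumbbell, a figure-eight). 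Thus $G$ is a union of at most three paths and cycles meeting in at most two branch vertices, and it is enough to control the probability that such a bounded system of paths is \emph{realised} by the hash values.

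\textbf{Step 2 (the fully random template).} As a warm-up, suppose $h_0,h_1$ are fully random. A realisation of a fixed core topology with $e$ keys spans $v=e-1$ cells; there are at most $(n/2)^{e-1}m^{e}$ ways to choose the cells and the distinct keys along it, and, the keys being distinct, each labelled structure is realised with probability $(2/n)^{2e}$. The total is $O\!\big(n^{-1}(2m/n)^{e}\big)$, and since $n>2(1+\varepsilon)m$ we have $2m/n<1/(1+\varepsilon)<1$, so summing over $e\ge 3$ gives $O_\varepsilon(n^{-1})$ --- even stronger than claimed. The whole difficulty is therefore to show that simple tabulation inflates this only by an $n^{O(1)}$ factor that still leaves $O(n^{-1/3})$.

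\textbf{Step 3 (simple tabulation: the dependencies).} Viewing a key $x$ as the indicator vector of its $c$ character slots in $\Z_2^{[c]\times\Sigma}$, simple tabulation behaves exactly like a fully random function on any \emph{linearly independent} set of keys; the only loss comes from \emph{zero-sets}, i.e.\ subsets $Z$ whose vectors sum to $0$, for which $\bigoplus_{x\in Z}h_0(x)=\bigoplus_{x\in Z}h_1(x)=0$ deterministically. The plan is: (i) given a would-be obstruction, restrict to a maximal linearly independent subset $I$ of its keys; conditioned on $(h_0,h_1)|_I$, every remaining key is a fixed XOR of keys of $I$, so its pair of cells is determined --- this replaces ``independence'' in the Step 2 union bound and keeps the conditional probabilities exact; (ii) bound separately, with probability $O(n^{-1/3})$, the bad event that $X$ contains a ``small'' zero-set that could close up an obstruction, using the counting bound that the number of minimal zero-sets of size $s$ among $m$ keys is at most $m^{s-1}$ together with the extra cell-coincidence constraints such a set imposes once $I$ is fixed; (iii) for obstructions whose independent core is short --- below a threshold $t$ --- take a direct union bound over the $O(1)$ topologies and the at most $t$ keys, as in Step 2. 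Optimising the threshold $t$ between the decreasing ``short-core'' contribution (about $m^{t}/n^{t+1}$) and the ``long-core / many-dependency'' contribution (which degrades with $t$ because each dependency only partially saves randomness) is what makes the exponent $1/3$ appear.

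\textbf{Main obstacle.} The crux --- and the reason the argument of P\u{a}tra\c{s}cu and Thorup~\cite{Pat} runs for eight pages --- is step 3(ii): pinning down precisely how a linear dependency among the keys of a would-be obstruction converts into cell coincidences, obtaining a clean enough bound on the number and on the ``randomness deficit'' of zero-sets, and doing the bookkeeping over where along the (up to three) paths the dependencies sit and how the paths share branch vertices, so that the whole sum still telescopes to $O(n^{-1/3})$. The aim throughout is to organise Step 1 and the independent-core/zero-set dichotomy of Step 3 so that replacing ``$|E|\ge|V|+1$ in a graph'' by ``$|V|\le(d-1)|E|-1$ in a $d$-uniform hypergraph'' goes through essentially verbatim, which is exactly what is needed for Theorem~\ref{mythm}.
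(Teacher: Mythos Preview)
Your Steps 1 and 2 match the paper, but Step 3 is only a plan, and the plan as stated has real gaps.

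First, the counting input you propose --- ``the number of minimal zero-sets of size $s$ among $m$ keys is at most $m^{s-1}$'' --- is far too weak. The paper's engine is Lemma~\ref{deplemma}: the number of $2t$-tuples from $X$ XORing to $\emptyset$ is at most $((2t-1)!!)^c m^t$, i.e.\ roughly $m^{s/2}$ rather than $m^{s-1}$. With only $m^{s-1}$ saved, the union bound over dependent structures does not close.

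Second, ``restrict to a maximal linearly independent subset $I$'' throws away exactly the structure the paper exploits. The paper grows the independent set \emph{from a specific starting configuration inside the double cycle}: either from an independent cycle outward (giving a \emph{lasso} $L$ with one dependent key attached at its free end), or from three edges at a degree-$3$ vertex (giving a \emph{trident} $T$ with at most three dependent keys $x,y,z$ at its leaves). Knowing the shape --- and in particular that each dependent key sits at a leaf adjacent to a specified key $x_1,y_1,z_1$ of $T$ --- is what makes the subsequent case analysis finite.

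Third, your account of where the exponent $1/3$ comes from does not match the mechanism. After Lemma~\ref{deplemma} reduces each of $x,y,z$ to an XOR of exactly three trident keys and forces heavy overlap among the three index sets, the paper introduces one further idea: partition $X\times X$ by the relation $(a,b)\sim(c,d)\Leftrightarrow a\oplus b=c\oplus d$, call a class \emph{large} if it has size $\ge m^{2/3}$, and observe (again via Lemma~\ref{deplemma}) that there are at most $3^c m^{2/3}$ large classes. The event that some large nontrivial class is mapped to $0$ by $h_1$ or $h_2$ has probability $O(m^{2/3}/n)=O(n^{-1/3})$; conditioned on its complement, each pair $(x,x_1)$, $(y,y_1)$, $(z,z_1)$ lies in a small class, which caps the remaining choices by $m^{2/3}$ and forces the residual case $x=y=z=x_1\oplus y_1\oplus z_1$, a contradiction. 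The $1/3$ is thus the large/small cutoff $m^{2/3}$, not a path-length threshold optimisation; the separate long-independent-trail bound also lands at $n^{-1/3}$ but serves only to cap $\ell=O(\log n)$ in the sums.
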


Unfortunately for us, the original proof of Theorem~\ref{cuckoothm} consists of 8 pages of intricate ad-hoc arguments that do not seem to generalise to the $d$-choice setting. Thus we have had to develop an alternative technique for dealing with step (3) As an extra reward this technique gives a new proof of Theorem~\ref{cuckoothm} which is shorter, simpler and more readable and we believe it to be our main contribution and of independent interest\footnote{We mention in passing that Theorem~\ref{cuckoothm} is best possible: There exists a set $X$ of $m$ keys such that with probability $\Omega(n^{-1/3})$ cuckoo hashing is forced to rehash (see~\cite{Pat}).}.

\subsection{Alternatives}

We have shown that balanced allocation with $d$ choices
with simple tabulation gives the same expected maximum
load as with fully-random hashing. Simple tabulation uses $c$ lookups
in tables of size $u^{1/c}$ and $c-1$ bit-wise XOR. The experiments from 
\cite{Pat}, with $u=2^{32}$ and $c=4$,
indicate this to be about as fast as two multiplications.

Before comparing with alternative hash functions, we note that we may
assume that $u\leq n^2$. If $u$ is larger, we can first apply a
universal hash function~\cite{Weg} from $[u]$
to $[n^2]$. This yields an expected number of ${n\choose 2}/n^2<1/2$
collisions. We can now apply {\em any\/} hash function, e.g., simple
tabulation, to the reduced keys in $[n^2]$. Each of the duplicate keys
can increase the maximum load by at most one, so the expected maximum
load increases by at most $1/2$. If $u=2^w$, we can use the extremely
simple universal hash function from
\cite{DBLP:journals/jal/DietzfelbingerHKP97}, multiplying the key by a
random odd $w$-bit number and performing a right-shift.

Looking for alternative hash functions, it can be checked that
$O(\log n)$-independence suffices to get the same maximum load bounds
as with full randomness even with high probability.  High independence
hash functions were pioneered by Siegel~\cite{Sieg} and the most
efficient construction is the double tabulation of Thorup
\cite{Mik}. It gives independence $u^{\Omega(1/c^2)}$
using space $O(c u^{1/c})$ in time $O(c)$. With $c$ a constant this would suffice for our purposes. However, looking into
the constants suggested in~\cite{Mik}, with 16-bit characters for
32-bit keys, we have 11 times as many character table lookups
with double tabulation as with simple tabulation and we loose the same
factor in space, so this is not nearly as efficient.

Another approach was given by Woelfel~\cite{Woelf}  using the
hash functions he earlier developed with Dietzfelbinger~\cite{Dietz}. 
He analysed V\"ocking's Always-Go-Left algorithm, bounding
the error probability that the maximum load exceeded 
$\frac{\log \log n}{d\log\varphi_d}+O(1)$. Slightly
simplified and translated to match our notation, 
using $d+1$ $k$-independent hash functions
and $d$ lookups in tables of size $n^{2/c}$, the error
probability is $n^{1+o(1)-k/c}$. Recall that we may assume
$n^{2/c}\geq u^{1/c}$, so this matches the space of simple
tabulation with $c$ characters. With, say, $c=4$, he needs 5-independent
hashing to get any non-trivial bound, but the fastest 5-independent hashing
is the tabulation scheme of Thorup and Zhang~\cite{Zhang}, which
according to the experiments in~\cite{Pat} is at least twice as
slow as simple tabulation, and much more complicated to implement.

A final alternative is to compromise with the constant evaluation
time. Reingold et al.~\cite{Rein} have shown that using the hash
functions from~\cite{Celis} yields a maximum load of $O(\log
\log n)$ whp. The functions use $O(\log n \log \log n)$ random bits
and can be evaluated in time $O((\log \log n)^2)$. Very recently Chen
\cite{Chen} used a refinement of the hash family from~\cite{Celis}
giving a maximum load of at most $\frac{\log \log n}{\log d}+O(1)$
whp and $\frac{\log \log n}{d\log \varphi_d}+O(1)$ whp using
the Always-Go-Left algorithm. His functions require $O(\log n\log \log
n)$ random bits and can be evaluated in time $O((\log \log n)^4)$. 
We are not so concerned with the number of random bits. Our main interest in simple tabulation is in the constant evaluation time with a very low constant.

\subsection{Structure of the paper}
In Section~\ref{Prel} we provide a few preliminaries for the proofs of our main results. In Section~\ref{Cuckoosec} we deal with step (3) described under \emph{Technical contributions}. To provide some intuition we first provide the new proof of Theorem~\ref{cuckoothm}. Afterwards, we show how to proceed for general $d$. In Appendix~\ref{Impliapp} we show how to complete step (1) In Appendix~\ref{Tree1app} and Appendix~\ref{Tree2app} we complete step (2) Finally we show how to complete the proof of Theorem~\ref{mythm} and Theorem~\ref{alwaysleftthm} in Appendix~\ref{finishingapp}. In Appendix~\ref{openapp} we mention a few open problems.
\section{Preliminaries}\label{Prel}
First, recall the definition of a hypergraph:
\begin{definition}
A \textbf{hypergraph} is a pair $G=(V,E)$ where $V$ is a set and $E$ is a multiset consisting of elements from $\mathcal{P}(V)$. The elements of $V$ are called \textbf{vertices} and the elements of $E$ are called \textbf{edges}. We say that $G$ is \textbf{$d$-uniform} if $|e|=d$ for all $e\in E$.
\end{definition}
When using the $d$-choice paradigm to distribute a set of keys $X$ there is a natural $d$-uniform hypergraph associated with the keys of $X$.
\begin{definition}
Given a set of keys $X\subset U$ the \textbf{hash graph} is the $d$-uniform hypergraph on $[n/d]\times [d]$ with an edge $\{(h_1(x),1),\dots,(h_d(x),d)\}$ for each $x\in X$.
\end{definition}
When working with the hash graph we will hardly ever distinguish between a key $x$ and the corresponding edge, since it is tedious to write $\{(h_i(x),i)\}_{1\leq i \leq d}$. Statements such as ``$P=(x_1,\dots,x_t)$ is a path'' or  ``The keys $x_1$ and $x_2$ are adjacent in the hash graph'' are examples of this abuse of notation. 

Now we discuss the independence of simple tabulation. First recall that a \textbf{position character} is an element $(j,\alpha)\in [c]\times \Sigma$. With this definition a key $x\in U$ can be viewed as the set of position characters $\{(i,x[i])\}_{i=0}^{c-1}$ but it is sensible to define $h(S)=\bigoplus_{i=1}^k h_{j_i}(\alpha_i)$ for any set $S=\{(j_1,\alpha_1),\dots,(j_k,\alpha_k)\}$ of position characters.

In the classical notion of independence of Carter and Wegman~\cite{Weg} simple tabulation is not even 4-independent. In fact, the keys $(a_0,b_0),(a_0,b_1),(a_1,b_0)$ and $(a_1,b_1)$ are dependent, the issue being that each position character appears an even number of times and so the bitwise XOR of the hash values will be the zero string. As proved by Thorup and Zhang~\cite{Zhang} this property in a sense characterises dependence of keys.
\begin{lemma}[Thorup and Zhang~\cite{Zhang}]\label{depchar}
The keys $x_1,\dots,x_k\in U$ are dependent if and only if there exists a non-empty subset $I\subset \{1,\dots,k\}$ such that each position character in $(x_i)_{i\in I}$ appears an even number of times. In this case we have that $\bigoplus_{i\in I}h(x_i)=0$.
\end{lemma}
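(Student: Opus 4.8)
The plan is to reduce the statement to elementary linear algebra over $\Z_2$. Let $\mathcal{P}=\bigcup_{i}\{(j,x_i[j])\}_{j=0}^{c-1}$ be the (finite) set of all position characters occurring among $x_1,\dots,x_k$, and for each key $x_i$ let $v_i\in(\Z_2)^{\mathcal{P}}$ be the indicator vector of its set of position characters. Writing $h(p)=h_j(\alpha)$ for a position character $p=(j,\alpha)$, the values $(h(p))_{p\in\mathcal{P}}$ are independent and uniform in $(\Z_2)^r$, and $h(x_i)=\bigoplus_{p\in\mathcal{P}}v_i(p)\,h(p)$ is their image under the $\Z_2$-linear map determined by $v_i$. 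Since the $r$ output coordinates of the tables $h_0,\dots,h_{c-1}$ are mutually independent, it suffices to treat the case $r=1$ (or to argue directly over $(\Z_2)^r$): the goal becomes to show that the joint distribution of $(h(x_1),\dots,h(x_k))$ is uniform on $((\Z_2)^r)^k$ if and only if $v_1,\dots,v_k$ are linearly independent over $\Z_2$.

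First I would prove this equivalence. If $v_1,\dots,v_k$ are linearly independent, extend them to a basis of $(\Z_2)^{\mathcal{P}}$; then $H\mapsto(\langle v_i,H\rangle)_{i}$ is a surjective $\Z_2$-linear map, so it pushes the uniform distribution on $(\Z_2)^{\mathcal{P}}$ (resp.\ $((\Z_2)^r)^{\mathcal{P}}$) forward to the uniform distribution on the target, i.e.\ the $h(x_i)$ are independent and uniform. Conversely, a linear dependence over $\Z_2$ has all coefficients in $\{0,1\}$, so it has the form $\sum_{i\in I}v_i=0$ for some nonempty $I\subseteq\{1,\dots,k\}$; then $\bigoplus_{i\in I}h(x_i)=\bigoplus_{p}\bigl(\sum_{i\in I}v_i(p)\bigr)h(p)=0$ holds deterministically, so the joint distribution is supported on a proper subspace and the keys are dependent. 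This last computation simultaneously establishes the final sentence of the lemma.

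It then remains to translate the condition $\sum_{i\in I}v_i=0$ into the stated combinatorial one. For a fixed position character $p$, the $p$-coordinate of $\sum_{i\in I}v_i$ equals $\lvert\{i\in I:p\in x_i\}\rvert \bmod 2$, and since within a single key each position character occurs at most once, this count is precisely the number of times $p$ appears in the multiset $(x_i)_{i\in I}$. Hence $\sum_{i\in I}v_i=0$ holds exactly when every position character appears an even number of times in $(x_i)_{i\in I}$, which is the assertion of the lemma.

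I do not expect a genuine obstacle here; the content is routine once the framework is set up. The only points needing care are (i) fixing the Carter--Wegman notion of (in)dependence, namely that dependence means the joint distribution of $(h(x_1),\dots,h(x_k))$ fails to be uniform on $((\Z_2)^r)^k$; (ii) the reduction from general $r$ to $r=1$, or equivalently handling the vector space $(\Z_2)^r$ directly in the surjectivity argument; and (iii) noting that over $\Z_2$ every linear dependence corresponds to a subset $I$, which is what lets us pass between "linearly dependent" and the even-multiplicity condition.
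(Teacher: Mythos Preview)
Your proposal is correct and follows essentially the same approach as the paper's own proof: both identify keys with their indicator vectors of position characters in a $\Z_2$-vector space and observe that choosing a simple tabulation hash function amounts to choosing a random (or uniformly random linear) map, so that independence of the hash values is equivalent to linear independence of these vectors, which in turn is exactly the absence of a subset $I$ with the even-multiplicity property. The only cosmetic differences are that the paper works in $\Z_2^{[c]\times\Sigma}$ rather than restricting to the position characters that actually occur, and handles $(\Z_2)^r$ directly rather than reducing to $r=1$; your write-up is slightly more explicit about both directions and the surjectivity argument, but the content is the same.
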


When each position character appears an even number of times in $(x_i)_{i\in I}$ we will write $\bigoplus_{i\in I}x_i=\emptyset$ which is natural when we think of a key as a set of position characters and $\oplus$ as the symmetric difference. As shown by Dahlgaard et al.~\cite{Dahl2} the characterisation in Lemma~\ref{depchar} can be used to bound the independence of simple tabulation.
\begin{lemma}[Dahlgaard et al.~\cite{Dahl2}]\label{deplemma} Let $A_1,\dots,A_{2t}\subset U$. The number of $2t$-tuples $(x_1,\dots,x_{2t})\in A_1\times \cdots \times A_{2t}$ such that $x_1\oplus \cdots \oplus x_{2t}=\emptyset$ is at most\footnote{Recall the double factorial notation: If $a$ is a positive integer we write $a!!$ for the product of all the positive integers between $1$ and $a$ that have the same parity as $a$.} $((2t-1)!!)^c\prod_{i=1}^{2t}\sqrt{|A_i|}$.
\end{lemma}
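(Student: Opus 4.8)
The plan is to prove the bound by induction on the number of characters $c$, with a simple pairing observation and the Cauchy--Schwarz inequality as the only real tools, applied once per character.

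\textbf{Pairing observation.} If $(\sigma_1,\dots,\sigma_{2t})\in\Sigma^{2t}$ is a tuple in which every value of $\Sigma$ occurs an even number of times, then grouping the indices according to their common value and pairing up each group arbitrarily yields a perfect matching $M=\{(a_1,b_1),\dots,(a_t,b_t)\}$ of $\{1,\dots,2t\}$ with $\sigma_{a_\ell}=\sigma_{b_\ell}$ for every $\ell$. Since $\{1,\dots,2t\}$ has $(2t-1)!!$ perfect matchings, for any non-negative functions $w_1,\dots,w_{2t}$ on $\Sigma$ the sum of $\prod_{i=1}^{2t}w_i(\sigma_i)$ over all even-multiplicity tuples is at most $\sum_M\prod_{\ell=1}^{t}\sum_{\sigma\in\Sigma}w_{a_\ell}(\sigma)\,w_{b_\ell}(\sigma)$, the outer sum being over all $(2t-1)!!$ perfect matchings $M$.

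\textbf{Base case.} For $c=1$ the condition $x_1\oplus\cdots\oplus x_{2t}=\emptyset$ says precisely that every character occurs an even number of times among $x_1,\dots,x_{2t}$, so applying the pairing observation with $w_i$ the indicator of $A_i\subseteq\Sigma$ bounds the number of valid tuples by $\sum_M\prod_{\ell=1}^{t}|A_{a_\ell}\cap A_{b_\ell}|$. Since $|A\cap B|\le\sqrt{|A|\,|B|}$, each matching contributes at most $\prod_{i=1}^{2t}\sqrt{|A_i|}$, and there are $(2t-1)!!$ of them, giving the claim.

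\textbf{Inductive step.} Write a key $x$ as $(x[0],x')$ with $x[0]\in\Sigma$ and $x'\in\Sigma^{c-1}$ regarded as a $(c-1)$-character key, and note that $\bigoplus_i x_i=\emptyset$ holds iff every value occurs an even number of times among the first characters $x_1[0],\dots,x_{2t}[0]$ and, simultaneously, $\bigoplus_i x_i'=\emptyset$. For $\sigma\in\Sigma$ set $B_i(\sigma)=\{x':(\sigma,x')\in A_i\}$, so that $\sum_{\sigma\in\Sigma}|B_i(\sigma)|=|A_i|$. For each fixed choice of first characters $(\sigma_1,\dots,\sigma_{2t})$ with all even multiplicities, the induction hypothesis applied to $B_1(\sigma_1),\dots,B_{2t}(\sigma_{2t})$ bounds the number of valid completions $(x_1',\dots,x_{2t}')$ by $((2t-1)!!)^{c-1}\prod_{i=1}^{2t}\sqrt{|B_i(\sigma_i)|}$. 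Summing over all such first-character tuples and applying the pairing observation with $w_i(\sigma)=\sqrt{|B_i(\sigma)|}$, the total count is at most $((2t-1)!!)^{c-1}\sum_M\prod_{\ell=1}^{t}\sum_{\sigma\in\Sigma}\sqrt{|B_{a_\ell}(\sigma)|\,|B_{b_\ell}(\sigma)|}$. By Cauchy--Schwarz,
$$\sum_{\sigma\in\Sigma}\sqrt{|B_{a_\ell}(\sigma)|\,|B_{b_\ell}(\sigma)|}\ \le\ \sqrt{\sum_{\sigma\in\Sigma}|B_{a_\ell}(\sigma)|}\ \sqrt{\sum_{\sigma\in\Sigma}|B_{b_\ell}(\sigma)|}\ =\ \sqrt{|A_{a_\ell}|\,|A_{b_\ell}|},$$
so each matching contributes at most $\prod_{i=1}^{2t}\sqrt{|A_i|}$, and summing over the $(2t-1)!!$ matchings yields $((2t-1)!!)^c\prod_{i=1}^{2t}\sqrt{|A_i|}$, completing the induction.

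\textbf{Main obstacle.} The one design choice that must be made correctly in advance is to carry the square-root weights $\sqrt{|B_i(\sigma_i)|}$ through the recursion rather than raw cardinalities: it is precisely this form that lets Cauchy--Schwarz collapse the inner sum over the shared first character back to $\sqrt{|A_{a_\ell}|\,|A_{b_\ell}|}$ and reproduce exactly one extra factor of $(2t-1)!!$ per character. With that in place everything else is routine bookkeeping.
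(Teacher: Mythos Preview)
Your proof is correct and follows essentially the same route as the paper's: induction on $c$, the even-multiplicity pairing argument giving the $(2t-1)!!$ factor, and Cauchy--Schwarz to collapse the sum over the first character. The only cosmetic difference is that you isolate the pairing step as a separate ``pairing observation'' with general weights $w_i$, whereas the paper performs the identical argument inline in both the base case and the inductive step.
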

This lemma will be of extreme importance to us. For completeness we provide proofs of both Lemma~\ref{depchar} and Lemma~\ref{deplemma} in Appendix~\ref{Indapp}.
\section{Cuckoo hashing and generalisations}\label{Cuckoosec}
The following result is a key ingredient in the proofs of Theorem~\ref{mythm} and Theorem~\ref{alwaysleftthm}.
\begin{theorem}\label{tightthm}
Suppose that we are in the setting of Theorem~\ref{mythm} i.e. $d>1$ is a fixed constant, $X\subset U$ with $|X|=m=O(n)$ and $h_1,\dots,h_d:U\to [n/d]$ are independent simple tabulation hash functions. The probability that the hash graph contains a subgraph $G=(V,E)$  of size $|E|=O(\log \log n)$ with $|V|\leq (d-1)|E|-1$ is at most $n^{-1/3+o(1)}$. 
\end{theorem}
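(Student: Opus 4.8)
The plan is to apply a union bound over all possible "bad" subgraphs $G = (V, E)$ with $|E| = k = O(\log\log n)$ edges and $|V| \le (d-1)k - 1$ vertices, and to show that the total probability is $n^{-1/3 + o(1)}$. The key point is that a subgraph with $|E| = k$ edges on at most $(d-1)k - 1$ vertices is quite constrained: since each edge spans $d$ vertices, but we have one vertex "fewer" than the threshold $(d-1)k$, the edges must overlap substantially, and in particular the corresponding keys $x_1, \dots, x_k \in X$ are \emph{dependent} in the sense of Lemma~\ref{depchar}. Indeed, the $k$ hash edges live in a vertex set of size $\le (d-1)k - 1 < (d-1)k \le $ (roughly) the "independent" dimension count, so there is a linear dependence among the indicator vectors, which by Lemma~\ref{depchar} means some subset $I$ of the keys satisfies $\bigoplus_{i \in I} x_i = \emptyset$. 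This is exactly the regime where Lemma~\ref{deplemma} gives us a strong counting bound.

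Concretely, first I would reduce to counting: fix $k$, and I want to bound the expected number of ordered $k$-tuples of distinct keys $(x_1, \dots, x_k)$ whose hash edges form a connected (or at least edge-minimal) subgraph with $\le (d-1)k - 1$ vertices. I would first argue we may assume the subgraph is connected — otherwise take a connected component that is itself "dense" in the required sense, at the cost of a smaller $k$, which only helps. A connected subgraph with $k$ edges on $v$ vertices with $v \le (d-1)k - 1$ has "excess" at least $2$ in an appropriate sense (number of independent cycles $\ge k - v/(d-1)$-ish); the precise bookkeeping is that we can extract an even subset $I$, $|I| = 2t$ with $2t \le k$, with $\bigoplus_{i \in I} x_i = \emptyset$, together with the remaining structure being a forest-like attachment. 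Then: (a) choose which positions of the tuple belong to $I$ ($\binom{k}{2t}$ ways, absorbed into constants since $k = O(\log\log n)$); (b) bound the number of such dependent sub-tuples by Lemma~\ref{deplemma}, getting $((2t-1)!!)^c \prod \sqrt{|A_i|} \le ((2t-1)!!)^c m^{t}$; (c) for the remaining $k - 2t$ keys forming a tree-like structure hanging off the dependent core, each contributes a factor $m \cdot (n/d)^{-(d-1)} = O(n^{1-(d-1)}) = O(n^{-(d-2)})$ to the probability (it must hash so that its $d$ endpoints land on an already-determined small vertex set, up to one free vertex). Multiplying the count of configurations by the probability that all $k$ keys hash consistently onto the $\le (d-1)k-1$ vertices — which is at most $(n/d)^{-(d-1)k + |V|} \le (n/d)^{-((d-1)k) + (d-1)k - 1}$ type savings — should yield, after summing over $t$ and $k$, a bound of the form $n^{-1/3+o(1)}$, where the $n^{-1/3}$ comes precisely from the "$-1$" in $(d-1)k - 1$ combined with the square-root loss $\sqrt{|A_i|}$ in Lemma~\ref{deplemma} (each dependent key is only "worth" $\sqrt m$ rather than $m$, and one needs to track that the net exponent of $n$ is at most $-1/3$), and the $o(1)$ absorbs the $((2t-1)!!)^c$ and binomial factors since $t, k = O(\log\log n)$ so these are $n^{o(1)}$.

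The main obstacle I anticipate is the combinatorial bookkeeping in step (b)–(c): cleanly decomposing an arbitrary connected subgraph with $|V| \le (d-1)|E| - 1$ into a "dependent even core" (to which Lemma~\ref{deplemma} applies with the $\sqrt{|A_i|}$ savings) plus a "tree part" (whose keys are essentially free to choose but highly constrained in where they hash), while making sure the vertex/edge counting inequalities line up so that the final exponent of $n$ is $\le -1/3$. One has to be careful that the square-root loss from Lemma~\ref{deplemma} is not too severe: a dependent core with $2t$ edges spans at most $(d-1) \cdot 2t$ vertices but could span far fewer, and we only gain when it is genuinely dense. The cleanest route is probably to induct on $k$ or to peel off degree-one-ish keys one at a time (each peeling costs a factor $O(n^{-(d-2)-1+1}) = O(n^{2-d})$ or so in probability and $O(m)$ in count, i.e.\ net $O(1)$ for $d=2$ and net saving for $d > 2$) until one is left with a $2$-core, then apply Lemma~\ref{deplemma} to the $2$-core; verifying that the $2$-core still inherits $|V| \le (d-1)|E| - 1$ and hence forces the dependence is the crux. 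I would also need the observation that $k = O(\log\log n)$ makes all the "choose which edges/vertices" factors and double-factorial factors $2^{O(k \log k)} = n^{o(1)}$, which is where the $o(1)$ in the exponent comes from and why the theorem only asserts $n^{-1/3+o(1)}$ rather than a clean $O(n^{-1/3})$.
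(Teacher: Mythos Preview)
Your proposal contains a fundamental confusion that would make the argument fail at the outset. You assert that because the edges live on at most $(d-1)k-1$ vertices, ``there is a linear dependence among the indicator vectors, which by Lemma~\ref{depchar} means some subset $I$ of the keys satisfies $\bigoplus_{i\in I}x_i=\emptyset$.'' This conflates two unrelated notions of dependence. The vertex constraint $|V|\le (d-1)|E|-1$ is a statement about incidences in the \emph{hash graph} (i.e.\ about the \emph{values} $h_1(x),\dots,h_d(x)$); Lemma~\ref{depchar} concerns the position-character vectors of the \emph{keys themselves}, which live in $\Z_2^{[c]\times\Sigma}$ and have nothing a priori to do with where the keys hash. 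A set of keys can be perfectly independent in the sense of Lemma~\ref{depchar} and still form a tight subgraph in the hash graph; your proposed ``2-core forces dependence'' step is simply false. Consequently the plan to apply Lemma~\ref{deplemma} directly to the core cannot get off the ground.

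The paper's proof reflects exactly this distinction. It first disposes of the case of \emph{independent} keys in a tight subgraph by a direct union bound (Lemma~\ref{lemser1}, giving $n^{-1+o(1)}$), and similarly of the case where one extra key is dependent on an otherwise independent set (Lemma~\ref{lemser2}). The genuinely hard case is when every cycle in the subgraph contains dependent keys, and here the argument is structural rather than a raw union bound: one extracts a double cycle, then a \emph{trident} of independent keys with three dependent keys $x,y,z$ attached at its tips, and through a sequence of reductions forces $x=y=z=x_1\oplus y_1\oplus z_1$. The exponent $-1/3$ does not come from the ``$-1$'' in $(d-1)k-1$ combined with the $\sqrt{m}$ of Lemma~\ref{deplemma} as you suggest; it arises from a separate counting of \emph{equivalence classes} of pairs $(a,b)$ under $a\oplus b=c\oplus d$, where one shows there are at most $O(m^{2/3})$ ``large'' classes and bounds the probability that any large class has $\tilde h_i(C)=0$ by $O(m^{2/3}/n)=O(n^{-1/3})$. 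Your peeling-plus-Lemma~\ref{deplemma} scheme has no analogue of this step and would not produce the correct exponent even if the dependence claim were repaired.
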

Before giving the full proof however we provide the new proof of Theorem~\ref{cuckoothm} which is more readable and illustrates nearly all the main ideas. 
\begin{figure}
  \centering
    \includegraphics[width=0.6\textwidth]{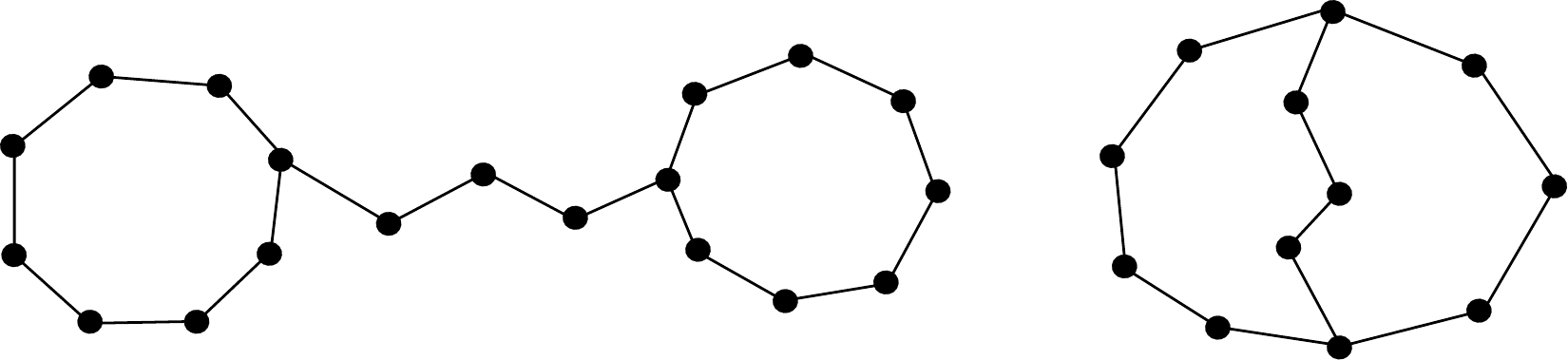}
      \caption{Double cycles - the minimal obstructions for cuckoo hashing.}\label{simpledoublecycles}
      \label{doublecycles}
\end{figure}
\begin{proof}[Proof of Theorem~\ref{cuckoothm}]
It is well known that cuckoo hashing is possible if and only if the hash graph contains no subgraph with more edges than nodes. A minimal such graph is called a \textbf{double cycle} and consists of two cycles connected by a path or two vertices connected by three disjoint paths (see Figure~\ref{doublecycles}). Hence, it suffices to bound the probability that the hash graph contains a double cycle by $O(n^{-1/3})$.

We denote by $g$ the number of bins in each of the two groups. Thus in this setting $g=n/2\geq(1+\varepsilon)m$. First of all, we argue that we may assume that the hash graph contains no trail of length at least $\ell=\frac{4}{3}\frac{\log n}{\log(1+\varepsilon)}$ consisting of \emph{independent}. Indeed, the keys of a such can be chosen in at most $m^\ell$ ways and since we require $\ell-1$ equations of the form $h_i(x)=h_i(y)$, $i\in \{1,2\}$ to be satisfied and since these events are independent the probability that the hash graph contains such a trail is by a union bound at most
\begin{align*}
\frac{2m^{\ell}}{g^{\ell-1}}\leq \frac{n}{(1+\varepsilon)^\ell}=n^{-1/3}.
\end{align*}

Now we return to the double cycles. Let $A_{\ell}$ denote the event that the hash graph contains a double cycle of size $\ell$ consisting of \emph{independent} keys. The graph structure of a such can be chosen in $O(\ell^2)$ ways and the keys (including their positions) in at most $m^\ell$ ways. Since there are $\ell+1$ equations of the form $h_i(x)=h_i(y)$, $i\in \{1,2\}$ to be satisfied the probability that the hash graph contains a double cycle consisting of independent keys is at most
\begin{align*}
\sum_{\ell=3}^m \P(A_\ell) =O\left(\sum_{\ell=3}^m\ell^2 \frac{m^{\ell}}{g^{\ell+1}}\right) =  O\left(\frac{1}{n}\sum_{\ell=3}^m\frac{2\ell^2}{\left(1+\varepsilon \right)^{\ell}}\right) =O( n^{-1}).
\end{align*}
The argument above is the same as in the fully random setting. We now turn to the issue of dependencies in the double cycle starting with the following definition.
\begin{definition}
We say that a graph is a \textbf{trident} if it consists  of three paths $P_1,P_2,P_3$ of non-zero lengths meeting at a single vertex $v$. (see the non-black part of Figure~\ref{tridents}).

We say that a graph is a \textbf{lasso} if it consists of a path that has one end attached to a cycle  (see the non-black part of Figure~\ref{tridents}).
\end{definition}
We claim that in any double cycle $D$ consisting of \emph{dependent} keys we can find one of the following structures (see Figure~\ref{tridents}):
\begin{itemize}
\item \textbf{S1:} A lasso $L$ consisting of independent keys together with a key $x$ not on $L$ and incident to the degree 1 vertex of $L$ such that $x$ is dependent on the keys of $L$.
\item \textbf{S2:} A trident $T$ consisting of independent keys together with $3$ (not necessarily distinct) keys $x,y,z$ not on $T$ but each dependent on the keys of $T$ and incident to the $3$ vertices of degree $1$ on $T$
\end{itemize}
\begin{figure}
\begin{center}
\def\svgwidth{0.8\linewidth}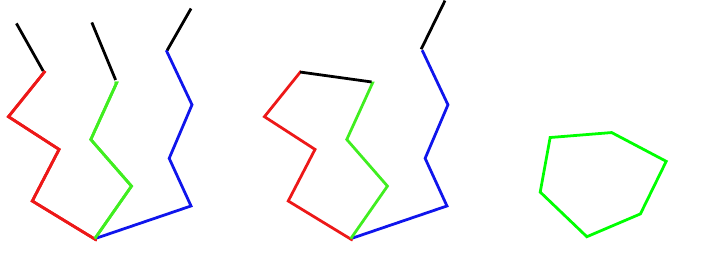
\caption{Non-black edges: Two tridents and a lasso. Black edges: Keys that are each dependent on the set of coloured keys.}
\label{tridents}
\end{center}
\end{figure}

To see this suppose first that one of the cycles $C$ of $D$ consists of independent keys. In this case any maximal lasso of independent keys in $D$ containing the edges of $C$ is an $S_1$.

On the other hand if all cycles contained in $D$ consist of dependent keys we pick a vertex of $D$ of degree at least $3$ and $3$ incident edges. These 3 edges form an independent trident (simple tabulation is 3-independent) and any maximal independent trident contained in $D$ and containing these edges forms an $S_2$. 

Our final step is thus to show that the probability that these structures appear in the hash graph is $O(n^{-1/3})$
\paragraph*{The lasso ($S_1$):} 
Since the edges of the lasso form an independent trail it by the initial observation suffices to bound the probability that the hash graph contains an $S_1$ of size $\ell$ for any $\ell=O(\log n)$.

Fix the size $\ell$ of the lasso. The number of ways to choose the structure of the lasso is $\ell-2<\ell$. Denote the set of independent keys of the lasso by $S=\{x_1,\dots,x_\ell\}$ and let $x$ be the dependent key in $S_1$. By Lemma~\ref{depchar} we may write $x=\bigoplus_{i\in I} x_i$ for some $I\subset \{1,\dots,\ell\}$. Fix the size $|I|=t\geq 3$  (which is necessarily odd). By Lemma~\ref{deplemma} the number of ways to choose the keys of $(x_i)_{i\in I}$ (including their order) is at most $(t!!)^cm^{(t+1)/2}$ and the number of ways to choose their positions in the lasso is $\binom{\ell}{t}$. The number of ways to choose the remaining keys of $S$ is trivially bounded by $m^{\ell-t}$ and the probability that the choice of independent keys hash to the correct positions in the lasso is at most $2/g^\ell$. By a union bound the probability that the hash graph contains an $S_1$ for fixed values of $\ell$ and $t$ is at most
\begin{align*}
\ell (t!!)^c m^{(t+1)/2}m^{\ell-t} \binom{\ell}{t}\frac{2}{g^{\ell}}.
\end{align*}
This is maximised for $t=3$. In fact, when $\ell \leq m^{1/(c+2)}$ and $t\leq \ell-2$ we have that
\begin{align*}
\frac{ ((t+2)!!)^c m^{(t+3)/2}m^{\ell-t-2}\binom{\ell}{t+2}}{ (t!!)^c m^{(t+1)/2}m^{\ell-t} \binom{\ell}{t}} =\frac{(t+2)^c}{m}\frac{\binom{\ell-t}{2}}{\binom{t+2}{2}}\leq \frac{\ell^{c+2}}{m}\leq 1.
\end{align*}
Thus the probability that the hash graph contains an $S_1$ of size $O(\log n)$ is at most 
\begin{align*}
\sum_{\ell=3}^{O(\log n)}\sum_{t=3}^\ell \ell 3^c \binom{\ell}{3} \frac{2m^{\ell-1}}{g^\ell}=O\left( \sum_{\ell=3}^{O(\log n)} \frac{\ell^5}{n(1+\varepsilon)^{\ell-1}}\right)=O( n^{-1}).
\end{align*}
\paragraph*{The trident ($S_2$):}
Fix the size $\ell$ of the trident. The number of ways to choose the structure of the trident is bounded by $\ell^2$ (once we choose the lengths of two of the paths the length of the third becomes fixed). Let $P_1=(x_1,\dots,x_{t_1})$, $P_2=(y_1,\dots,y_{t_2})$ and $P_3=(z_1,\dots,z_{t_3})$ be the three paths of the trident meeting in $x_{t_1}\cap y_{t_2}\cap z_{t_3}$. As before we may assume that each has length $O(\log n)$. Let $S$ denote the keys of the trident and enumerate $S=\{w_1,\dots,w_\ell\}$ in some order. Write $x=\bigoplus_{i\in I} w_{i}$, $y=\bigoplus_{j\in J} w_{j}$ and $z=\bigoplus_{k\in K} w_{j}$ for some $I,J,K\subset \{1,\dots,\ell\}$. By a proof almost identical to that given for the lasso we may assume that $|I|=|J|=|K|=3$. Indeed, if for example $|I|\geq 5$ we by Lemma~\ref{deplemma} save a factor of nearly $m^2$ when choosing the key of $S$ and this makes up for the fact that the trident contains no cycles and hence that the probability of a fixed set of independent keys hashing to it is a factor of $g$ larger.

The next observation is that we may assume that  $|I\cap J|,|J\cap K|,|K\cap I|\geq 2$. Again the argument is of the same flavour as the one given above. If for example $|I \cap J|=1$ we by  an application of Lemma~\ref{deplemma} obtain that the number of ways to choose the keys of $(w_i)_{i\in I}$ is $O(m^2)$. Conditioned on this, the number of ways to choose the keys $(w_j)_{j\in J}$ is $O(m^{3/2})$ by another application of Lemma~\ref{deplemma} with one of the $A_i$'s a singleton. Thus we save a factor of $m^{3/2}$ when choosing the keys of $S$ which will again suffice. The bound gets even better when $|I\cap J|=0$ where we save a factor of $m^2$.

Suppose now that $x_1$ is not a summand of $\bigoplus_{i\in I} w_i$. Write $x=w_a\oplus w_b\oplus w_c$ and let $A$ be the event that the independent keys of $S$ hash to the trident (with the equation involving $x_1$ and $x_2$ being $h_2(x_1)=h_2(x_2)$ without loss of generality). Then $\P(A)=\frac{1}{g^{\ell-1}}$. We observe that 
\begin{align*}
\P(h_1(x)=h_1(x_1)\, | \, A)=\P(h_1(x_1)=h_1(w_a)\oplus h_1(w_b) \oplus h_1(w_c)\, | \, A)=g^{-1}
\end{align*}
since $A$ is a conjunction of events of the form $\{h_i(w)=h_i(w')\}$ none of them involving $h_1(x_1)$\footnote{If $x_1=w_a$, say, we don't necessarily get the probability $g^{-1}$. In this case the probability is $\P (h_1(w_b)=h_1(w_c) \, | \, A)$ and the event $\{h(w_b)=h(w_c) \}$ might actually be included in $A$ in which case the probability is $1$. This can of course only happen if the keys $w_b$ and $w_c$ are \emph{adjacent} in the trident so we could impose even further restrictions on the dependencies in $S_2$. }.  A  union bound then gives that the probability that this can happen is at most
\begin{align*}
 \sum_{\ell=3}^{O(\log n)}\ell^2 \binom{\ell}{3}(3!!)^cm^2 m^{\ell-3}\left(\frac{1}{g}\right)^\ell=O\left( \frac{1}{n}\sum_{\ell=3}^\infty \frac{\ell^5}{(1+\varepsilon)^{\ell-1}}\right)=O(n^{-1}).
\end{align*}
Thus we may assume that $x_1$ is a summand of $\bigoplus_{i\in I}w_i$ and by similar arguments that $y_1$ is a summand of $\bigoplus_{j\in J} w_{j}$ and that $z_1$ is a summand of $\bigoplus_{k\in K} w_{k}$.

To complete the proof we need one final observation. We can define an equivalence relation on $X\times X$ by $(a,b)\sim (c,d)$ if $a\oplus b=c\oplus d$. Denote by $\mathcal{C}=\{C_1,\dots,C_r\}$ the set of equivalence classes. One of them, say $C_1$, consists of the elements $(x,x)_{x\in X}$. We will say that the equivalence class $C_i$ is \textbf{large} if $|C_i|\geq m^{2/3}$ and \textbf{small} otherwise. Note that 
\begin{align*}
\sum_{i=1}^r |C_i|^2=|\{(a,b,c,d)\in X^4:a\oplus b\oplus c \oplus d=\emptyset\}|\leq 3^cm^2
\end{align*}
by Lemma~\ref{deplemma}. In particular  the number of large equivalence classes is at most $3^cm^{2/3}$. 

If $h$ is a simple tabulation hash function we can well-define a map $\tilde h:\mathcal {C} \to R$ by $\tilde h(a,b)=h(a)\oplus h(b)$. Since the number of large equivalence classes is $O(m^{2/3})$ the probability that $\tilde h_i(C)=0$ for some large $C\in \mathcal{C}\backslash \{C_1\}$ and some $i\in \{1,2\}$ is  $O(m^{2/3}/n)=O(n^{-1/3})$ and we may thus assume this does not happen. 

In particular, we may assume that $(x,x_1)$, $(y,y_1)$ and $(z,z_1)$ each represent small equivalence classes as they are adjacent in the hash graph. Now suppose that $y_1$ is not a summand in $x=\bigoplus_{i\in I} w_{i}$. The number of ways to pick $(x_i)_{i\in I}$ is at most $3^cm^2$ by Lemma~\ref{deplemma}. By doing so we fix the equivalence class of $(y,y_1)$ but not $y_1$ so conditioned on this the number of ways to pick $(y_j)_{j\in J}$ is at most $m^{2/3}$. The number of ways to choose the remaining keys is bounded by $m^{\ell-4}$ and a union bound gives that the probability of having such a trident is at most
\begin{align*}
 \sum_{\ell=3}^{O(\log n)} \ell^23\binom{\ell}{2}3^cm^2m^{2/3}m^{\ell-4}\left(\frac{1}{g}\right)^{\ell-1} =O\left( n^{-1/3} \sum_{\ell=3}^\infty \frac{\ell^4}{(1+\varepsilon)^{\ell-4/3}}\right)=O( n^{-1/3}),
\end{align*}
which suffices. 

We may thus assume that $y_1$ is a summand in $\bigoplus_{i\in I} w_{i}$ and by an identical argument that $z_1$ is a summand in $\bigoplus_{i\in I} w_{i}$ and hence $x=x_1\oplus y_1 \oplus z_1$. But the same arguments apply to $y$ and $z$ reducing to the case when $x=y=z=x_1\oplus y_1 \oplus z_1$ which is clearly impossible.
\end{proof}

\subsection{Proving Theorem~\ref{tightthm}}
Now we will explain how to prove Theorem~\ref{tightthm} proceeding much like we did for Theorem~\ref{cuckoothm}. Let us say that a $d$-uniform hypergraph $G=(V,E)$ is \textbf{tight} if $|V|\leq (d-1)|E|-1$. With this terminology Theorem~\ref{tightthm}  states that the probability that the hash graph contains a tight subgraph of size $O(\log \log n)$ is at most $n^{-1/3+o(1)}$. It clearly suffices to bound the probability of the existence of a \emph{connected} tight subgraph of size $O(\log \log n)$.

We start with the following two lemmas. The counterparts in the proof of Theorem~\ref{cuckoothm} are the bounds on the probability of respectively an independent double cycle and an independent lasso with a dependent key attached.
\begin{lemma}\label{lemser1}
Let $A_1$ denote the event that the hash graph contains a tight subgraph $G=(V,E)$ of size $O(\log \log n)$ consisting of independent keys. Then $\P(A_1)\leq n^{-1+o(1)}$.
\end{lemma}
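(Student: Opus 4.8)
The plan is to run a union bound over all possible connected tight subgraphs, organised first by the number of edges $e$ and then by a combinatorial ``attachment pattern'' recording how the edges overlap. Since the lemma only concerns subgraphs of size $e=O(\log\log n)$, no preliminary reduction to bounded-length trails (as in the proof of Theorem~\ref{cuckoothm}) is needed; the smallness of $e$ is exactly what will make the number of shapes subpolynomial in $n$.

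The first step is a purely structural observation. Build a connected $d$-uniform hypergraph $G=(V,E)$ with $|E|=e$ one edge at a time, keeping the partial graph connected. The first edge contributes $d$ vertices, and for $j\ge 2$ the $j$-th edge contributes $d-k_j$ new vertices, where $k_j\ge 1$ is the number of its vertices already present. Thus $|V|=de-\sum_{j=2}^{e}k_j$, and the tightness hypothesis $|V|\le(d-1)e-1$ forces $\sum_{j=2}^{e}k_j\ge e+1$. So a connected tight hypergraph on $e$ edges carries at least $e+1$ coincidences beyond what mere connectivity requires.

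Next I would pin down the probability for a fixed ordered tuple of independent keys $x_1,\dots,x_e$. Applying Lemma~\ref{depchar} to each $h_i$ separately (independence of keys is a property of the position characters, hence holds simultaneously for $h_1,\dots,h_d$) and using that $h_1,\dots,h_d$ are independent, all $de$ values $\{h_i(x_j)\}_{i\in[d],\,j\in[e]}$ are mutually independent and uniform on $[n/d]$. Revealing the edges in order, when the $d$ coordinates of $x_j$ are revealed they are uniform and independent of everything revealed so far, so the probability that a prescribed set of $k_j$ of them coincides with prescribed earlier vertices is exactly $(d/n)^{k_j}$; multiplying over $j$, the probability that $x_1,\dots,x_e$ realise a fixed attachment pattern with parameters $(k_2,\dots,k_e)$ is $(d/n)^{\sum_{j\ge2}k_j}$. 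Now assemble the union bound: there are at most $m^e$ ordered $e$-tuples of keys, at most $e^{de}$ attachment patterns (for each $j\ge2$ and each of the $d$ coordinates of $x_j$, decide whether the vertex is new or inherited from one of $\le e$ earlier edges), and every connected tight subgraph on $e$ edges is captured by some pattern with $\sum_{j\ge2}k_j\ge e+1$ by taking a connectivity order. Hence
\[
\P(A_1)\;\le\;\sum_{e=1}^{O(\log\log n)} m^e\,e^{de}\,\Bigl(\frac{d}{n}\Bigr)^{e+1}\;\le\;\frac{d}{n}\sum_{e=1}^{O(\log\log n)} e^{de}\Bigl(\frac{dm}{n}\Bigr)^{e}.
\]
Since $m=O(n)$ the summand is $e^{de}\cdot O(1)^e=e^{O(e)}$, which for $e=O(\log\log n)$ equals $2^{O(\log\log n\cdot\log\log\log n)}=n^{o(1)}$, and summing $O(\log\log n)$ such terms stays at $n^{o(1)}$, giving $\P(A_1)\le n^{-1+o(1)}$.

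The main obstacle is the bookkeeping rather than any single hard estimate: one must check carefully that the event ``some connected tight subgraph of size $O(\log\log n)$ appears'' is genuinely subsumed by the union over attachment patterns of excess $\ge e+1$ (this is the only place tightness enters), and one must justify the clean factor $(d/n)^{k_j}$ per edge, which relies on the reveal-one-edge-at-a-time argument together with the observation that independence of keys transfers to all of $h_1,\dots,h_d$ at once. Everything after that is a convergent geometric-type sum, the sole quantitative subtlety being that the shape count $e^{de}$ is $n^{o(1)}$ precisely because $e=O(\log\log n)$.
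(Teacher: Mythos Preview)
Your argument is correct. The independence observation (that key-independence is a property of the position characters and therefore transfers simultaneously to all $d$ hash functions, making the $de$ coordinates jointly uniform) is exactly right, the coincidence count $\sum_{j\ge 2}k_j\ge e+1$ follows from tightness alone, and the shape count $e^{de}$ is $n^{o(1)}$ for $e=O(\log\log n)$, so the union bound closes. One cosmetic remark: you phrase the argument for \emph{connected} tight subgraphs, whereas the lemma as stated does not assume connectedness. This is harmless --- either observe that a tight hypergraph always has a tight connected component (if every component had $|V_i|\ge(d-1)|E_i|$ the sum would violate tightness), or simply drop the ``keeping the partial graph connected'' clause, since your key inequality $\sum k_j\ge e+1$ does not use $k_j\ge1$ anywhere.

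The paper's proof is organised differently and is a bit shorter: rather than enumerating attachment patterns, it first chooses the vertex set $V$ of size $(d-1)\ell-1$ (in $\binom{n}{(d-1)\ell-1}$ ways) and then bounds, via AM--GM, the probability that each of the $\ell$ independent keys lands entirely inside $V$ by $(d\ell/n)^d$, giving
\[
\P(A_1)\le\sum_{\ell}m^{\ell}\binom{n}{(d-1)\ell-1}\Bigl(\frac{d\ell}{n}\Bigr)^{d\ell}\le\sum_{\ell}\frac{1}{n}\Bigl(\frac{m}{n}\Bigr)^{\ell}(d\ell)^{d\ell}=n^{-1+o(1)}.
\]
Your edge-by-edge accounting makes the ``excess $e+1$ coincidences'' explicit, which is conceptually closer to the $d=2$ cuckoo argument; the paper's vertex-set trick hides this bookkeeping at the cost of the AM--GM step. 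Both routes yield the same bound with the same $(d\ell)^{O(d\ell)}=n^{o(1)}$ combinatorial slack.
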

\begin{proof}
Let $\ell=|E|$ be fixed. The number of ways to choose the keys of $E$ is trivially bounded by $m^\ell$ and the number of ways to choose the set of nodes $V$ in the hash graph is $\binom{n}{(d-1)\ell-1}$. For such a choice of nodes let $a_i$ denote the number of nodes of $V$ in the $i$'th group. The probability that one of the keys hash to $V$ is then
\begin{align*}
\prod_{i=1}^d \frac{da_i}{n} \leq\left(\frac{a_1+\dots+a_d}{n} \right)^d\leq\left( \frac{d\ell}{n}\right)^d.
\end{align*}
By the independence of the keys and a union bound we thus have that 
\begin{align*}
\P(A_1)\leq \sum_{\ell=2}^{O(\log \log n)}m^\ell \binom{n}{(d-1)\ell-1}\left( \frac{d\ell}{n}\right)^{d\ell}\leq \sum_{\ell=2}^{O(\log \log n)}\frac{1}{n} \left(\frac{m}{n}\right)^{\ell}(d\ell)^{d\ell}=n^{-1+o(1)},
\end{align*}
as desired.
\end{proof}
\begin{lemma}\label{lemser2}
Let $A_2$ be the event that the hash graph contains a subgraph $G=(V,E)$ with $|V|\leq (d-1)|E|$ and $|E|=O(\log \log n)$ such that the keys of $E$ are independent but such that there exists a key $y\notin E$ dependent on the keys of $E$. Then $\P(A_2)\leq n^{-1+o(1)}$.
\end{lemma}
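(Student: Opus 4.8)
The plan is to imitate the proof of Lemma~\ref{lemser1} almost verbatim; the only new ingredient is that the dependent key $y$ must be exploited to save a factor of roughly $1/n$ in the union bound, and this is exactly what has to compensate for the fact that we now allow $|V|\le(d-1)|E|$ instead of $|V|\le(d-1)|E|-1$. As there, I use throughout that $\ell:=|E|=O(\log\log n)$, so that any factor of the form $\ell^{O(\ell)}$, or $C^{\ell}$ with $C=O(1)$, is $n^{o(1)}$; in particular $(dm/n)^{\ell}=n^{o(1)}$ since $d=O(1)$ and $m=O(n)$. We need not assume $G$ connected. Since the keys of $E$ are independent they are distinct, and $|E|\ge 3$ because any dependence involves at least three keys.

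First I would set up a union bound over $\ell$, over the \emph{abstract isomorphism type} of $G$ (the data recording, for each edge and each of its $d$ coordinates, whether that vertex is new or equals an earlier one, and if so which), and over the index set $I\subseteq E$ with $y=\bigoplus_{i\in I}x_i$, furnished by Lemma~\ref{depchar}. The number of abstract types is at most $\ell^{O(d\ell)}=n^{o(1)}$ and the number of choices of $I$ is at most $2^{\ell}=n^{o(1)}$. For a fixed abstract type with $v:=|V|$ vertices and \emph{any} choice of independent keys $x_1,\dots,x_\ell$, the probability that these keys realise the type is at most $(d/n)^{d\ell-v}$: in each of the $d$ groups the prescribed coincidences among the (iid uniform) values $h_i(x_e)$ amount to $\ell-v_i$ independent equations, and $\sum_{i=1}^{d}(\ell-v_i)=d\ell-v\ge d\ell-(d-1)\ell=\ell$, so this probability is at most $(d/n)^{\ell}$.

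Next I would count the admissible key assignments. A parity argument shows $|I|$ is \emph{odd}: for each coordinate $p$, the number of surviving position characters $(p,\cdot)$ in $\bigoplus_{i\in I}x_i$ is congruent to $|I|$ modulo $2$, and it equals $1$ since $y$ is a key. (In particular $|I|\ne 2$, which is anyway clear as $x_i\oplus x_j$ is never a key.) Fixing $t=|I|\in\{3,5,\dots,\ell\}$, the $t+1$ keys $(x_i)_{i\in I}$ and $y$ satisfy $\bigl(\bigoplus_{i\in I}x_i\bigr)\oplus y=\emptyset$, so by Lemma~\ref{deplemma} they can be chosen in at most $(t!!)^{c}m^{(t+1)/2}$ ways, while the remaining $\ell-t$ keys of $E$ are chosen in at most $m^{\ell-t}$ ways; the keys thus contribute a factor at most $(t!!)^{c}m^{\ell-(t-1)/2}$.

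Multiplying the three contributions and summing yields
\[
\P(A_2)\le n^{o(1)}\sum_{\ell\ge 3}\ \sum_{\substack{t\ \mathrm{odd}\\ 3\le t\le\ell}}(t!!)^{c}\,m^{\ell-(t-1)/2}\Bigl(\tfrac{d}{n}\Bigr)^{\ell}=n^{o(1)}\sum_{\ell\ge 3}\Bigl(\tfrac{dm}{n}\Bigr)^{\ell}\sum_{t\ge 3}(t!!)^{c}m^{-(t-1)/2}.
\]
Since $t\le\ell=O(\log\log n)\ll m^{1/c}$, consecutive terms of the inner sum decrease by a factor $t^{c}/m\le 1$, so it is $O(m^{-1})$ with the $t=3$ term dominating; and since $dm/n=O(1)$, the outer sum over $3\le\ell=O(\log\log n)$ is at most $n^{o(1)}(dm/n)^{3}=n^{o(1)}O(m^{3}/n^{3})$. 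Hence $\P(A_2)\le n^{o(1)}\,O(m^{2}/n^{3})=n^{-1+o(1)}$, using $m=O(n)$. The step I expect to be most delicate is precisely this final bookkeeping: one must check that the gain $m^{-(t-1)/2}\le m^{-1}$ extracted from Lemma~\ref{deplemma} exactly cancels the extra factor of $n$ lost by relaxing $|V|\le(d-1)|E|-1$ to $|V|\le(d-1)|E|$, and that $t=3$ is the worst case. The parity observation forcing $|I|$ odd, though small, is essential: it lets us apply Lemma~\ref{deplemma} to the $t+1$ keys with no loss and rules out $|I|=2$, which would otherwise leave a factor $m^{\ell}$ and ruin the bound.
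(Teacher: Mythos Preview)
Your proof is correct and follows essentially the same approach as the paper's. The only organisational difference is in how you bound the probability that the chosen keys form the required subgraph: the paper, as in Lemma~\ref{lemser1}, first chooses the vertex set $V$ in $\binom{n}{(d-1)\ell}$ ways and then bounds the probability that all $\ell$ keys hash into $V$ by $(d\ell/n)^{d\ell}$, whereas you enumerate abstract isomorphism types and use that realising a fixed type costs $(d/n)^{d\ell-v}\le (d/n)^{\ell}$; both routes give the same $n^{-1+o(1)}$ after applying Lemma~\ref{deplemma} to save the factor $m^{-(t-1)/2}\le m^{-1}$.
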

\begin{proof}
Let $|E|=\ell$ be fixed and write $E=\{x_1,\dots,x_\ell\}$ . We want to bound the number of ways to choose the keys of $E$.
By Lemma~\ref{depchar}, $y=\bigoplus_{i \in I} x_i$ for some $I\subset \{1,\dots,\ell\}$ with $|I|=r$  for some odd $r\geq 3$. Let $r$ be fixed for now. Using Lemma~\ref{deplemma}, we see that the number of ways to choose the keys of $E$ is no more than $(r!!)^cm^{\frac{r+1}{2}}m^{\ell-r}$. For fixed $\ell$ and $r$ the probability is thus bounded by
\begin{align*}
 (r!!)^cm^{\ell-\frac{r-1}{2}}\binom{n}{\ell(d-1)} \left( \frac{d\ell}{n}\right)^{d\ell}= n^{-1+o(1)}
\end{align*}
and a union bound over all $\ell=O(\log \log n)$ and $r\leq \ell$ suffices.
\end{proof}
We now generalise the notion of a double cycle starting with the following definition.
\begin{definition}
Let $G=(V,E)$ be a $d$-uniform hypergraph. We say that a sequence of edges $P=(e_1,\dots,e_t)$ of $G$ is a \textbf{path} if $|e_i\cap e_{i+1}|=1$ for $1\leq i \leq t-1 $ and $e_i\cap e_j=\emptyset$ when $i<j-1$.

We say that $C=(e_1,\dots,e_t)$ is a \textbf{cycle} if $t\geq 3$, $|e_i\cap e_{i+1}|=1$ for all $i \pmod {t}$ and $e_i\cap e_j=\emptyset$ when $i\neq j\pm 1 \pmod{t}$.
\end{definition}
Next comes the natural extension of the definition of double cycles to $d$-uniform hypergraphs.
\begin{definition}\label{doublecycdef}
A $d$-uniform hypergraph $G$ is called a \textbf{double cycle} if it has either of the following forms (see Figure~\ref{doubcyc}).
\begin{itemize}
\item \textbf{D1:} It consists of of two vertex disjoint cycles $C_1$ and $C_2$ connected by a path $P=(x_1,\dots,x_t)$ such that $|x_1\cap V(C_1)|=|x_t\cap V(C_2)|=1$ and $x_{i+1}\cap V(C_1)=x_{i}\cap V(C_2)=\emptyset$ for $1\leq i \leq t-1$ . We also allow $P$ to have zero length and $|V(C_1)\cap V(C_2)|=1$. 
\item \textbf{D2:} It consist of a cycle $C$ and a path $P=(x_1,\dots,x_t)$ of length $t\geq 2$ such that $|x_1\cap V(C)|=|x_t\cap V(C)|=1$ and $x_i\cap V(C)=\emptyset$ for $2\leq i \leq t-1$. We also allow $t=1$ and $|x_1\cap C|=2$. 
\end{itemize}
\end{definition}
\begin{figure}
  \centering
  \def\svgwidth{0.7\linewidth}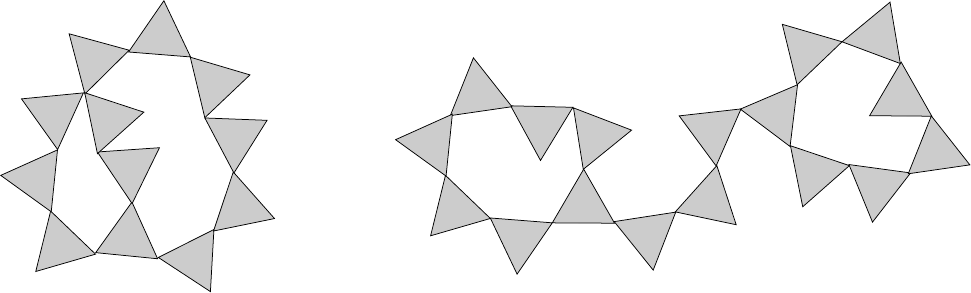
      \caption{Double cycles in the case $d=3$. The triangles represent edges of the graph and the corners represent the vertices.}
        \label{doubcyc}
\end{figure}
Note that a double cycle always has $|V|=(d-1)|E|-1$.

Now assume that the hash graph contains a connected tight subgraph $G=(V,E)$ of size $O(\log \log n)$ but that neither of the events of Lemma~\ref{lemser1} and~\ref{lemser2} has occurred. In particular no two edges $e_1,e_2$ of $G$ has $|e_1\cap e_2|\geq 2$ and no cycle consists of independent keys.

It is easy to check that under this assumption $G$ contains at least two cycles. Now pick a cycle $C_1$ of least possible length. Since simple tabulation is $3$-independent the cycle consists of at least $4$ edges. If there exists an edge $x$ not part of $C_1$ with $|x\cap V(C_1)|=2$ we get a double cycle of type $D_2$. If $|x\cap V(C_1)|\geq 3$ we can use $x$ to obtain a shorter cycle than $C_1$ which is a contradiction\footnote{Here we use that the length of $C_1$ is at least 4. If $C_1$ has length $t$ the fact that $x$ contains three nodes of $C_1$  only guarantees a cycle of length at most $3+\lfloor \frac{t-3}{3} \rfloor$.}. Using this observation we see that if there is a cycle $C_2\neq C_1$ such that $|V(C_1)\cap V(C_2)|\geq 2$ then we can find a  $D_2$ in the hash graph. Thus we may assume that any cycle $C_2\neq C_1$ satisfies $|V(C_2)\cap V(C_1)|\leq 1$.

Now pick a cycle $C_2$ different from $C_1$ of least possible length. As before we may argue that any edge $x$ not part of $C_2$ satisfies that $|x\cap V(C_2)|\leq 1$. Picking a shortest path connecting $C_1$ and $C_2$ (possibly the length is zero) gives a double cycle of type $D_1$.

Next we define tridents (see the non-grey part of Figure~\ref{tridents2}).
\begin{definition}
We call a $d$-uniform hypergraph $T$ a \textbf{trident} if it consists of paths $P_1=(x_1,\dots,x_{t_1})$, $P_2=(y_1,\dots,y_{t_2})$ and $P_3=(z_1,\dots,z_{t_3})$ of non-zero length such that either:\begin{itemize}
\item There is a vertex $v$ such that $x_{t_1}\cap y_{t_2}\cap z_{t_3}=\{v\}$, $v$ is contained in no other edge of $T$ and no vertex different from $v$ is contained in more than one of the three paths.
\item $P_1$, $P_2$ and $P_3\backslash \{z_{t_3}\} =(z_2,\dots,z_{t_3})$ are vertex disjoint and $(x_1,\dots,x_{t_1},z_{t_3},y_{t_2},\dots,y_1)$ is a path.
\end{itemize}
\end{definition}
\begin{figure}
\centering
    \includegraphics[width=0.85\textwidth]{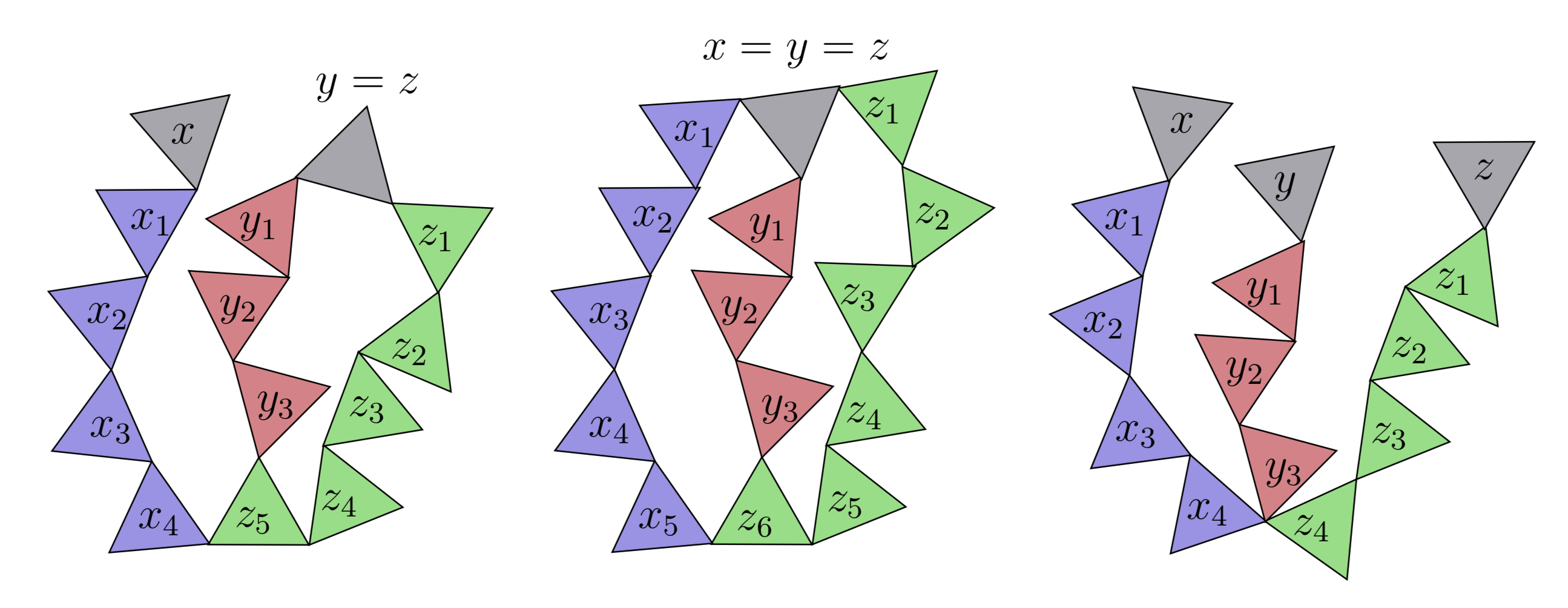}
      \caption{The case $d=3$. Non-grey edges: Tridents. Grey edges: Keys that are each dependent on the set of non-black keys.}
      \label{tridents2}
\end{figure}

Like in the proof of of Theorem~\ref{cuckoothm} the existence of a double cycle not containing a cycle of independent keys implies the existence of the following structure (see Figure~\ref{tridents2}):

\begin{itemize}
\item \textbf{S1}: A trident consisting of three paths $P_1=(x_1,\dots,x_{t_1})$, $P_2=(y_1,\dots,y_{t_2})$ and $P_3=(z_1,\dots,z_{t_3})$ such that the keys of the trident are independent and such that there are, not necessarily distinct, keys $x,y,z$ not in the trident extending the paths $P_1$, $P_2$ and $P_3$ away from their common meeting point such that $x,y$ and $z$ are each dependent on the keys in the trident. 
\end{itemize}

We can bound the probability of this event almost identically to how we proceeded in the proof of Theorem~\ref{cuckoothm}. The only difference is that when making the ultimate reduction to the case where $x=y=z=x_1\oplus y_1\oplus z_1$ this event is in fact possible (see Figure~\ref{tridents2}). In this case however, there are three \emph{different} hash function $h_x,h_y$ and $h_z$ such that $h_x(x_1)=h_x(x)$, $h_y(y_1)=h_y(x)$ and $h_z(z_1)=h_z(x)$. What is the probability that this can happen? The number of ways to choose the keys $(x,x_1,y_1,z_1)$ is at most $3^cm^2$ by Lemma~\ref{deplemma}. The number of ways to choose the hash functions is upper bounded by $d^3$. Since the hash functions $h_1,\dots,h_d$ are independent the probability that this can happen in the hash graph is by a union bound at most
\begin{align*}
d^3 3^cm^2 \left(\frac{d}{n}\right)^3=O( n^{-1})
\end{align*}
which suffices to complete the proof of Theorem~\ref{tightthm}.

\subsection*{Summary}
For now we have spent most of our energy proving Theorem~\ref{tightthm}. At this point it is perhaps not clear to the reader why it is important so let us again highlight the steps to Theorem~\ref{mythm}. First of all let $k=\frac{\log \log n }{\log d}+r$ for $r$ a sufficiently large constant. The steps are:

\begin{enumerate}
\item[(1)] Show that if some bin has load $k$ then either the hash graph contains a tight subgraph of size $O(k)$ or a certain kind of witness tree $T_k$.
\item[(2)] Bound the probability that the hash graph contains a $T_k$ by $O((\log \log n)^{-1})$.
\item[(3)] Bound the probability that the hash graph contains a tight subgraph of size $O(k)$ by $O((\log \log n)^{-1})$.
\end{enumerate}

We can now cross (3) of the list. In fact, we have a much stronger bound. The remaining steps are dealt with in the appendices as described under \emph{Structure of the paper}.

As already mentioned the proofs of all the above steps (except step (3)) are intricate but straightforward generalisations of the methods in~\cite{Dahl1}.

\bibliographystyle{plain}
\bibliography{lipics-v2018-sample-article}

\clearpage
\appendix
\centerline{\huge\bf Appendix}

\section{Implications of having a bin of large load}\label{Impliapp}
Before we start we will introduce some definitions concerning $d$-uniform hypergraphs.  We say that a $d$-uniform hypergraph $G=(V,E)$ is a \textbf{tree} if $G$ is connected and $|V|=(d-1)|E|+1$. We say that $G$ is a \textbf{forest} if the connected component of $G$ are trees.
The following result and its corollary are easily proven.
\begin{lemma}
Let $T=(V,E)$ be a connected $d$-uniform hypergraph. Then $T$ is a tree if and only if $T$ does not contain a cycle  or a pair of distinct edges $e_1,e_2$ with $|e_1\cap e_2|\geq 2$.
\end{lemma}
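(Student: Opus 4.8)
The plan is to translate everything into the ordinary graph theory of the \emph{incidence graph} $B$ of $T$: let $B$ be the bipartite graph with vertex classes $V$ and $E$ and an edge joining $v\in V$ to $e\in E$ whenever $v\in e$. Then $|V(B)|=|V|+|E|$, $|E(B)|=\sum_{e\in E}|e|=d|E|$, and $B$ is connected exactly because $T$ is. The key first observation is that ``$T$ is a tree'' is \emph{literally} the classical statement ``$B$ is a tree'': since $B$ is connected, $|E(B)|\ge |V(B)|-1$, i.e. $d|E|\ge |V|+|E|-1$, i.e. $|V|\le(d-1)|E|+1$, with equality iff $B$ is acyclic. Hence the lemma reduces to showing, for connected $T$, that $B$ is acyclic if and only if $T$ contains neither a cycle nor two distinct edges $e_1,e_2$ with $|e_1\cap e_2|\ge 2$.

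For the easy direction, two edges meeting in two vertices $v_1\ne v_2$ give the $4$-cycle $v_1\,e_1\,v_2\,e_2\,v_1$ in $B$, and a cycle $(e_1,\dots,e_t)$ of $T$ gives a cycle of $B$ of length $2t$ by walking $v_1\,e_1\,v_2\,e_2\cdots v_t\,e_t\,v_1$ with $v_i$ the (distinct) vertex of $e_{i-1}\cap e_i$; either way $B$ has a cycle. For the converse I would take a \emph{shortest} cycle $\gamma$ in $B$; being bipartite it has even length $2t$ with $t\ge 2$, say $\gamma=v_1\,e_1\,v_2\,e_2\cdots v_t\,e_t\,v_1$, where the $v_i$ are distinct vertices of $T$, the $e_i$ distinct edges, and $v_i,v_{i+1}\in e_i$ (indices mod $t$). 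If $t=2$ then $v_1,v_2\in e_1\cap e_2$ and we are done. If $t\ge 3$ then either $|e_i\cap e_{i+1}|\ge 2$ for some $i$ (done again), or every $|e_i\cap e_{i+1}|=1$; in the latter case minimality of $\gamma$ forces $e_i\cap e_j=\varnothing$ for non-consecutive $i,j$, since a common vertex $w$ there would either be one of the $v_a$ (contradicting that a shortest cycle is chordless) or a new vertex, in which case rerouting $\gamma$ through the sub-path $e_i\,w\,e_j$ produces a strictly shorter cycle of $B$. Thus $(e_1,\dots,e_t)$ satisfies the definition of a cycle of $T$, completing the reduction.

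The one point that needs care — essentially the only subtlety — is the distinctness of junction vertices: for $t\ge 4$ the vertices $e_i\cap e_{i+1}$ of a cycle of $T$ are automatically distinct, but for $t=3$ the configuration of three edges with $e_1\cap e_2=e_2\cap e_3=e_3\cap e_1=\{v\}$ literally meets the stated definition of a cycle while having a tree as incidence graph and satisfying $|V|=(d-1)|E|+1$; I would therefore make explicit the (clearly intended) convention that the junction vertices of a cycle are distinct. With that convention the routine bookkeeping above (distinctness of the $v_i$ in the easy direction, chord-freeness of a shortest cycle, the rerouting count) goes through cleanly. As an alternative to the incidence graph one can argue directly: order the edges $e_1,\dots,e_m$ so that each $e_i$ with $i\ge 2$ meets $e_1\cup\cdots\cup e_{i-1}$ (possible since $T$ is connected); then $|e_1\cup\cdots\cup e_i|$ grows by $d-|e_i\cap(e_1\cup\cdots\cup e_{i-1})|\le d-1$ at each step, so $|V|\le(d-1)|E|+1$ with equality iff $|e_i\cap(e_1\cup\cdots\cup e_{i-1})|=1$ for every $i$, and one checks that this holds precisely when $T$ has no cycle and no two edges meeting in $\ge 2$ vertices — the nontrivial implication being the same shortening argument performed online.
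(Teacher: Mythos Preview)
Your proof is correct; the paper itself does not give one, merely calling the lemma ``easily proven.'' The incidence-graph translation you use is the natural approach, and the shortest-cycle argument for the converse direction is clean and standard.

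Your observation about $t=3$ is a genuine catch: three hyperedges sharing a single common vertex and otherwise disjoint satisfy the paper's literal definition of a cycle (for $t=3$ the non-adjacency condition is vacuous) while having $|V|=(d-1)|E|+1$, so the lemma as written is technically false without the clearly intended convention that the junction vertices $e_i\cap e_{i+1}$ be pairwise distinct. One small remark: in the converse direction this issue never arises, since the vertices $v_1,\dots,v_t$ on a shortest cycle of $B$ are automatically distinct and hence the hypergraph cycle you extract always has distinct junctions; the subtlety is confined to the ``easy'' direction, exactly as you locate it. The alternative edge-by-edge counting argument you sketch at the end is also fine and is arguably closer in spirit to how the paper implicitly reasons elsewhere (e.g.\ in the proof of Lemma~\ref{smalltight}).
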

\begin{corollary}
A connected subgraph of a forest is a tree. 
\end{corollary}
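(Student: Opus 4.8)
The corollary is immediate once the lemma is granted, so the plan is: reduce the connected subgraph to a subgraph of a single tree component, use the lemma to transfer the ``no cycle / no heavy pair'' property down to it, and use the lemma a second time to conclude.

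Concretely: let $F$ be a forest and $G=(V',E')$ a connected subgraph. If $E'=\emptyset$, connectedness forces $|V'|\le 1$, and a single vertex is a tree since $1=(d-1)\cdot 0+1$; so assume $E'\neq\emptyset$. Each edge of $G$ is an edge of $F$ and hence lies in exactly one connected component of $F$, and any two edges of $G$ are joined by a path in $G$, hence in $F$; so all edges of $G$, and therefore all vertices $V'=\bigcup_{e\in E'}e$, lie in a single component $T$ of $F$. By definition $T$ is a tree, so by the lemma $T$ contains no cycle and no two distinct edges meeting in $\ge 2$ vertices. Both properties are inherited by the subgraph $G$ (a cycle or a heavy pair in $G$ would be one in $T$), so $G$ is connected, cycle-free, and heavy-pair-free; applying the lemma in the reverse direction yields $|V'|=(d-1)|E'|+1$, i.e.\ $G$ is a tree.

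For completeness I would also prove the lemma, by the standard edge-exposure count. Order the edges $e_1,\dots,e_m$ of $T$ so that each $e_i$ with $i\ge 2$ meets $e_1\cup\cdots\cup e_{i-1}$, and put $c_i=|e_i\cap(e_1\cup\cdots\cup e_{i-1})|\ge 1$; then
\[
(d-1)m+1-|V|=\sum_{i=2}^{m}(c_i-1)\ge 0,
\]
a quantity independent of the chosen ordering, so $T$ is a tree iff $c_i=1$ for all $i$. A heavy pair forces some $c_i\ge 2$ (order one of the two edges before the other), and so does a cycle (expose it first in cyclic order and examine its final edge). Conversely, if $i$ is least with $c_i\ge 2$, then two distinct vertices of $e_i\cap(e_1\cup\cdots\cup e_{i-1})$ are joined by a walk in the connected, heavy-pair-free graph on $e_1,\dots,e_{i-1}$, and a shortest closed walk through $e_i$ is a cycle. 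The step I expect to be delicate is exactly this converse extraction — certifying that the shortest closed walk really satisfies the cycle axioms, with a little extra care for triangles, where the definition imposes no disjointness and one must check that the three contact vertices it produces are distinct (which the no-heavy-pair hypothesis guarantees). The corollary's own proof, by contrast, is entirely routine.
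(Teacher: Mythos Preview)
The paper does not actually prove either the lemma or this corollary; it simply asserts that both ``are easily proven'' and moves on. So there is no approach to compare against.

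Your deduction of the corollary from the lemma is correct and is exactly the natural argument one would give: a connected subgraph with at least one edge lies inside a single tree component of the forest, the properties ``no cycle'' and ``no pair of edges meeting in $\ge 2$ vertices'' are hereditary, and the lemma then converts back to the numerical tree condition. The handling of the $E'=\emptyset$ case is fine.

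Your supplementary proof of the lemma via the edge-exposure identity
\[
(d-1)m+1-|V|=\sum_{i=2}^{m}(c_i-1)
\]
is also the standard route and is sound. Your caution about length-$3$ cycles is well placed: the paper's definition of a cycle, read literally, does not exclude three edges sharing a single common vertex (which would in fact be a tree), so one must use the intended reading in which the contact vertices are distinct. Under that reading your argument goes through; the subtlety is in the paper's definition, not in your proof.
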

We define a \textbf{rooted tree}  $T=(V,E)$ to be a hypertree where we have fixed a root $v\in V$. We can define the \textbf{depth} of a node to be the length of the shortest path from this vertex to the root. Any edge $e$ in a rooted tree $T$ can be written $e=\{v_1,\dots,v_d\}$ such that for some $\ell$ we have that $v_1$ has depth $\ell$ and $v_2,\dots,v_d$ each has depth $\ell+1$. With this notation we will say that $v_2,\dots,v_d$ are \textbf{children} of $v_1$. We will say that a node $v\in V$ is \textbf{internal} if it has at least one child and that $v$ is a \textbf{leaf} if it has no children. Note finally that for each vertex $w\in V$ we have an induced subtree $T_w$ of $T$ rooted at $w$. If $w$ has depth $\ell$ this tree can be described as  the maximal connected subgraph of $T$ containing $w$ in which each node has depth at least $\ell$. If $w'$ is a node of $T_w$ we will say that $w$ is an \textbf{ancestor} of $w'$ or that $w'$ is a \textbf{descendant} of $w$. 

In the next two subsections we introduce the witnessing trees in the settings of Theorem~\ref{mythm} and~\ref{alwaysleftthm} respectively and show that if some bin has load at least $k$ then either the hash graph will contain a tight subgraph of size $O(k)$ or such a witnessing tree. 
\subsection{The $d$-nomial trees}
To define the witness tree we will need the notion of the $k$'th load graph of a vertex $v$ in the hash graph. It is intuitively a subgraph of the hash graph witnessing how the bin corresponding to $v$ obtained its first $k$ balls. \begin{definition}
Suppose $v$ is a vertex of the hash graph corresponding to a bin of load at least $k$. We recursively define  $L_v(k)$ the  \textbf{$k$'th load graph of $v$} to be the following $d$-uniform hypergraph.
\begin{itemize}
\item If $k=0$ we let $L_v(k)=(\{v\},\emptyset)$.
\item If $k>0$ we let $e$ be the edge corresponding to the $k$'th key landing in $v$. Write $e=\{v_1,\dots,v_d\}$. Then $L_v(k)$ is the graph with 
\begin{align*}
E(L_v(k))=\{e\}\cup \bigcup_{i=1}^d E(L_{v_i}(k-1)) \ \text{ and } \ V(L_v(k))=\bigcup_{e\in E(L_v(k))}e.
\end{align*}
\end{itemize}
\end{definition}

As we are distributing the balls according to the $d$-choice paradigm the definition is sensible.

It should be no surprise that if we know that the $k$'th load graph of a vertex is a tree we can actually describe the structure of that tree. We now describe that tree.

\begin{definition}
A \textbf{$d$-nomial tree} $B_{d,k}$ for $k\geq 0$ is the rooted $d$-uniform hypertree defined recursively as follows: 
\begin{itemize}
\item $B_{d,0}$ is a single node
\item  $B_{d,k}$ is rooted at a vertex $v_1$ and consists of an edge $e=\{v_1,\dots,v_d\}$ such that each $v_i$ is itself a root of a $B_{d,k-1}$.  
\end{itemize}
\end{definition}
Since $d$ will be fixed we will often suppress the $d$ and just write $B_k$. 

\begin{lemma}\label{bincon}
Let $v$ be a vertex of the hash graph for which the corresponding bin has load at least $k$. Suppose that the $k$'th load graph of $V$ is a tree. Then the $k$'th load graph is in fact a $B_k$ rooted at $v$.
\end{lemma}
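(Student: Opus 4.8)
The plan is to induct on $k$. The base case $k=0$ is immediate, since $L_v(0)=(\{v\},\emptyset)$, which is a $B_0$ rooted at $v$. For the inductive step, fix $k\ge 1$, let $x$ be the key whose arrival deposits the $k$-th ball in $v$, and let $e=\{v_1,\dots,v_d\}$ be the corresponding edge. Since $x$ is placed in $v$ we have $v\in e$, and we may relabel so that $v=v_1$ (the formula defining $L_v(k)$ is symmetric in the $v_i$). First I would observe that every $v_i$ has load at least $k-1$: this is clear for $v_1=v$, and for $i\ge 2$ it holds because at the moment $x$ was placed $v$ had load exactly $k-1$ and was (tied for) least loaded among the $d$ choices, so each $v_i$ had load $\ge k-1$ then. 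Hence each $L_{v_i}(k-1)$ is defined and, by the recursive definition of the load graph, $L_v(k)$ is the union of the single edge $e$ with the $d$ hypergraphs $L_{v_1}(k-1),\dots,L_{v_d}(k-1)$.

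Next I would note that each $L_{v_i}(k-1)$ is a subgraph of $L_v(k)$ and is connected; the latter is a one-line induction, since the defining edge of any load graph contains the root of each of its $d$ recursive sub-loadgraphs. As $L_v(k)$ is a tree, hence a forest, the corollary above (a connected subgraph of a forest is a tree) shows that each $L_{v_i}(k-1)$ is a tree, and since $v_i$ has load $\ge k-1$ the induction hypothesis applies: $L_{v_i}(k-1)$ is a $B_{k-1}$ rooted at $v_i$. It then remains only to check that glueing these $d$ copies of $B_{k-1}$ along the single edge $e=\{v_1,\dots,v_d\}$ produces exactly a $B_k$ rooted at $v=v_1$, i.e. that the glueing is ``free'' in the sense that (a) $e$ is not an edge of any $L_{v_i}(k-1)$, and (b) $V(L_{v_i}(k-1))\cap V(L_{v_j}(k-1))=\emptyset$ for $i\ne j$. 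Granting (a) and (b), the description of $L_v(k)$ from the previous paragraph becomes verbatim the recursive definition of $B_k$ rooted at $v_1$, completing the induction.

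Part (a) I would settle with a short timing argument: every edge of $L_{v_i}(k-1)$ is the edge of a key placed strictly before $x$. Indeed, a one-line induction shows that every edge of $L_w(j)$ corresponds to a key placed no later than $w$'s $j$-th ball; applying this with $w=v_1=v$ and $j=k-1$, or with $w=v_i$, $i\ge 2$ (using that $v_i$ already had $\ge k-1$ balls strictly before $x$ arrived, as $x$ goes to $v\ne v_i$), all these keys precede $x$, so none of these edges is $e$. Part (b) is the step I expect to be the real obstacle. Assuming $w\in V(L_{v_i}(k-1))\cap V(L_{v_j}(k-1))$ with $i\ne j$, inside the two subtrees one finds a path from $v_i$ to $w$ and a path from $w$ to $v_j$, neither using $e$ by (a); concatenating them with the edge $e$ (which contains both $v_i$ and $v_j$) yields a closed walk in $L_v(k)$ that avoids $e$, whereas the unique $v_i$–$v_j$ path in the tree $L_v(k)$ is precisely the single edge $e$. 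This forces a genuine cycle, or two distinct edges meeting in at least two vertices, contradicting the characterisation of trees (the lemma preceding that corollary). The only care needed is to make this extraction honest: treat the cases $w\in\{v_i,v_j\}$ first (where one sees directly a short cycle, or a thick intersection with $e$), and in general pass to a minimal overlap so that the closed walk produced is properly reduced. That bookkeeping aside, everything is a routine induction together with the two quoted structural facts about $d$-uniform hypertrees.
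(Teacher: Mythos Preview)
Your proposal is correct and follows essentially the same inductive approach as the paper's proof. The paper's proof is very terse, simply asserting that ``the $(k-1)$'st load graphs of the vertices incident to $e$ will by the induction hypothesis be the roots of $d$ disjoint $B_{k-1}$'s''; you have carefully spelled out exactly the steps (connectedness of each $L_{v_i}(k-1)$, the corollary that connected subgraphs of forests are trees, the timing argument for (a), and the cycle-extraction for (b)) that the paper suppresses.
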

\begin{proof}
We prove the result by induction on $k$. For $k=0$ the statement is trivial so suppose $k\geq 1$ and that the result holds for smaller values of $k$. If $e$ is the edge corresponding to the $k$'th ball landing in $v$ then the $(k-1)$'st load graphs of the vertices incident to $e$ will by the induction hypothesis be the roots of $d$ disjoint $B_{k-1}$'s. Going back to the definition of the $d$-nomial trees we see that the $k$'th load graph is exactly a $B_k$ rooted at $v$. This completes the proof.
\end{proof}
Suppose that there is a bin of load $k+1$ and consider the $(k+1)$'st load graph $G=(V,E)$ for the node $v$ corresponding to that bin. 
If $|V|=  |E|(d-1)+1$ we know that the load graph is a tree and hence a $B_{k+1}$. If on the other hand $|V|=|E|(d-1)$ it is easy to check that we can remove some edge from $G$ leaving the graph a forest.
Thus, if the $(k+1)$'st load graph has $|V|\geq (d-1)|E|$ removing at most one edge $e$ from it will turn it into a forest. But removing one edge can decrease the load of $v$ by at most one so if we consider the $k$'th load graph of $v$ in $(V,E-\{e\})$ it will be a tree (being a connected subgraph of a forest). By Lemma~\ref{bincon} above we conclude that it will in fact be a $d$-nomial tree $B_k$ rooted at $v$. We summarise this in the following lemma.
\begin{lemma}\label{loadtight}
If some bin $v$ has load at least $k+1$ then either the hash graph contains a $B_k$ or the $(k+1)$'st load graph $(V,E)$ of $v$ will satisfy  that $|V|\leq |E|(d-1)-1$ i.e.~be tight.
\end{lemma}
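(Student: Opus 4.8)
The plan is a case analysis driven by the one inequality available for the $(k+1)$'st load graph $G=(V,E)$ of $v$: since $G$ is connected (by its recursive construction) and $d$-uniform, the tree characterisation gives $|V|\le (d-1)|E|+1$, with equality precisely when $G$ is a hypertree. There are thus only three cases for $|V|$ relative to $(d-1)|E|$, and I would dispatch them in turn.

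The two cheap cases: if $|V|\le (d-1)|E|-1$ then $G$ is tight and the second alternative already holds. If $|V|=(d-1)|E|+1$ then $G$ is a tree; as $G$ is by definition the $(k+1)$'st load graph of a bin of load at least $k+1$, Lemma~\ref{bincon} applied with parameter $k+1$ shows $G$ is a $d$-nomial tree $B_{k+1}$ rooted at $v$. Unwinding one level of the recursive definition of $B_{k+1}$, the root $v$ is itself the root of a copy of $B_k$ contained in $G$, so the hash graph contains a $B_k$ and the first alternative holds.

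The only real case is $|V|=(d-1)|E|$. First I would record the combinatorial fact that a connected $d$-uniform hypergraph with $|V|=(d-1)|E|$ admits an edge $e$ with $(V,E\setminus\{e\})$ a forest: building $G$ up along a connected edge ordering, all but exactly one edge contributes $d-1$ fresh vertices, the exceptional edge $e$ contributes exactly $d-2$, and therefore $G-e$ contains neither a cycle nor two edges meeting in $\ge 2$ vertices, so it is a forest by the tree characterisation. Next I would invoke the monotonicity of the $d$-choice process: deleting the single ball $x_e$ lowers the load of every bin by at most one, so $v$ still collects at least $k$ balls in the reduced process, and its $k$'th load graph there is a connected sub-hypergraph of the forest $(V,E\setminus\{e\})$ -- hence itself a tree, by the corollary that a connected subgraph of a forest is a tree. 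One last application of Lemma~\ref{bincon}, now with parameter $k$, turns it into a $B_k$ rooted at $v$ sitting inside the hash graph, which is again the first alternative. Assembling the three cases completes the proof.

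I expect the delicate point to be exactly this last case, on two counts. One: arguing that $|V|=(d-1)|E|$ genuinely localises the excess to a single edge whose removal leaves intact all the tree structure it is not itself responsible for; this is a short argument but needs the connected-ordering bookkeeping done carefully. Two, and more subtly: justifying that ``the $k$'th load graph of $v$ after deleting $x_e$'' is well defined and lives inside $(V,E\setminus\{e\})$, which rests on the Lipschitz property that removing one ball changes each bin's load by at most one. I would either cite this standard property of balanced allocation or prove it by a one-ball coupling, and I would be careful that the reduced load graph uses only edges already present in $G$.
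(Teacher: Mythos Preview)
Your proposal is correct and takes essentially the same route as the paper: the paper's argument (given in the paragraph immediately preceding the lemma) performs exactly your three-way case split on $|V|$ versus $(d-1)|E|$, invokes Lemma~\ref{bincon} in the tree case, and in the boundary case $|V|=(d-1)|E|$ removes one edge to obtain a forest, appeals to the fact that deleting one ball drops each load by at most one, and applies Lemma~\ref{bincon} to the resulting $k$'th load graph. Your two flagged delicate points are precisely the ones the paper glosses over (``it is easy to check'' for the forest claim, and a bare assertion of the Lipschitz property), so your level of rigour already exceeds the original.
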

Now if the $(k+1)$'st load graph is tight, the fact that it has height at most $k+1$ implies that it actually contains a tight subgraph of size $O(k)$ as is shown in the following lemma. 
\begin{lemma}\label{smalltight}
Suppose some node $v$ has load at least $k+1$. Then either the hash graph will contain a $B_k$ or a tight subgraph $G'=(E',V')$ with $|E'|=O(k)$.
\end{lemma}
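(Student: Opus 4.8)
The plan is to use Lemma~\ref{loadtight} to reduce to the case where the $(k+1)$'st load graph $G=(V,E)$ of $v$ is tight, and then to extract from $G$ a small tight subgraph --- concretely a double cycle of size $O(k)$, which is tight since every double cycle has $|V|=(d-1)|E|-1$. The only fact about load graphs I need is that every vertex of $L_v(k+1)$ is joined to $v$ by a path of at most $k+1$ edges within $L_v(k+1)$; this is a one-line induction on the recursive definition, since in $L_v(k+1)=\{e\}\cup\bigcup_{i=1}^d L_{v_i}(k)$ the vertex $v$ shares the edge $e$ with every $v_i$ and each $L_{v_i}(k)$ has radius at most $k$ from $v_i$. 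Thus $G$ is connected with ``radius at most $k+1$ from $v$''.

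Next I would run a BFS from $v$ in $G$, giving every vertex $w\ne v$ a depth $\delta(w)\le k+1$ and a parent edge $p(w)$ containing $w$ and a vertex of depth $\delta(w)-1$, so that iterating parent edges from any vertex reaches $v$ in at most $k+1$ steps. Setting the \emph{deficiency} of an edge $e$ to be $(d-1)-\#\{w:p(w)=e\}$, the total deficiency equals $(d-1)|E|-(|V|-1)=\mu(G)+1$, where $\mu(G)=(d-1)|E|-|V|\ge1$ because $G$ is tight; so the total deficiency is at least $2$. Each unit of deficiency is witnessed by an edge $e$ and a vertex of $e$ that was already discovered (via a different edge) when $e$ was processed, and concatenating the two relevant parent-edge paths to $v$ with the step along $e$ gives a closed walk of length at most $2(k+1)+1$, hence a cycle of length $O(k)$. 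Combining two such units and joining the resulting cycles through the parent-edge path between them yields a connected subgraph of $G$ with $O(k)$ edges carrying two independent cycles, i.e.~a double cycle as in Definition~\ref{doublecycdef}; this double cycle is the desired tight $G'=(V',E')$ with $|E'|=O(k)$.

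The genuine obstacle is this final step: turning ``two units of BFS deficiency'' into an honest double cycle meeting the vertex-disjointness requirements of Definition~\ref{doublecycdef}, and disposing of the degenerate configurations --- the two deficiency units lying on a single edge, the two short cycles overlapping in several vertices, or an edge meeting a cycle in three or more vertices (which forces a shorter cycle and is treated exactly as in the proof of Theorem~\ref{tightthm}), together with the usual hypergraph caveat that a parent edge may discover fewer than $d-1$ new vertices. An equivalent and slightly cleaner packaging, which I would keep in reserve, is to pass to a \emph{minimal} connected tight subgraph of $G$ --- such a subgraph is necessarily a double cycle --- and then bound its size by $O(k)$ using the radius bound; this reorganises but does not avoid the same case analysis. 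Steps one and two are routine, so essentially all of the work is in this structural bookkeeping.
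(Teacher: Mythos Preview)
Your primary approach is essentially the paper's: both grow the tight load graph from $v$ in BFS order, locate the first two edges whose addition drops $|V|-(d-1)|E|$ below the tree value (your ``deficiency units''), and use the radius-$(k{+}1)$ bound from $v$ to keep everything of size $O(k)$.

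The one real difference is your target. You aim for an explicit \emph{double cycle} and correctly flag the disjointness bookkeeping as the obstacle. The paper does not bother: since the incrementally grown graph is a tree up to the first deficient edge, it simply takes the smallest subtree spanning the two or three over-counted vertices and adds the offending edge(s) back. That is automatically a tight subgraph with $O(k)$ edges, with no case analysis on how two short cycles might overlap. So if you relax your goal from ``double cycle'' to ``any tight subgraph'' --- which is all the lemma asks --- your obstacle disappears and the argument becomes short.

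Your reserve route --- pass to a \emph{minimal} connected tight subgraph of $G$ and declare it a small double cycle --- has two genuine gaps. First, for $d\ge3$ a minimal connected tight hypergraph need not be a double cycle in the sense of Definition~\ref{doublecycdef}; for instance two edges sharing three vertices is already tight and minimal. Second, the radius bound only gives distance $\le 2(k{+}1)$ between any two of its vertices \emph{measured in $G$}, not in the subgraph itself, so you cannot read off $|E'|=O(k)$ directly. The paper's incremental construction sidesteps both issues by explicitly building the subgraph out of short paths to $v$.
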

\begin{proof}
If the $(k+1)$'st load graph $G=(V,E)$ of $v$ does not contain a $B_k$ we may by the Lemma~\ref{loadtight} assume that it is tight i.e.~has $|V|\leq |E|(d-1)-1$. Now define $S_0=(\{v\},\emptyset)$ where $v$ is the node of load $(k+1)$ and recursively let $S_i=(V_i,E_i)$ where $E_i=\{e\in E: \exists w\in V_{i-1} \text{ such that  } w\in e\}$ and $V_i=\bigcup_{e\in E_i}e$. Note that since the load graph has height at most $k+1$ we must have $(V_{k+1},E_{k+1})=(V,E)$ so the process stops after at most $k+1$ steps. 

Enumerate the edges of $E$, $e_1,\dots,e_{|E|}$, in any way satisfying that if $e_\ell\in E_i$ and $e_{\ell'}\in E_j\backslash E_i$ for some $i<j$ then $\ell <\ell'$ i.e.~according to (this measure of) distance from $v$. Suppose we construct $(V,E)$ by adding the edges $e_1,\dots,e_{|E|}$ one at a time. Let the graph obtained after the $i$'th edge is added be denoted $(V_i',E_i')$. This process will at any stage give a connected graph thus satisfying $|V_i'|\leq |E_i'|(d-1)+1$ and since $|V|\leq |E|(d-1)-1$ there will exist a minimal $i$ and a minimal $j\geq i$ (possibly with $j=i$) such that $|V_i'|\leq |E_i'|(d-1)$ and $|V_j'|\leq |E_j'|(d-1)-1$. 

If $i=j$ we have that $|e_i\cap V_{i-1}'|\geq 3$ so we can pick three vertices $v_1,v_2,v_3\in e_i\cap V_{i-1}'$. Since  $(V_{i-1}',E_{i-1}')$ is connected and has height at most $k+1$ the smallest connected subgraph $H$ containing $v_1,v_2$ and $v_3$ has itself size $O(k)$. Then $H\cup \{e_i\}$ will be a tight subgraph of size $O(k)$.

When $i<j$ we in a similar way see that when adding $e_i$ we obtain a subgraph $H$ of size $O(k)$ with $|V(H)|= (d-1)|E(H)|$. Next $|e_j\cap V_{j-1}'|\geq 2$ so we can find $v_1,v_2\in e_j\cap V_{j-1}'$. The smallest connected subgraph of $(V_{j-1}',E_{j-1}')$ containing $v_1,v_2$ and $H$ has size $O(k)$ and adding the edge $e_j$ gives a tight subgraph of size $O(k)$.
\end{proof}
\subsection{The Fibonacci trees}

Now suppose that we are in the setting of Theorem~\ref{alwaysleftthm}. We will need to redefine what we mean by the load graph of a bin. It will be silly to use the old definition for the following reason: Consider a node $v$ say in the $i$'th table and suppose we want to know how it got its $k$'th ball. We then consider the corresponding hyperedge $e$ which has a node in each of the $d$ tables. Call these nodes $v_1,\dots,v_d$. Since we use the Always-Go-Left algorithm the bins corresponding to $v_1,\dots,v_{i-1}$ already has load $k$ and thus we reduce the potential size of our witness tree by only asking how they got load $k-1$. We thus define the load graph of a bin as follows.
\begin{figure}
  \centering
  \def\svgwidth{0.8\linewidth}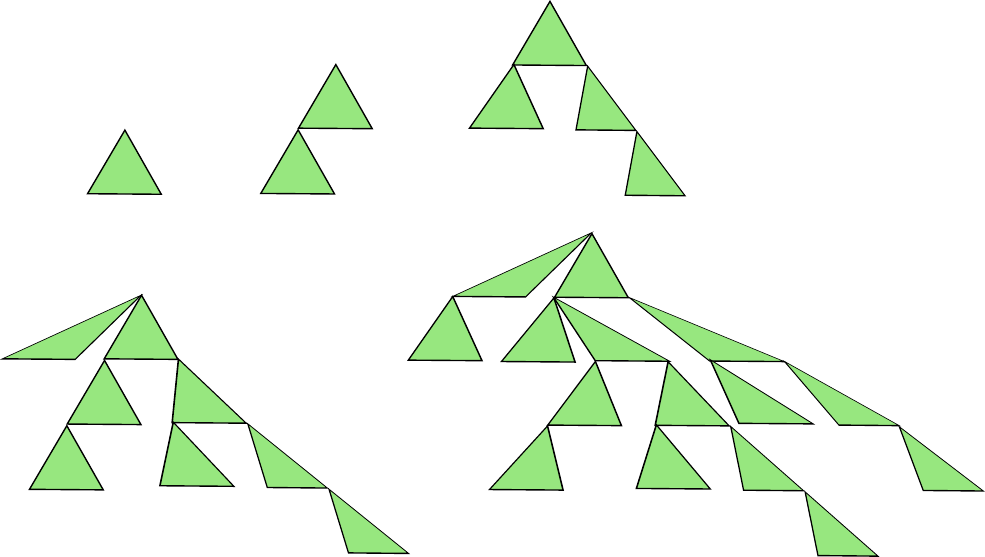
      \caption{The first few $3$-ary Fibonacci trees.}
      \label{fibtrees}
\end{figure}
\begin{definition}
Suppose $v$ is a vertex of the hash graph corresponding to a bin of load at least $k$. We recursively define  $L_v(k)$ the  \textbf{$k$'th load graph of $v$} to be the following $d$-uniform hypergraph.
\begin{itemize}
\item If $k=0$ we let $L_v(k)=(\{v\},\emptyset)$.
\item If $k>0$ and $v\in G_i$ we let $e$ be the edge corresponding to the $k$'th ball landing in $v$. Write $e=\{v_1,\dots,v_d\}$ such that $v_j\in G_j$ for each $j$ (note $v_i=v$). Then $L_v(k)$ is the graph having
\begin{align*}
E(L_v(k))&=\{e\}\cup\bigcup_{j=1}^{i-1}E(L_{v_j}(k))\cup \bigcup_{j=i}^{d}E(L_{v_j}(k-1)), \quad \text{and} \\
V(L_v(k))&=\bigcup_{e\in E(L_v(k))}e
\end{align*}
\end{itemize}
\end{definition}
Note that the $k$'th load graph of a vertex $v\in G_i$ has height at most $d(k-1)+i$ (except of course when $k=0$ in which case the height is zero). 

Next, we will define our witness trees (see Figure~\ref{fibtrees}).
\begin{definition}
For $1\leq i \leq d$ define the $d$-ary Fibonacci tree $S_i(k)$ rooted at a vertex $v$ recursively as follows.
\begin{itemize}
\item When $k=0$ we let $S_i(k)=(\{v\},\emptyset)$.
\item For $k>0$ we let $S_i(k)$ consist of an edge $e=(v_1,\dots,v_d)$ such that $v_j$ is itself a root of an $S_j(k)$ for $1\leq j\leq i-1$ and an $S_j(k-1)$ for $i\leq j\leq d$.
\end{itemize}
\end{definition}

The following result is proved exactly like Lemma~\ref{smalltight}.
\begin{lemma}\label{smalltight2}
Suppose some node $v\in G_i$ has load at least $k+1$. Then either the hash graph contains a tight subgraph of size $O(k)$ or it contains a copy of $S_i(k)$.
\end{lemma}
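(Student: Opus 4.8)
The plan is to follow the proof of Lemma~\ref{smalltight} line by line, replacing the $d$-nomial tree $B_k$ by the Fibonacci tree $S_i(k)$ and re-running the short chain of auxiliary facts it rests on: first a structural description of tree-shaped load graphs (the analogue of Lemma~\ref{bincon}), then the "delete at most one edge" reduction (the analogue of Lemma~\ref{loadtight}), and finally the layered breadth-first argument that shrinks a tight load graph of small height down to a tight subgraph of size $O(k)$ (the argument of Lemma~\ref{smalltight} proper). Only the first of these needs a new idea; the other two are transcriptions, modulo keeping track of the height bound for the Always-Go-Left load graph.

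\textbf{Step 1 (tree load graphs are Fibonacci trees).} I would prove: if $v\in G_i$ has load at least $k$ and its $k$'th load graph $L_v(k)$ is a tree, then $L_v(k)$ is a copy of $S_i(k)$ rooted at $v$. The only wrinkle compared with Lemma~\ref{bincon} is that $L_v(k)$ for $v\in G_i$ is assembled from $L_{v_j}(k)$ on the $i-1$ "left" branches and from $L_{v_j}(k-1)$ on the remaining $d-i+1$ branches, so induction on $k$ alone does not close; instead I would induct on the pair $(k,i)$ in lexicographic order, since every recursive call goes either to some $(k,j)$ with $j<i$ or to some $(k-1,j)$, both strictly smaller. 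The base case $k=0$ is $L_v(0)=(\{v\},\emptyset)=S_i(0)$, and in the inductive step each $L_{v_j}(\cdot)$ occurring in the definition of $L_v(k)$ is a connected subgraph of the tree $L_v(k)$, hence itself a tree, hence by induction the appropriate $S_j(k)$ or $S_j(k-1)$; comparing with the recursive definition of $S_i(k)$ finishes the step. Along the way I would record the two facts used later: $S_i(k)$ (and therefore a tree $k$'th load graph of a node in $G_i$) has height $d(k-1)+i$, and $S_i(k+1)$ contains a copy of $S_i(k)$ rooted at the $i$'th vertex of its root edge.

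\textbf{Step 2 (reduction to a tight load graph).} Exactly as in Lemma~\ref{loadtight}: let $v\in G_i$ have load at least $k+1$ and let $(V,E)$ be its $(k+1)$'st load graph. Being connected it satisfies $|V|\le(d-1)|E|+1$. If it is not tight, then $|V|\ge(d-1)|E|$ and deleting at most one edge turns it into a forest; since deleting an edge drops the load of $v$ by at most one, the $k$'th load graph of $v$ in the reduced hash graph is a connected subgraph of that forest, hence a tree, hence a copy of $S_i(k)$ by Step 1 (the subcase where $(V,E)$ itself is a tree is absorbed here, as it is then $S_i(k+1)\supseteq S_i(k)$). So either the hash graph contains a copy of $S_i(k)$ or the $(k+1)$'st load graph of $v$ is tight.

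\textbf{Step 3 (shrinking the tight load graph) and the main obstacle.} Assuming the $(k+1)$'st load graph $(V,E)$ of $v\in G_i$ is tight, the proof of Lemma~\ref{smalltight} applies verbatim once we note that its height is at most $d\cdot((k+1)-1)+i=dk+i=O(k)$ because $d=O(1)$: layer $(V,E)$ by distance from $v$, insert the edges one at a time in nondecreasing order of distance, and track $|V'_j|-(d-1)|E'_j|$, which equals $1$ after the first insertion and, since the final graph is tight, must reach $-1$; taking the first insertion $e$ that brings it to $0$ and the first insertion $e'$ that brings it to $-1$, and using that every prefix is connected of height $O(k)$, the minimal connected subgraphs joining the two or three over-counted vertices seen by $e$ and $e'$ have size $O(k)$, so together with $e$ and $e'$ they form a tight subgraph of size $O(k)$. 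The only part of this whole argument where any thought is needed is Step 1 — choosing the lexicographic induction on $(k,i)$ so that the mixed-speed recursion of the Always-Go-Left load graph becomes well founded — together with carrying the height bound $d(k-1)+i$ (still $O(k)$) through Step 3; everything else is a direct copy of the $d$-nomial case.
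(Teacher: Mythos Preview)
Your proposal is correct and follows exactly the approach the paper intends: the paper's entire proof of Lemma~\ref{smalltight2} is the single sentence ``proved exactly like Lemma~\ref{smalltight},'' and you have faithfully reconstructed that path, including the one genuine adjustment --- replacing straight induction on $k$ by lexicographic induction on $(k,i)$ so that the mixed calls $L_{v_j}(k)$ for $j<i$ and $L_{v_j}(k-1)$ for $j\ge i$ are well founded --- and carrying through the height bound $d(k-1)+i=O(k)$ into the layered argument of Step~3. There is nothing to add; if anything, your write-up is more explicit than the paper's.
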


\section{Bounding the probability of the existence of a large $d$-nomial tree}\label{Tree1app}
Mimicking the methods in~\cite{Dahl1} we will prove the following result\footnote{Some authors say that an event occurs with high probability if the failure probability is $o(1)$. In this terminology Theorem~\ref{nobinom} can be considered a high probability bound on the maximum load.}.
\begin{theorem}\label{nobinom}
There exists a constant $r=O(1)$ such that when hashing $m=O(n)$ balls into $d$ tables of size $n/d$ using $d$ simple tabulation hash functions the probability that the hash graph contains a $d$-nomial tree of size $k=\lceil \frac{\log \log n}{\log d}+r \rceil$ is $O((\log \log n)^{-1})$.
\end{theorem}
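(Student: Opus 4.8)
The plan is to mimic the corresponding argument of Dahlgaard et al.~\cite{Dahl1} for $d=2$, whose heart is a first-moment bound over $d$-nomial trees made robust to the dependencies of simple tabulation. Write $B_k=B_{d,k}$, so that $B_k$ has $E_k=\frac{d^k-1}{d-1}$ edges and $d^k$ vertices; with $k=\lceil\frac{\log\log n}{\log d}+r\rceil$ this gives $E_k=\Theta(d^r\log n)=\Theta(\log n)$. A naive union bound that picks a key for each of the $E_k$ edges ($m^{E_k}$ ways) and pays $d/n$ per edge for hashing onto its parent vertex only yields $n\,(dm/n)^{E_k}$, which is useless once $m\ge n/d$. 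So the first task is to \emph{truncate}: prove a base-case lemma asserting that there is a constant $\ell_0=\ell_0(m/n,d)$ such that, with probability $1-O((\log\log n)^{-1})$, at most $\eta n$ bins have load exceeding $\ell_0$, where $\eta$ is a small constant (small enough that the quantity $\rho$ below is below $1$). This base case is itself obtained by a cruder version of the same counting, iterated over the first $\ell_0$ levels, exactly as in~\cite{Dahl1}. Granting it, if some bin has load $\ge k+1$ then --- unravelling its load graph only down to depth $k-\ell_0$, by the arguments of Lemmas~\ref{bincon}, \ref{loadtight} and~\ref{smalltight} --- the hash graph contains (after possibly deleting one edge) a copy of $B_{k-\ell_0}$ all of whose $\approx d^{k-\ell_0-1}(d-1)$ leaves are among the $\le\eta n$ heavily loaded bins, or else it contains a tight subgraph of size $O(k)$, which is excluded with probability $n^{-1/3+o(1)}$ by Theorem~\ref{tightthm}.

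Next I would bound the probability of such a truncated tree by a first moment. Building it top-down: the root is one of $n$ bins; each of the $E_{k-\ell_0}$ edges contributes a factor $m$ for its key and $d/n$ for the event that it hashes onto its (already determined) parent vertex (for the root edge, onto the chosen root); and each of the $\approx d^{k-\ell_0-1}(d-1)$ leaf vertices contributes a factor $\le d\eta$ for landing in the heavy set. Multiplying, the expected number of such trees is at most $n\cdot (dm/n)^{E_{k-\ell_0}}(d\eta)^{\Theta(d^{k-\ell_0})}=n\cdot\rho^{\,\Theta(d^{\,k-\ell_0})}$ for a constant $\rho<1$ (here is where $\eta$ small is used, since $E_{k-\ell_0}$ and the leaf count are both $\Theta(d^{k-\ell_0})$). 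Since $d^{k-\ell_0}=\Theta(d^{r})\log n$, this is $n^{1-\Omega(d^{r})}$, which is $O((\log\log n)^{-1})$ once $r$ is a large enough constant (depending on the $O(n)$ constant, as must $\ell_0$ and $\eta$). One delicate point to address honestly is that the leaves' heavy-loadedness is not independent of the tree's own hash values; this is resolved, as in~\cite{Dahl1}, by revealing hash values from the leaves upward, using that each leaf receives at most one ball from within the tree (it has degree one there), and extracting the heavy-set fraction from the remaining randomness.

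The other point needing care is that the factor $m$ per edge and the $d/n$ per merge presuppose that the keys of the tree are \emph{jointly independent} and that the merge events are independent. This is the only genuinely simple-tabulation-specific ingredient. If the keys are not independent, Lemma~\ref{depchar} exhibits a subset of size $t$ whose XOR is $\emptyset$, and Lemma~\ref{deplemma} then lets us choose those keys with a saving of at least $m^{(t-1)/2}\ge m$ over the crude $m^{t}$ count; charging this saving against the loss from the $t-1$ merge-equations that are no longer ``free'' shows the dependent configurations are no worse than the independent ones. The configurations where dependencies pile up so densely that this bookkeeping breaks are precisely those containing a tight subgraph of size $O(\log\log n)$, again removed by Theorem~\ref{tightthm}. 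Combining the base-case failure probability, the tight-subgraph failure probability, and the truncated-tree first moment yields the claimed $O((\log\log n)^{-1})$ (in fact a polynomially small bound, but this weak form is all that is needed downstream).

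I expect the main obstacle to be the interplay of the last two points: pushing the deferred-revelation/first-moment argument through for the truncated tree when the $\Theta(\log n)$ keys involved are not independent --- i.e.\ bounding, uniformly over all dependency patterns, the probability that a fixed tree embeds \emph{and} its leaves are heavy, without the clean product structure available in the fully random setting. This is exactly where one leans on Lemma~\ref{deplemma} together with having Theorem~\ref{tightthm} already in hand, and where the ``intricate but straightforward'' generalisation of~\cite{Dahl1} resides.
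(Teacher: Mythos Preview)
Your approach diverges substantially from the paper's, and the mechanism you rely on has a real gap. The paper does not use any base-case/heavy-bins truncation. Instead it works with the $\ell$-\emph{pruned} tree $T_{k,\ell}$ (Definition~\ref{prundeddef}): from $B_k$ it deletes the subtree below every vertex with fewer than $(d-1)\ell$ children, leaving $d^{k-\ell}$ internal vertices each incident to $\ell$ interchangeable leaf-edges. The savings that make the union bound converge come purely from this \emph{symmetry}: the ordered key count is divided by $(\ell!)^{d^{k-\ell}}$, and for $\ell$ a large enough constant this drives the per-edge ratio $d^2m/\bigl(n(\ell!)^{1/(\ell+1)}\bigr)$ below $1/2$. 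No statement about ``$\eta n$ heavy bins'' appears, so there is nothing to condition on. For the dependent case the paper does not try to absorb arbitrary XOR-zero patterns via Lemma~\ref{deplemma}; it builds the maximal independent \emph{prefix} $S$ of the edges in BFS order and splits on how many root-incident edges are independent, invoking the structural counting Lemmas~\ref{struclem1} and~\ref{struclem2} (at most $s^{O(1)}m^{s-1}$, resp.\ $s^{O(1)}m^{s-3/2}$, $s$-tuples admit one, resp.\ many, external dependent keys), combined with the same symmetry factors. This is a good deal more structured than ``save $m^{(t-1)/2}$ and charge against lost merge equations'', which as stated does not control configurations with many overlapping dependencies among $\Theta(\log n)$ keys.

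The difficulty you flag yourself---that ``leaf is heavy'' is a function of the entire hashing process and hence correlated with the tree's merge events---is genuine for simple tabulation and is \emph{not} resolved by ``revealing from the leaves up''. That device works when distinct keys' hash values are jointly independent; here, fixing the tree keys' hashes constrains the character tables $h_0,\dots,h_{c-1}$ and hence the hash of every other key feeding into the leaves' loads. Neither the paper nor Dahlgaard et al.\ use this mechanism; the $\ell$-pruning trick is precisely the device they introduced to avoid it, and your claim that the heavy-bins iteration is ``exactly as in~\cite{Dahl1}'' is a misattribution.
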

In Appendix~\ref{finishingapp} we will see how to deduce Theorem~\ref{mythm}.

When bounding the probability of having a large $d$-nomial tree in the hashgraph we will actually upper bound it by the probability of finding the following $\ell$-\emph{pruned} tree for a fixed $\ell$ (see Figure~\ref{pruned}). 
\begin{definition}\label{prundeddef}
For $k\geq 0$ and $0\leq \ell \leq k$ let the $\ell$-pruned $d$-nomial tree $T_{k,\ell}$ be the tree obtained from $B_k$ by for each vertex $w$ of $B_k$ such that $w$ has less than $(d-1)\ell$ children removing the edges of the induced subtree rooted at $w$ (and the thus created isolated vertices).
\end{definition}
\begin{figure}
\centering
    \includegraphics[width=0.7\textwidth]{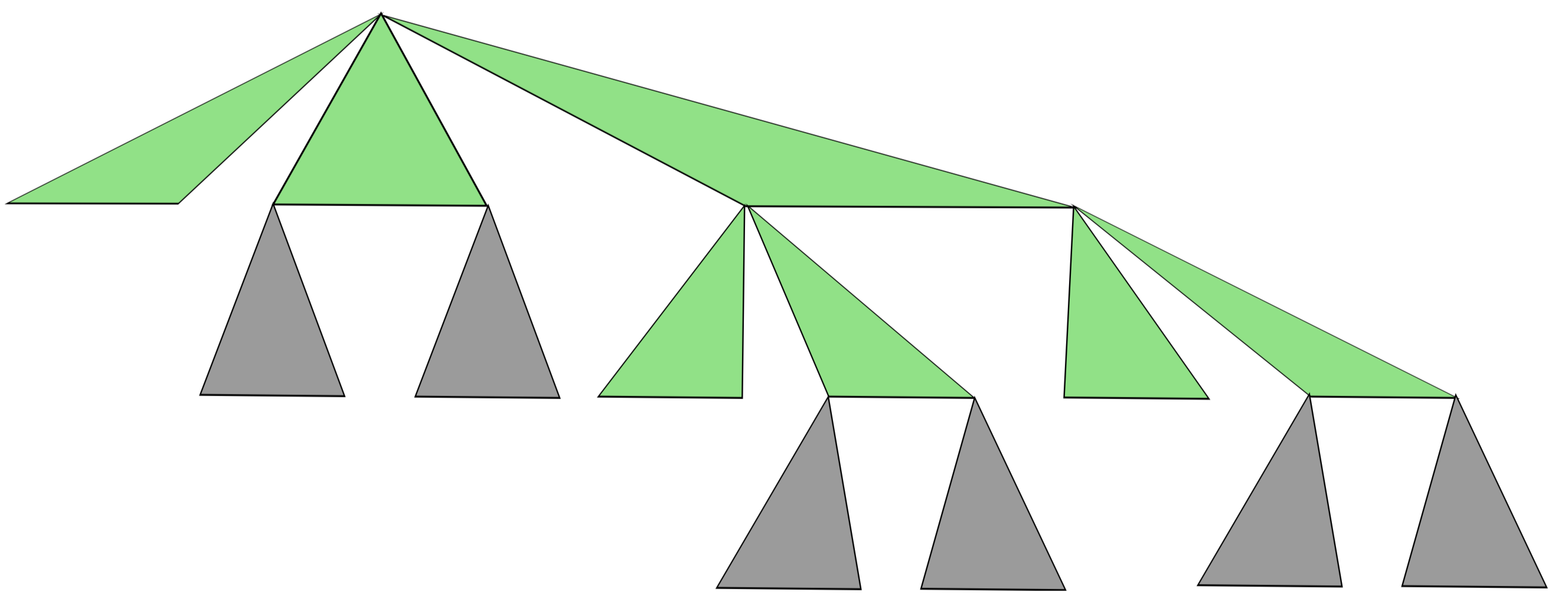}
      \caption{The $2$-pruned 3-nomial tree $T_{2,3}$.}
      \label{pruned}
\end{figure}

Note that each internal node $v\in T_{k,\ell}$ is contained in $\ell$ edges going to children of $v$ that are all leaves. Furthermore, the following results are easily shown by induction starting with the case $k=\ell$.
\begin{lemma}\label{dnomprop} The following holds:
\begin{itemize}
\item $|V(T_{k,\ell})|=((d-1)\ell+1)d^{k-\ell}$ and $|E(T_{k,\ell})|=\ell d^{k-\ell}+\frac{d^{k-\ell}-1}{d-1}$.
\item The number of internal notes in $T_{k,\ell}$ is $d^{k-\ell}$.
\end{itemize}
\end{lemma}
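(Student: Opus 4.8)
The plan is to prove all three identities simultaneously by induction on $k$, starting from the base case $k=\ell$ and using a decomposition of $T_{k,\ell}$ that mirrors the recursive definition of $B_k$; throughout I assume $1\le\ell\le k$ (the case $\ell=0$ is $T_{k,0}=B_k$). The one structural fact I need about $B_k$ is that its root lies in exactly $k$ edges, hence has exactly $(d-1)k$ children. This is immediate by induction on $k$: the root of $B_k$ is the parent of the single new edge $e=\{v_1,\dots,v_d\}$, contributing the $d-1$ children $v_2,\dots,v_d$, and is also the root of a copy of $B_{k-1}$, contributing $(d-1)(k-1)$ further, distinct children. The same argument applied to any vertex whose induced subtree is a copy of $B_j$ shows such a vertex has exactly $(d-1)j$ children; in particular every non-root vertex of $B_\ell$ has strictly fewer than $(d-1)\ell$ children.

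Next I would establish the decomposition for $k>\ell$. Write $B_k$ as the new edge $e=\{v_1,\dots,v_d\}$ together with the vertex-disjoint copies $B_{k-1}^{(1)},\dots,B_{k-1}^{(d)}$ of $B_{k-1}$ rooted at $v_1,\dots,v_d$, where $v_1$ is also the root of $B_k$. I claim $T_{k,\ell}$ is precisely the edge $e$ together with the $d$ trees obtained by $\ell$-pruning each $B_{k-1}^{(i)}$ separately, i.e.\ $d$ copies of $T_{k-1,\ell}$ meeting only at their roots $v_1,\dots,v_d$. The key observation is that pruning is local: a vertex $w\ne v_1$ has the same children in $B_k$ as inside its copy $B_{k-1}^{(i)}$ (the edge $e$ adds children only to $v_1$), and the induced subtree of $B_k$ rooted at $w$ coincides with that of $B_{k-1}^{(i)}$ rooted at $w$; hence $w$ is pruned, with the same effect, in one exactly when it is in the other, and likewise is discarded as an isolated vertex in one exactly when it is in the other. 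Finally $v_1$ is never pruned: it has $(d-1)k\ge(d-1)\ell$ children in $B_k$ and $(d-1)(k-1)\ge(d-1)\ell$ children in $B_{k-1}^{(1)}$ (using $k-1\ge\ell$), so the edge $e$ survives in $T_{k,\ell}$.

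The recursions now drop out: as the $d$ copies of $T_{k-1,\ell}$ are vertex-disjoint and $e$ uses exactly their roots, $|V(T_{k,\ell})|=d\,|V(T_{k-1,\ell})|$ and $|E(T_{k,\ell})|=1+d\,|E(T_{k-1,\ell})|$; and since (by the same induction) the root of $T_{j,\ell}$ is internal for every $j\ge\ell$ — in the base case it keeps its $(d-1)\ell\ge1$ children as leaves, in the step $v_1$ keeps $v_2,\dots,v_d$ — every vertex of a copy of $T_{k-1,\ell}$ has the same children in $T_{k,\ell}$ as in the copy, so $\#\mathrm{int}(T_{k,\ell})=d\cdot\#\mathrm{int}(T_{k-1,\ell})$. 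For the base case $k=\ell$ one checks directly that pruning $B_\ell$ leaves exactly its root together with its $(d-1)\ell$ children, now leaves, joined by $\ell$ edges, so $|V(T_{\ell,\ell})|=(d-1)\ell+1$, $|E(T_{\ell,\ell})|=\ell$, $\#\mathrm{int}(T_{\ell,\ell})=1$; unwinding the recursions gives the claimed $|V(T_{k,\ell})|=((d-1)\ell+1)d^{k-\ell}$, $\#\mathrm{int}(T_{k,\ell})=d^{k-\ell}$, and $|E(T_{k,\ell})|=\ell d^{k-\ell}+\frac{d^{k-\ell}-1}{d-1}$ (the last is also $(|V(T_{k,\ell})|-1)/(d-1)$ by the hypertree identity). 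The only real difficulty is the bookkeeping in the decomposition step — in particular that $v_1$ is simultaneously the root of $B_k$ and of the sub-copy $B_{k-1}^{(1)}$, so its child-count must be checked in both, and that the "remove the induced subtree, then delete isolated vertices" operation is carried out consistently in $B_k$ and in each sub-copy; past that point it is a routine unwinding of linear recursions.
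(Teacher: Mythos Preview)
Your proposal is correct and follows precisely the approach the paper indicates: the paper merely asserts that the identities ``are easily shown by induction starting with the case $k=\ell$'', and your argument --- establishing the base case $T_{\ell,\ell}$ directly and then using the recursive decomposition $T_{k,\ell}=e\cup\bigcup_{i=1}^d T_{k-1,\ell}^{(i)}$ to obtain $|V(T_{k,\ell})|=d\,|V(T_{k-1,\ell})|$, $|E(T_{k,\ell})|=1+d\,|E(T_{k-1,\ell})|$, $\#\mathrm{int}(T_{k,\ell})=d\cdot\#\mathrm{int}(T_{k-1,\ell})$ --- is exactly the intended routine. The only (very minor) wording slip is the claim that ``every vertex of a copy of $T_{k-1,\ell}$ has the same children in $T_{k,\ell}$ as in the copy'': this fails for $v_1$, which gains $v_2,\dots,v_d$ as extra children via $e$, but you have already separately argued that $v_1$ is internal in both trees, so the internal-node count is unaffected and the conclusion stands.
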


Finally, to prove Theorem~\ref{nobinom} we will need the following two structural lemmas from~\cite{Dahl1}. 
\begin{lemma}[Dahlgaard et al.~\cite{Dahl1}]\label{struclem1}
Let $X\subset U$ with $|X|=m$ and let $s$ be fixed such that $s^c\leq \frac{4}{5}m$. Then the number of $s$-tuples $(x_1,\dots,x_s)\in X^s$ for which there is a $y\in X\backslash \{x_1,\dots ,x_s\}$ such that $h(y)$ is dependent of $h(x_1),\dots,h(x_s)$ is at most
\begin{align*}
s^{O(1)}m^{s-1}.
\end{align*}
\end{lemma}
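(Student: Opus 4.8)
The plan is to prove Lemma~\ref{struclem1} by a counting argument that parallels the proof of Lemma~\ref{deplemma}, exploiting the characterisation of dependence in Lemma~\ref{depchar}. Recall that by Lemma~\ref{depchar}, if $h(y)$ is dependent on $h(x_1),\dots,h(x_s)$ and $y\notin\{x_1,\dots,x_s\}$, then there is a nonempty set $I\subseteq\{1,\dots,s\}$ and a sign pattern (in the XOR world just a subset) such that $y\oplus\bigoplus_{i\in I}x_i=\emptyset$; in particular every position character appears an even number of times in the multiset $\{y\}\cup\{x_i:i\in I\}$. Fix the cardinality $|I|=t$; since $y$ itself is one of the keys and the total must be even, $t$ is odd, and because $y\notin\{x_1,\dots,x_s\}$ we have $t\geq 1$ (and the degenerate case $t=1$ would force $y=x_{i}$, which is excluded, so in fact $t\geq 3$ when all keys are distinct — but I should be careful, since the $x_i$ need not be distinct from each other; I will handle this by reducing to distinct representatives).

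First I would set up the count: there are at most $\binom{s}{t}\leq s^t$ ways to choose the index set $I$, and there are at most $s^{s-t}m^{s-t}$ — actually just $m^{s-t}$ — ways to choose the remaining coordinates $(x_i)_{i\notin I}$ freely. It remains to bound the number of ways to choose the tuple $\bigl((x_i)_{i\in I},y\bigr)\in X^{t}\times X$ with $y\oplus\bigoplus_{i\in I}x_i=\emptyset$. This is exactly the kind of quantity controlled by Lemma~\ref{deplemma}: we have $t+1$ keys (with $t+1$ even) whose XOR is $\emptyset$, so the number of such $(t+1)$-tuples is at most $(t!!)^c\prod\sqrt{|X|}=(t!!)^c m^{(t+1)/2}$. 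Combining, the number of bad $s$-tuples with parameter $t$ is at most $s^{t}\,(t!!)^c\,m^{(t+1)/2}\,m^{s-t}=s^{t}(t!!)^c m^{s-(t-1)/2}$. Since $t\geq 3$ gives $m^{s-(t-1)/2}\leq m^{s-1}$, and since $t\leq s$ together with $s^c\leq\frac45 m$ makes the factor $s^{t}(t!!)^c=s^{O(1)}$ (here one uses $(t!!)^c\leq (t^{t/2})^c = t^{ct/2}$ and $t\le s$, and absorbs everything polynomial in $s$ into $s^{O(1)}$; the hypothesis $s^c\le\frac45 m$ is what prevents the $m^{(t+1)/2}$ factor from ever dominating as $t$ grows, exactly as in the telescoping estimate used in the proof of Theorem~\ref{cuckoothm}). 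Summing over the at most $s$ odd values of $t$ only changes the constant, so the total is $s^{O(1)}m^{s-1}$.

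The main obstacle I anticipate is the case $t=1$ and, more generally, the bookkeeping around keys that coincide. If $t=1$ then $y\oplus x_i=\emptyset$ forces $y=x_i$, contradicting $y\notin\{x_1,\dots,x_s\}$, so this case contributes nothing; but one has to make the analogous observation for larger $t$ when several of the $x_i$ in the dependency are equal — there the multiset could collapse and we might really have a shorter genuine dependency. The clean way around this is to always pass to a \emph{minimal} nonempty subset $I$ witnessing the dependency $y\oplus\bigoplus_{i\in I}x_i=\emptyset$; minimality plus $y\notin\{x_i\}$ rules out $|I|=1$, and for the counting we only need an upper bound, so it is harmless to overcount by summing over all $t$ rather than the true minimal one. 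A second, more minor point is that Lemma~\ref{deplemma} is stated for tuples drawn from sets $A_1,\dots,A_{2t'}$; here all $A_j=X$, so it applies directly with $2t'=t+1$. With these observations in place the estimate is routine and the lemma follows with the claimed $s^{O(1)}m^{s-1}$ bound, where the implied polynomial degree in $s$ is $O(c)$.
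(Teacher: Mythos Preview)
The paper does not itself prove Lemma~\ref{struclem1}; it is quoted from \cite{Dahl1}, so there is no in-paper argument to compare against. Your overall plan --- use Lemma~\ref{depchar} to write $y=\bigoplus_{i\in I}x_i$ for some $I\subseteq[s]$ of odd size $t\geq 3$, apply Lemma~\ref{deplemma} with $2t'=t+1$ to bound the number of choices for $\bigl((x_i)_{i\in I},y\bigr)$ by $(t!!)^c m^{(t+1)/2}$, fill in the remaining $s-t$ coordinates freely, and union-bound over $I$ --- is exactly the right strategy and is the natural analogue of the count in the proof of Theorem~\ref{cuckoothm}.

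There is, however, a real gap in your execution. You bound $\binom{s}{t}\leq s^t$ and then assert that ``$s^{t}(t!!)^c=s^{O(1)}$''. This is false: for $t$ of order $s$ the product $s^{t}(t!!)^c$ is of size $s^{\Theta(s)}$, and under the hypothesis $s^c\leq\tfrac{4}{5}m$ the extra saving $m^{-(t-3)/2}$ does \emph{not} absorb it once the $1/t!$ from the binomial has been discarded. (For a concrete check take $c=2$, $s=10$, $m=125$: your bound for the $t=9$ term is about $4.6\times 10^{8}$, whereas keeping the binomial gives under $10$.) The fix is simply not to throw away the binomial. With $\binom{s}{t}$ in place, the ratio of the $(t{+}2)$-term to the $t$-term is
\[
\frac{(s-t)(s-t-1)(t+2)^{c-1}}{(t+1)\,m}
\;\leq\;\frac{5}{4}\cdot\frac{s^{c}}{m}\;\leq\;1,
\]
using $t+2\leq s$ and $s^c\leq\tfrac{4}{5}m$. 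Hence every term is dominated by the $t=3$ term $\binom{s}{3}3^c$, and summing over at most $s/2$ odd values of $t$ gives a total of at most $\tfrac{s}{2}\binom{s}{3}3^c=s^{O(1)}$, which is the desired bound. Note that this is \emph{not} identical to the telescoping in Theorem~\ref{cuckoothm}: there one had the stronger constraint $\ell\leq m^{1/(c+2)}$, which allows the cruder $s^t$ bound; here the $1/t!$ hidden in $\binom{s}{t}$ is essential.
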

\begin{lemma}[Dahlgaard et al.~\cite{Dahl1}]\label{struclem2}
Let $X\subset U$ with $|X|=m$ and let $s$ be fixed such that $s^c\leq \frac{4}{5}m$. Let $k\geq \max(s-1,5)$. Then the number of $s$-tuples $(x_1,\dots,x_s)\in X^s$ for which there are $y_1,\dots,y_k\in X\backslash \{x_1,\dots ,x_s\}$  such that each $h(y_i)$ is dependent of $h(x_1),\dots,h(x_s)$ is at most
\begin{align*}
s^{O(1)}m^{s-3/2}.
\end{align*}
\end{lemma}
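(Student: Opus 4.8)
The strategy follows the proof of Lemma~\ref{struclem1}: bound the count by a union bound over the combinatorial shape of the dependencies, using Lemma~\ref{deplemma} to convert each dependency into a reduction in the number of ``free'' keys. Since the target $s^{O(1)}m^{s-3/2}$ beats the Lemma~\ref{struclem1} bound $s^{O(1)}m^{s-1}$ by a factor $m^{1/2}$, the whole point is to extract one extra $\sqrt m$-saving out of the hypothesis $k\ge\max(s-1,5)$. Fix a tuple with the stated property. By Lemma~\ref{depchar}, for each $j$ there is $I_j\subseteq\{1,\dots,s\}$ with $h(y_j)=\bigoplus_{i\in I_j}h(x_i)$, every position character occurring an even number of times in $\{y_j\}\cup(x_i)_{i\in I_j}$. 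If $(x_i)_{i\in I_j}$ is itself dependent for some $j$, then the $x$'s already contain a dependency of $\le4$ keys and Lemma~\ref{deplemma} wins a factor $\ge m^2$ outright; so assume not. Using that simple tabulation is $3$-independent we get $|I_j|\ge3$, and if $|I_j|\ge5$ for some $j$ then Lemma~\ref{deplemma} applied to $\{y_j\}\cup(x_i)_{i\in I_j}$ (of size $\ge6$) already wins $\ge m^{3/2}$; hence we may assume $|I_j|=3$ for every $j$. Also $y_j\notin\{x_1,\dots,x_s\}$, and distinct $j$'s give distinct $y_j$ (else the $x$'s would be dependent).

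The crucial combinatorial point is that the $I_j$ are at least $\max(s-1,5)$ distinct $3$-element subsets of $\{1,\dots,s\}$, and a family of $3$-subsets of an $s$-set in which every two members meet in at least $2$ elements has size at most $\max(s-2,4)$: such a family is either contained in a fixed $4$-element ground set (at most $\binom43=4$ sets) or is a sunflower with a common core of size $2$ (at most $s-2$ petals). Since $\max(s-1,5)>\max(s-2,4)$, there must exist two relations $I_a,I_b$ with $|I_a\cap I_b|\le1$. This is exactly why $k\ge\max(s-1,5)$ is assumed: the ``common pair'' configuration — which is the one genuinely bad case — cannot occur with this many relations.

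To finish, write the $4$-key dependencies $R_a:\ y_a\oplus\bigoplus_{i\in I_a}x_i=\emptyset$ and $R_b:\ y_b\oplus\bigoplus_{i\in I_b}x_i=\emptyset$. We union bound over the $\le\binom s3^2=s^{O(1)}$ choices of $(I_a,I_b)$ and over whether $|I_a\cap I_b|$ is $0$ or $1$, asserting only the existence of distinct $y_a,y_b\in X\setminus\{x_i\}$; i.e.\ we bound the number of tuples $(x_1,\dots,x_s,y_a,y_b)$ satisfying $R_a$ and $R_b$. If $I_a\cap I_b=\emptyset$, the two dependencies live on disjoint $4$-element key-sets, so Lemma~\ref{deplemma} bounds the $6$ keys $(x_i)_{i\in I_a\cup I_b}$ by $O(m^4)$ (a saving of $m^2$) while the other $s-6$ keys are free, giving $s^{O(1)}m^{s-2}$. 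If $I_a\cap I_b=\{i^*\}$, say $I_a=\{i^*,\alpha,\beta\}$ and $I_b=\{i^*,\gamma,\delta\}$, first count $(x_{i^*},x_\alpha,x_\beta,y_a)$ from $R_a$ in $O(m^2)$ ways by Lemma~\ref{deplemma}; now $x_{i^*}$ is fixed, so applying Lemma~\ref{deplemma} to $R_b$ with the singleton class $\{x_{i^*}\}$ bounds $(x_\gamma,x_\delta,y_b)$ by $O(\sqrt{1\cdot m\cdot m\cdot m})=O(m^{3/2})$; the other $s-5$ keys are free, for a total $s^{O(1)}\cdot m^2\cdot m^{3/2}\cdot m^{s-5}=s^{O(1)}m^{s-3/2}$. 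Summing the geometrically decaying series over all shape parameters (convergence using $s^c\le\tfrac45 m$) gives the claim.

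The main obstacle is precisely the structural observation that $k\ge\max(s-1,5)$ forces two relations to overlap in at most one index. Without it one is stuck with a large sunflower of relations sharing a common $2$-element core, where the obvious processing orders are all ``balanced'' ($2$ fixed $+$ $2$ free keys per $4$-key dependency) and gain nothing over Lemma~\ref{struclem1}; the single shared index in the $|I_a\cap I_b|=1$ case is exactly the singleton that unbalances Lemma~\ref{deplemma} in our favour. The rest — degenerate cases where some of the keys coincide or some $|I_j|$ is even, and the series bookkeeping — is routine once this point is in place.
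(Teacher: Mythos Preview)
The paper does not prove Lemma~\ref{struclem2}; it is quoted from Dahlgaard et al.~\cite{Dahl1} without proof, so there is nothing in the present paper to compare against.

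Your argument is essentially correct and is in the same spirit as the uses of Lemma~\ref{deplemma} elsewhere in the paper. The crucial step is right: after reducing to the case where $x_1,\dots,x_s$ are independent and every $I_j$ has size exactly~$3$, the hypothesis $k\ge\max(s-1,5)$ together with the (correct) extremal fact that a family of $3$-subsets of $[s]$ with all pairwise intersections of size $\ge 2$ has at most $\max(s-2,4)$ members forces two relations $I_a,I_b$ with $|I_a\cap I_b|\le 1$; your sequential application of Lemma~\ref{deplemma} then recovers the extra factor $\sqrt m$ precisely in the $|I_a\cap I_b|=1$ case.

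Two small clean-ups. First, ``a dependency of $\le4$ keys'' should read ``$\ge4$ keys'': a minimal relation $\bigoplus_{i\in J}x_i=\emptyset$ among the $x$'s has $|J|$ even and, by $3$-independence, $|J|\ge4$, so Lemma~\ref{deplemma} indeed saves at least $m^{|J|/2}\ge m^2$. Second, the sentence ``distinct $j$'s give distinct $y_j$'' is inverted: in the intended reading of the lemma the $y_j$ are distinct elements of $X\setminus\{x_1,\dots,x_s\}$, and from this together with the independence of the $x_i$ you deduce that the index sets $I_j$ are distinct, which is what your pigeonhole needs. With these fixes (and noting that tuples with repeated $x_i$ are handled by the same argument on the underlying set, losing only an $s^{O(1)}$ factor), the proof stands.
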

Now we commence the proof of Theorem~\ref{nobinom}.
\begin{proof}[Proof of Theorem~\ref{nobinom}]
Let $k=\lceil \frac{\log \log n}{\log d}+r \rceil$ for some constant $r$ to be determined later depending only on $c$ and the size of the implicit constant in $m=O(n)$. Suppose that the hash graph contains a $B_k$. Then it also contains a $T_{k,\ell}$ for some $\ell$ that we will fix soon. We will split the analysis into several cases according to the dependencies of the keys hashing to the $T_{k,\ell}$.
\paragraph*{Case 1: The keys hashing to $T_{k,\ell}$ are mutually independent.} Let $s=|E(T_{k,\ell})|$. Note that each of the internal nodes of $T_{k,\ell}$ is contained in exactly $\ell$ edges going to children of $w$ such that these children are all leaves. The number of ways we can choose the keys hashing to $T_{k,\ell}$, including their order, is thus by Lemma~\ref{dnomprop} at most 
\begin{align*}
\frac{m^s}{(\ell!)^{d^{k-\ell}}}.
\end{align*}
The probability that such a choice of keys actually hash to the desired positions is at most $\left(\frac{d^2}{n}\right)^{s-1}$ and by a union bound the probability that the hash graph contains a $T_{k,\ell}$ consisting of independent edges is at most 
\begin{align*}
\frac{m^s}{(\ell!)^{d^{k-\ell}}}\left(\frac{d^2}{n}\right)^{s-1}=\frac{n}{d^2}\left(\frac{d^2m}{n(\ell!)^{\frac{d^{k-\ell}}{s}}}\right)^s\leq \frac{n}{d^2}\left(\frac{d^2m}{n\sqrt[\ell+1]{\ell!} }\right)^s,
\end{align*}
using, in the last step, that $s=\ell d^{k-\ell}+\frac{d^{k-\ell}-1}{d-1}\leq (\ell+1)d^{k-\ell}$. Now if $\ell=O(1)$ is chosen such that $\frac{d^2m}{n\sqrt[\ell+1]{\ell!}}<\frac{1}{2}$, which is possible since $m=O(n)$ and $d=O(1)$, and $r$ is chosen such that $r\geq \ell$ then $s> \ell d^{k-\ell}\geq \ell \log n $ and we get that the probability is at most 
\begin{align*}
\frac{n}{d^2} 2^{-s}< \frac{1}{n^{\ell-1}}\leq\frac{1}{n}
\end{align*}
if $\ell\geq 2$. This suffices and completes case 1.

In the next cases we will bound the probability that the hash graph contains a $T_{k,\ell}$ consisting of \emph{dependent} keys. From such a tree we construct a set $S$ of independent edges as follows: Order the edges of $T_{k,\ell}$ in increasing distance from the root and on each level from left to right\footnote{The meaning of this should be clear by considering Figure~\ref{pruned}. The important thing is that when we have added an edge going to the children of a vertex $v$ we in fact add \emph{all} such edges before continuing the procedure.}. 
Traversing the edges in this order we add an edge to the set $S$ if the corresponding key is independent of all the keys corresponding to edges already in $S$. Stop the process as soon as we meet an edge dependent on the keys in $S$. As we are not in case 1, the process stops before all keys are added to $S$.
\paragraph*{Case 2: All edges incident to the root lie in $S$.} Let $s=|S|$ be fixed. Let us first count the number of ways to choose the elements of $S$ accounting for symmetries in the corresponding subset of $T_{k,\ell}$. First of all note that $s=O( \log m)$ so we can apply Lemma~\ref{struclem1} and conclude that the set $S$, including the order, can be chosen in at most $s^{O(1)}m^{s-1}$. Despite saving a factor of $m$ a direct union bound will not suffice but we are close and we have not yet taken  advantage of the symmetries of the subset of $T_{k,\ell}$. 

Now by the way we traverse the edges when constructing $S$ there can be at most one internal node $v$ of $S$ contained in less than $\ell$ edges going to children of $v$ that are all leaves. If $v_1,\dots,v_h$ denote the internal vertices of $S$ and $w_i$ denotes the number of edges containing $v_i$ and going to children of $v_i$ that are all leaves we therefore have that $w_i<\ell$ for at most one $i$. 

With this definition the number of ways to choose $S$ is at most 
\begin{align*}
s^{O(1)}m^{s-1}\prod_{i=1}^h\frac{1}{w_i!}\leq s^{O(1)}m^{s-1}\prod_{i=1}^h\left(\frac{e}{w_i}\right)^{w_i}.
\end{align*}
Now $f:x\mapsto x \log (e/x)$ is concave ($f''(x)=-1/x<0$) so by Jensen's inequality we obtain
\begin{align*}
\prod_{i=1}^h\left(\frac{e}{w_i}\right)^{w_i}=\exp\left(\sum_{i=1}^h w_i \log (e/w_i)\right)\leq \exp\left(w \log \left( \frac{eh}{w}\right)\right)=\left(\frac{eh}{w}\right)^w
\end{align*}
where $w=\sum_{i=1}^hw_i$. 

We may assume that $v_1$ is the root and since the keys adjacent to $v$ are all in $S$ we have that $w_1\geq \ell$ even if $h=1$. We thus get that 
\begin{align*}
w=w_1+\cdots w_h\geq \begin{cases}
\ell, &h=1, \\
(h-1)\ell, &h\geq 2
\end{cases}
\end{align*}
Hence, in any case we obtain that $h/w\leq 2/\ell$. Secondly, for $h\geq 2$ we have that $w\geq s-h\geq s-\frac{w+\ell}{\ell}$ so $w\geq (s-1)\frac{\ell}{\ell+1}$. When $h=1$ we have the even stronger bound $w\geq s$. We thus obtain, assuming $\ell >2e$, that
\begin{align*}
\left(\frac{eh}{w}\right)^w\leq \left( \frac{2e}{\ell}\right)^{(s-1)\frac{\ell}{\ell+1}}.
\end{align*}
We now assume that $\ell$ is so large that $\left( \frac{2e}{\ell}\right)^{\frac{\ell}{\ell+1}}<\frac{n}{2d^2m}$. Then the number of ways to choose $S$ is at most
\begin{align*}
s^{O(1)}\left( \frac{n}{2d^2} \right)^{s-1}.
\end{align*}
Like in case 1 the probability that one of these choices of keys actually hash to $S$ is at most $\left( \frac{d^2}{n}\right)^{s-1}$ and so by a union bound we get that the probability of the event in case 2, for fixed $s$, is bounded by
\begin{align*}
s^{O(1)}2^{1-s}.
\end{align*}
A union bound over all $s> \frac{\log \log n}{\log d}$ gives that the probability of the event in case 2 is at most
\begin{align*}
&\sum_{s>\frac{\log \log n}{\log d}} s^{O(1)}2^{1-s}\leq 2^{-\frac{\log \log n}{\log d}+2}\sum_{k\geq 1}2^{-k}\left(k+\frac{\log \log n}{\log d}-1\right)^{O(1)} \\
= & \frac{4}{(\log n)^{1/\log d}} \left(\frac{\log \log n}{\log d} \right)^{O(1)}=  \frac{(\log \log n)^{O(1)}}{(\log n)^{1/\log d}}
\end{align*}
which suffices.

We may now assume that not all of the edges incident to the root are independent and we will let $S'$ be a largest set of independent edges incident to the root. We divide the proof into two cases.
\paragraph*{Case 3: Not all but at least $\frac{\log \log n}{2\log d}$ edges incident to the root lie in $S'$.}
The proof in case 3 is almost similar to the proof in case 2 but much simpler. The reason we need it is that it allows us to assume that we have a lot of edges dependent on the edges in $S'$ adjacent to the root and thus use Lemma~\ref{struclem2}. \\
Let $s'=|S'|$ be fixed. The number of ways we can choose the keys in $S'$ (including their order) is by Lemma~\ref{struclem1} bounded by $\frac{s'^{O(1)}m^{s'-1}}{s'!}$ so the probability of finding such a set is at most
\begin{align*}
\frac{s'^{O(1)}m^{s'-1}}{s'!}d\left(\frac{d}{n} \right)^{s'-1} \leq d s'^{O(1)} \left(\frac{med}{ns'} \right)^{s'-1}= s'^{O(1)} O((\log n)^{-1}),
\end{align*}
using in the last step that $s'=\Omega(\log \log n)$.
A union bound over all $s'\leq \frac{\log \log n}{\log d}+r=O( \log \log n)$ gives the desired.
\paragraph*{Case 4: Less than $\frac{\log \log n}{2\log d}$ edges incident to the root lie in $S'$.}
By Lemma~\ref{struclem2} the number of ways to choose the keys in $S'$ is at most $s'^{O(1)}m^{s'-3/2}$. Thus the probability that such a set $S'$ occurs is (not even accounting for the symmetries)  at most
\begin{align*}
s'^{O(1)}m^{s'-3/2}d\left( \frac{d}{n} \right)^{s-1}\leq ds'^{O(1)}\left(\frac{dm}{n} \right)^{s'-3/2}n^{-1/2}= (\log n)^{O(1)}n^{-1/2}.
\end{align*}
Summing over all $s'$ gives the desired result and the proof is complete.
\end{proof}

\section{Bounding the probability of the existence of a Fibonacci tree}\label{Tree2app}
We will prove the following result. In Appendix~\ref{finishingapp} we will see how to deduce Theorem~\ref{alwaysleftthm}.
\begin{theorem}\label{nofibtree}
There exists a constant $r=O(1)$ such that when hashing $m=O(n)$ balls into $d$ tables of size $g=n/d$ using $d$ simple tabulation hash functions the probability that the hash graph contains an $S_i(k)$ of size at least $k=\lceil \frac{\log \log n}{d\log \varphi_d}+r \rceil$ is $O((\log \log n)^{-1})$.
\end{theorem}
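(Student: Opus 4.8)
The plan is to carry over, almost verbatim, the argument used to prove Theorem~\ref{nobinom}, with the $d$-nomial trees $B_k$ and their pruned versions $T_{k,\ell}$ replaced by the $d$-ary Fibonacci trees $S_i(k)$ and suitably pruned analogues. Concretely, the proof has four ingredients: (i) size estimates showing that $S_i(k)$, and its pruning, have $\Omega(\log n)$ edges when $k=\lceil\frac{\log\log n}{d\log\varphi_d}+r\rceil$ and $r$ is a large constant; (ii) a pruning operation ensuring that every internal node of the pruned tree carries a fixed number $\ell$ of ``leaf edges'' while the number of internal nodes stays small compared with the number of edges; (iii) the case analysis on the dependencies of the keys hashing to the pruned tree; and (iv) the observation that containing $S_i(k)$ implies containing its pruning, so it suffices to bound the probability of the latter. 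Only (i) and (ii) require genuinely new work; (iii) is a black-box reuse of the machinery of \cite{Dahl1} already invoked in Theorem~\ref{nobinom}, together with Lemmas~\ref{struclem1} and~\ref{struclem2}.

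For the size estimates, write $E_i(k)=|E(S_i(k))|$. The recursive definition gives $E_i(0)=0$ and $E_i(k)=1+\sum_{j<i}E_j(k)+\sum_{j\ge i}E_j(k-1)$. Unrolling the sum over the $d$ values of the index shows that, up to lower order terms, $E_i$ obeys the $d$-ary Fibonacci recurrence, so $E_i(k)=\Theta(\varphi_d^{\,dk})$ with implied constants depending only on $d$ and $i$; in particular $|E(S_i(k))|=\Theta(\varphi_d^{\,dr}\log n)$, which exceeds any prescribed constant multiple of $\log n$ once $r$ is large enough. One also records that $S_i(k)$ has height $\Theta(dk)$, since the index must cycle through $1,\dots,d$ before $k$ is decremented. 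Next, define the $\ell$-pruned Fibonacci tree exactly as in Definition~\ref{prundeddef}: from $S_i(k)$, delete the induced subtree rooted at any vertex contained in fewer than $\ell$ edges all of whose remaining endpoints are leaves, and discard the resulting isolated vertices. By induction on $k$ (base case $k$ small relative to $\ell$, treating the dependence on $i$ carefully) one verifies: every internal node of the pruned tree is contained in exactly $\ell$ leaf edges; the subgraph induced by the internal nodes is again a Fibonacci-type tree, so it has $\Theta(\varphi_d^{\,d(k-\ell)})$ internal nodes and hence the pruned tree has at most $O(\ell)$ times as many edges as internal nodes; and for $r\ge\ell$ the pruned tree still has $\Omega(\log n)$ edges. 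Fix a constant $\ell$ for which these hold.

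Now the case analysis proceeds as in the proof of Theorem~\ref{nobinom}. If the $s$ keys hashing to the pruned tree are mutually independent, the number of ordered key-assignments is at most $m^s/(\ell!)^{\#\text{internal}}$ by the leaf-edge property, the probability that a fixed assignment hashes to the desired positions is at most $(d^2/n)^{s-1}$, and since the number of internal nodes is $\Omega(s/\ell)$ the symmetry saving is $(\ell!)^{\Omega(1/\ell)}=\Omega(\ell)$ per edge, which beats $d^2m/n$ once $\ell$ is a large enough constant; hence this contributes at most $\tfrac{n}{d^2}2^{-\Omega(\log n)}=o(n^{-1})$. If the keys are dependent, one performs the breadth-first traversal of the pruned tree, forms the maximal prefix $S$ of independent edges, and splits on how many of the edges incident to the root are independent: if all of them are, one uses that the root's leaf-degree is $\ge\ell$, Jensen's inequality applied to the symmetry factors $\prod 1/w_i!$, and Lemma~\ref{struclem1} to save a factor $m$; if many but not all are, Lemma~\ref{struclem1} alone suffices; if few are, Lemma~\ref{struclem2} saves a factor $m^{3/2}$. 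Exactly as in Theorem~\ref{nobinom}, each case contributes $(\log n)^{-\Omega(1)}$ or $n^{-\Omega(1)}$, which is comfortably $o((\log\log n)^{-1})$, and summing the finitely many cases yields the claimed bound.

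The only real obstacle is the bookkeeping in the first two ingredients. Because the recursion for $S_i(k)$ is coupled across the index --- $S_i(k)$ depends on $S_j(k)$ for $j<i$, not merely on smaller $k$ --- the inductions establishing the growth rate $\varphi_d^{\,d}$ of $E_i(k)$ and the structure of the pruned tree are more delicate than the corresponding statements for $B_k$ and $T_{k,\ell}$; in particular one must choose the base case of the pruning induction so as to handle small $k$ together with all values of $i$, and check that the internal-node subgraph of the pruned tree is again of Fibonacci type with the same growth rate. Once these estimates are in place, the probabilistic part of the proof is essentially identical to that of Theorem~\ref{nobinom}.
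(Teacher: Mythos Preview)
Your plan is the paper's approach: prune the Fibonacci tree, establish Fibonacci-type size estimates (the paper packages these as Lemma~\ref{usefulfacts}), and rerun the four-case analysis of Theorem~\ref{nobinom}. However, one step does \emph{not} carry over verbatim, and the paper flags it explicitly.

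In Case~2 of Theorem~\ref{nobinom} the crucial structural fact is that at most one internal node of the BFS-prefix $S$ can have $w_i<\ell$; this is what drives the Jensen bound on $\prod 1/w_i!$. That fact relies on your claim that ``every internal node of the pruned tree is contained in exactly $\ell$ leaf edges'', which is \emph{false} for the Fibonacci pruning. An internal node of $P_{i,\ell}(k)$ that is the root of an $S_p(\cdot)$ with $p>1$ has only $\ell-1$ edges all of whose children are leaves: at level $\ell$ of its subtree, the children that are roots of $S_a(\ell)$ for $a<p$ survive the pruning, so that edge is not a leaf edge. Consequently many internal nodes of $S$ can have $w_i<\ell$, and the ``at most one exception'' argument collapses. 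The paper repairs this by replacing the static count with a dynamic one: along the BFS construction of $S$ it tracks the ratio $h_j/\alpha_j$ of internal nodes to current leaf edges, observing that each step (except possibly the last) adds one internal node and at least $\ell$ edges while destroying at most one existing leaf edge, so $\alpha_j\ge\alpha_{j-1}+\ell-1$ and $h/w\le 2/\ell$ still holds. The fix is minor, but it is precisely the place where your proposed ``black-box reuse'' of Case~2 fails; the obstacle is in the probabilistic case analysis, not only in the bookkeeping of the structural lemmas.
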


The proof of Theorem~\ref{nofibtree} is very similar to the proof of Theorem~\ref{nobinom}. First of all let us define the $\ell$-pruned version of $S_i(k)$. The definition is analogous to definition~\ref{prundeddef} 
\begin{definition}
For $k,\ell \geq 0$ let $P_{i,\ell}(k)$ be the tree obtained from $S_i(k)$ by for each vertex $w$ with less than $(d-1)\ell$ children removing the edges of the induced subtree rooted at $w$ (and the thus created isolated vertices).
\end{definition}
Like in the proof of Theorem~\ref{nobinom} we will thus need to know the number of edges of $P_{i,\ell}(k)$ as well as the number of internal vertices $w$ of $P_{i,\ell}(k)$ such that $w$ is contained in at least $\ell$ edges going to children of $v$ that are all leaves. In that direction we have the following result.
\begin{lemma}\label{usefulfacts}
The following holds
\begin{enumerate}
\item The number of edges of $P_{i,\ell}(\ell)$ is exactly $\ell 2^{i-1}$. Also, for $k>\ell$ we have that 
\begin{align*}
|E(P_{i,\ell}(k))|=1+\sum_{j=1}^{i-1}|E(P_{j,\ell}(k))|+\sum_{j=i}^{d}|E(P_{j,\ell}(k-1))|.
\end{align*}
In particular for $k\geq \ell$
\begin{align*}
F_d(d(k-\ell)+i+1)\leq\frac{ |E(P_{i,\ell}(k))|}{\ell}\leq F_d(d(k-\ell)+i+2).
\end{align*}
\item The number $g_{i,\ell}(k)$ of vertices $w$ of $P_{i,\ell}(k)$ that are contained in at least $\ell$ edges going to children of $w$ that are leaves is exactly $F_d(d(k-\ell)+i)$.
\end{enumerate}
\end{lemma}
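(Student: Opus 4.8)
The plan is to reduce both parts to short inductions after pinning down the recursive structure of the pruned tree $P_{i,\ell}(k)$. Unwinding the definition of $S_i(k)$, its root lies in $k$ \emph{spine} edges $e^{(1)},\dots,e^{(k)}$, and along $e^{(t)}$ the $d-1$ non-root vertices are roots of fresh copies $S_j(t)$ for $j<i$ and $S_j(t-1)$ for $j\ge i$ (the copy $S_i(t-1)$ sharing the root). A vertex survives the $\ell$-pruning iff it and all of its proper ancestors have at least $(d-1)\ell$ children, i.e.\ lie in at least $\ell$ spine edges of their own; since the root of $S_m(k')$ has exactly $k'(d-1)$ children, this reveals precisely which copies survive. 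The payoff is the following dichotomy, which I would prove once and reuse: for $k>\ell$ (so $k-1\ge\ell$), $P_{i,\ell}(k)$ is the edge-disjoint union of the edge $e^{(k)}$, a copy of $P_{j,\ell}(k)$ at the $j$-th child of $e^{(k)}$ for $1\le j<i$, a copy of $P_{j,\ell}(k-1)$ at the $j$-th child for $i<j\le d$, and a copy of $P_{i,\ell}(k-1)$ glued at the root, and moreover all $d-1$ children of $e^{(k)}$ are internal in $P_{i,\ell}(k)$; whereas for $k=\ell$ the children of $e^{(\ell)}$ of index $j>i$ are already leaves, so the root of $P_{i,\ell}(\ell)$ sees $e^{(\ell)}$ together with copies $P_{j,\ell}(\ell)$ for $j<i$ only, while $e^{(1)},\dots,e^{(\ell-1)}$ all run to leaves.

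For part (1), the dichotomy gives $|E(P_{i,\ell}(\ell))|=\ell+\sum_{j<i}|E(P_{j,\ell}(\ell))|$, hence $|E(P_{i,\ell}(\ell))|=\ell 2^{i-1}$ by induction on $i$, and the stated recursion for $k>\ell$. Writing $E_i(k)=|E(P_{i,\ell}(k))|$ and inducting on $d(k-\ell)+i$, the lower bound is immediate: discarding the $+1$, the hypothesis turns the right-hand side into $\ell$ times $\sum_m F_d(m)$ over the $d$ consecutive indices $m=d(k-\ell)+i-d+1,\dots,d(k-\ell)+i$, which is $\ell F_d(d(k-\ell)+i+1)$ by the recursion for $F_d$; the base holds with equality since $F_d(i+1)=2^{i-1}$ for $1\le i\le d$. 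The upper bound is the one place that needs care: the stray $+1$ breaks a direct induction, so I would prove the strengthening $E_i(k)+1\le \ell F_d(d(k-\ell)+i+2)$ instead. It holds at $k=\ell$ because $F_d(i+2)-2^{i-1}\ge 1$ (and $\ell\ge 1$), and it propagates since the $d\ge 2$ terms on the right each carry a $-1$ of slack, which absorbs the constant $2$ coming from the $+1$; dividing by $\ell$ then yields the claimed bound.

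Part (2) runs the same way, and is in fact cleaner. For $k>\ell$ the edge $e^{(k)}$ is never all-leaf (all its $d-1$ children are roots of surviving copies, hence internal), so gluing the $P_{i,\ell}(k-1)$-trunk at the root leaves the root's tally of all-leaf edges unchanged, and the other trunk vertices keep their subtrees verbatim; hence $g_{i,\ell}(k)=\sum_{j<i}g_{j,\ell}(k)+\sum_{j\ge i}g_{j,\ell}(k-1)$, a recursion \emph{without} an additive constant. At $k=\ell$ the root of $P_{i,\ell}(\ell)$ has $\ell$ all-leaf spine edges when $i=1$ but only $\ell-1$ when $i\ge 2$ (edge $e^{(\ell)}$ then also carries the internal children of indices $1,\dots,i-1$), so $g_{1,\ell}(\ell)=1$ and $g_{i,\ell}(\ell)=\sum_{j<i}g_{j,\ell}(\ell)$ for $i\ge 2$, giving $g_{i,\ell}(\ell)=F_d(i)$ by induction on $i$ (via $\sum_{j<i}F_d(j)=F_d(i)$ for $i\le d$). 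Inducting on $d(k-\ell)+i$, the constant-free recursion collapses to $\sum_m F_d(m)$ over the $d$ consecutive indices $m=d(k-\ell)+i-d,\dots,d(k-\ell)+i-1$, which is exactly $F_d(d(k-\ell)+i)$. The only genuine obstacle in the whole argument is getting the structural dichotomy for $P_{i,\ell}(k)$ right --- in particular tracking which recursive copies survive pruning, which is exactly where the hypothesis $k-1\ge\ell$ is used --- after which both parts are routine inductions, the lone remaining subtlety being the strengthened upper bound in part (1).
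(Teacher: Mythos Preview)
Your argument is correct and follows essentially the same route as the paper: establish the structural decomposition of $P_{i,\ell}(k)$ into the top edge plus the pruned subtrees, derive the same recursions for $|E(P_{i,\ell}(k))|$ and $g_{i,\ell}(k)$, and finish by induction on $d(k-\ell)+i$. The only cosmetic difference is in the upper bound of part~(1): the paper absorbs the stray $+1$ by setting $\alpha_{i,\ell}(k)=|E(P_{i,\ell}(k))|+\tfrac{1}{d-1}$, which satisfies the Fibonacci recursion \emph{exactly} (the $d$ copies of $\tfrac{1}{d-1}$ on the right cancel the $1+\tfrac{1}{d-1}$ on the left), and then checks $\alpha_{i,\ell}(\ell)\le \ell F_d(i+2)$; you instead prove the strengthened inequality $E_i(k)+1\le \ell F_d(d(k-\ell)+i+2)$ directly, using the slack $d-2\ge 0$ from the $d$ inductive terms to swallow the constant. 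Both work and the rest is identical.
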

\begin{proof}
Let us prove 1. first. Clearly $|E(P_{i,\ell}(k))|=0$ when $k<\ell$ and $|E(P_{1,\ell}(\ell))|=\ell$. It is also  easy to check the recursion $|E(P_{i,\ell}(\ell))|=\ell+\sum_{j=1}^{i-1}|E(P_{j,\ell}(\ell))|$ which implies that $|E(P_{i,\ell}(\ell))|=\ell2^{i-1}$. The last equality follows from the fact that for $k>\ell$ we have that $P_{i,\ell}(k)$ consists of one edge and a copy of $P_{j,\ell}(k)$ for each $j<i$ together with a copy of $P_{j,\ell}(k-1)$ for  each $j>i$ and these are all being $\ell$-pruned in the process of $\ell$-pruning $S_i(k)$

Finally let's prove the estimate on $|E(P_{i,\ell}(k))|$. The lower bound clearly holds when $k=\ell$ (here we have equality) and for $k>\ell$ we inductively have that 
\begin{align*}
|E(P_{i,\ell}(k))|>\sum_{j=1}^{i-1}|E(P_{j,\ell}(k))|+\sum_{j=i}^{d}|E(P_{j,\ell}(k-1))|\geq \ell F_d(d(k-\ell)+i+1).
\end{align*}
Now for the upper bound. Let $\alpha_{i,\ell}(k)=|E(P_{i,\ell}(k))|+\frac{1}{d-1}$. Then for $k>\ell$ we have that 
\begin{align*}
\alpha_{i,\ell}(k)=\sum_{j=1}^{i-1}\alpha_{j,\ell}(k)+\sum_{j=i}^{d}\alpha_{j,\ell}(k-1).
\end{align*}
It is trivial to check that $\alpha_{i,\ell}(\ell)\leq \ell F_d(i+2)$ and this combined with the recursion gives that $\alpha_{i,\ell}(k)\leq \ell F_d(d(k-\ell)+i+2)$ for any $ k\geq \ell$ so we get the stated inequality.

Now for the second statement. When $k<\ell$ the number of such vertices is zero so the result is trivial. Also, when $k=\ell$ and $i=1$ there is exactly $1=F_d(1)$ such vertex. Finally, consider the tree $P_{i,\ell}(k)$ for $k\geq \ell$ and $(k,i)\neq (\ell,1)$. The root $v$ is  contained in $k\geq \ell$ edges so these are not pruned. 

Now $S_i(k)$ consist of an edge $e=(v_1,\dots,v_d)$ such that $v_j$ is a root of an $S_j(k)$ for $j<i$ and an $S_j(k-1)$ for $j\geq i$. 

Suppose first that $k=\ell$. In the process of pruning $S_i(\ell)$ we prune $S_1(\ell),\dots,S_{i-1}(\ell)$. Hence, 
\begin{align*}
g_{i,\ell}(\ell)= \sum_{j=1}^{i-1}g_{j,\ell}(\ell)= \sum_{j=1}^{i-1}F_d(j)=F_d(i).
\end{align*}
A similar argument works when $k>\ell$. In this case we prune the subtrees $S_j(k)$ for $j<i$ and the subtrees $S_j(k-1)$ for $j\geq i$ so we get
\begin{align*}
g_{i,\ell}(\ell)= \sum_{j=1}^{i-1}g_{j,\ell}(k)+\sum_{j=i}^{d}g_{j,\ell}(k-1)= F_d(d(k-\ell)+i)
\end{align*}
and we are done. 
\end{proof}

Now we are ready to prove Theorem~\ref{nofibtree}. 
\begin{proof}[Proof of Theorem~\ref{nofibtree}]
Like in the proof of Theorem~\ref{nobinom} we will split the proof in four cases.
\paragraph*{Case 1: The keys hashing to $P_{i,\ell}(k)$ are mutually independent.} Let $s=|E(P_{i,\ell}(k))|$. The number of ways to choose the keys hashing to $P_{i,\ell}(k)$ (including their positions) is like in the proof of Theorem~\ref{nobinom} at most 
\begin{align*}
\frac{m^s}{(\ell!)^{g_{i,\ell}(k)}}= \frac{m^s}{(\ell!)^{F_d(d(k-\ell)+i)}}
\end{align*}
where we used Lemma~\ref{usefulfacts}. Hence, by a union bound the probability of having an $P_{i,\ell}(k)$ consisting of independent keys is at most
\begin{align*}
\frac{m^s}{(\ell!)^{F_d(d(k-\ell)+i)}}\left(\frac{d^2}{n}\right)^{s-1}=\frac{n}{d^2}\left( \frac{d^2m}{n(\ell!)^{\frac{F_d(d(k-\ell)+i)}{s}}}\right)^s
\end{align*}
where $s=|E(P_{i,\ell}(k))|$. But by the inequality in Lemma~\ref{usefulfacts} we know that
\begin{align*}
\frac{F_d(d(k-\ell)+i)}{s}\geq \frac{F_d(d(k-\ell)+i)}{\ell F_d(d(k-\ell)+i+2)}\geq \frac{1}{4\ell}.
\end{align*}
Hence, choosing $\ell$ sufficiently large we get that the probability above is at most $\frac{n}{d^2}2^{-s}$. Now $\varphi_d$ is the rate of growth of $F_i(d)$ and we can find a constant $c_d$ such that $F_d(i)\geq c_d\varphi^{i}$ for all $i\in \mathbb{N}$. Thus
\begin{align*}
2^s\geq 2^{F_d(d(k-\ell))}\geq 2^{c_d\varphi^{d(k-\ell)}}.
\end{align*}
It follows that if $k\geq \frac{1+\log \log n-\log c_d}{d\log \varphi_d}+\ell$ then $2^{-s}\leq n^{-2}$ and since $r=\frac{1-\log c_d}{d\log \varphi_d}+\ell=O(1)$ we are done.
\paragraph*{Case 2: All edges incident to the root are independent.}
We proceed as in the proof of Theorem~\ref{nobinom} by constructing a set $S$ of independent keys in the following way. We order the edges of $P_{i,\ell}(k)$ according to increasing distance to the root and on each level from left to right. We then traverse the edges in this order adding a key to $S$ if it is independent on the keys already in $S$. We stop the process as soon as we meet a dependent key. Like in the proof of Theorem~\ref{nobinom} we let $v_1,\dots,v_h$ denote the internal nodes of $S$ and for $1\leq i \leq h$ we let $w_i$ denote the number of edges containing $v_i$ and going to children of $v_i$ that are all leaves. Then using Lemma~\ref{struclem1} we conclude that the number of ways to choose the keys (including their position) is at most
\begin{align*}
s^{O(1)}m^{s-1}\prod_{i=1}^h \frac{1}{w_i!}\leq s^{O(1)} m^{s-1}\left(\frac{eh}{w}\right)^w
\end{align*}
where $s=|S|$. When bounding $\frac{h}{w}$ we cannot proceed exactly as in the proof of Theorem~\ref{nobinom} because there might be many internal nodes (not just one) of $S$ that are not the starting node of at least $\ell$ edges going to lower level leaves. However, we only need to change the argument slightly and by doing so it will actually also work for case 2 in the proof of Theorem~\ref{nobinom}. 

$S$ is constructed by first adding all the edges adjacent to the root to $S$ and then repeatedly adding groups of at least $\ell$ edges going to children of a given node $v$. Finally we add a group of edges, which might have size $<\ell$, to a leaf (making it an internal node). Let these steps be enumerated $1,\dots,t$ for some $t$.

Let $h_j$ denote the number of internal nodes after the $j$'th step. Similarly, after the $j$'th step, denote by $\alpha_j$ the number of edges $e$ containing a vertex $v$ and going to children of $v$ such that all the children of $v$ lying in $e$ are leaves.

Clearly $\frac{h_1}{\alpha_1}=\frac{1}{k}\leq \frac{1}{\ell-1}$. Also, for $j<t$ we have that $h_{j}\leq h_{j-1}+1$ and $\alpha_j\geq \alpha_{j-1}+\ell-1$. Hence, if $\frac{h_{j-1}}{\alpha_{j-1}}\leq \frac{1}{\ell-1}$ we must have that 
\begin{align*}
\frac{h_{j}}{\alpha_{j}}\leq \frac{h_{j-1}+1}{\alpha_{j-1}+\ell-1}\leq \frac{1}{\ell-1}
\end{align*} 
so this inequality is preserved. Finally, when adding the $t$'th group (which might have size smaller than $\ell$) we don't change this inequality by much. Indeed,
\begin{align*}
\frac{h}{w}=\frac{h_t}{\alpha_t}\leq \frac{h_{t-1}+1}{\alpha_{t-1}}\leq \frac{1}{\ell-1}+\frac{1}{\alpha_{t-1}}\leq \frac{1}{\ell-1}+\frac{1}{k} \leq \frac{2}{\ell}
\end{align*}
if $\ell=O(1)$ and $k$ are sufficiently large. 

Defining $s_j$ to be the total number of edges after the $j$'th group is inserted we in a similar way see that for $j<t$
\begin{align*}
\alpha_j\geq \frac{\ell-1}{\ell}s_j
\end{align*}
and so $w=\alpha_t\geq \frac{\ell-1}{\ell}(s-1)$. Hence, if $\ell>2e$, the number of ways to choose the independent keys, including their position, is at most 
\begin{align*}
s^{O(1)}m^{s-1}\left(\frac{2e}{\ell} \right)^{\frac{\ell-1}{\ell}(s-1)}
\end{align*}
and from here on the proof is identical to the proof of Theorem~\ref{nobinom}.

\end{proof}

\section{Completing the proofs}\label{finishingapp}
In this appendix we wrap up the proofs of Theorem~\ref{mythm} and Theorem~\ref{alwaysleftthm}. Combining Lemma~\ref{smalltight}, Theorem~\ref{tightthm} and Theorem~\ref{nobinom} we see that there is a constant $r>0$ such that the probability that the maximum load $L$ is at least $\frac{\log \log n}{\log d}+r+1$ is $O((\log \log n)^{-1})$. To see that this suffices we first recall the high probability bound by Dahlgaard et al.~\cite{Dahl1}.

\begin{theorem}[Dahlgaard et al.~\cite {Dahl1}]\label{dahlthm}
Let $h_1$ and $h_2$ be two independent random simple tabulation hash functions. If
$m = O(n)$ balls are placed in two tables each consisting of $n/2$ bins sequentially using the two-choice paradigm
with $h_1$ and $h_2$, then for any constant $\gamma > 0$, the maximum load of any bin is $O(\log \log n)$ with
probability $1-O(n^{-\gamma})$.
\end{theorem}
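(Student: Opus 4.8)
The plan is to deduce Theorem~\ref{dahlthm} from the $d=2$ case of the machinery developed in Appendices~\ref{Impliapp}--\ref{Tree1app}, run with a witness tree far deeper than the one used for Theorem~\ref{nobinom}. Fix $\gamma$ and put $k=\lceil C\log\log n\rceil$ for a constant $C=C(\gamma,c)$ to be chosen large; it suffices to prove $\P[\text{max load}\ge k+1]=O(n^{-\gamma})$. By Lemma~\ref{smalltight} with $d=2$, whenever some bin has load $\ge k+1$ the hash graph contains either a $2$-nomial tree $B_k$ (which has $2^k-1=(\log n)^{\Theta(C)}$ edges) or a tight subgraph of size $O(\log\log n)$ --- for $d=2$ simply a double cycle of logarithmic size. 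So it is enough to bound $\P[B_k\text{ occurs}]$ and $\P[\text{a small tight subgraph occurs}]$, each by $O(n^{-\gamma})$.

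For the witness tree I would pass to the $\ell$-pruned tree $T_{k,\ell}$ of Definition~\ref{prundeddef} with $\ell=\ell(\gamma)=O(1)$ chosen large, so that $|E(T_{k,\ell})|=\Theta_\ell(2^{k-\ell})=(\log n)^{\Theta(C)}$. If the keys hashing to $T_{k,\ell}$ are mutually independent, the computation of Case~1 in the proof of Theorem~\ref{nobinom} gives a bound $\tfrac{n}{d^2}\bigl(\tfrac{d^2m}{n\sqrt[\ell+1]{\ell!}}\bigr)^{|E(T_{k,\ell})|}$; for $\ell$ large the base is below $1/2$, and since the exponent is $(\log n)^{\Theta(C)}$ this is $n\cdot 2^{-(\log n)^{\Theta(C)}}=n^{-\omega(1)}$. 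If there are dependencies one runs the analogues of Cases~2--4 of that proof, using Lemma~\ref{deplemma} and the structural Lemmas~\ref{struclem1}--\ref{struclem2}. The key point --- and the main obstacle --- is that, unlike for the expectation bound, one cannot afford to reduce to a small independent ``core'' $S$ and discard the rest of the tree, since that route only yields a bound polynomial in $1/\log n$; instead one must bound the number of ways to choose \emph{all} $(\log n)^{\Theta(C)}$ keys of $T_{k,\ell}$, combining the pruning symmetries with the fact that any dependency only decreases this count via Lemma~\ref{deplemma}. Carried out carefully this again gives $n^{-\omega(1)}$, so $\P[B_k\text{ occurs}]=n^{-\omega(1)}=O(n^{-\gamma})$ for every fixed $C$.

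For the dense part, Theorem~\ref{tightthm} already gives $n^{-1/3+o(1)}$, which settles $\gamma<1/3$. For larger $\gamma$ I would strengthen Lemma~\ref{smalltight} and Theorem~\ref{tightthm} by a constant amount. Deleting the (at most $O(1)$) edges of a bounded-size tight subgraph from a load graph lowers the maximum load by only $O(1)$, so peeling such cores shows that load $\ge k+1$ forces either a copy of $B_{k-t}$ or a connected subgraph of size $O(k)$ with $|V|\le (d-1)|E|-t$, where $t=\lceil\gamma\rceil+2$. A size-$O(\log\log n)$ subgraph carrying $t$ units of excess requires $t$ extra collision equations, so a direct union bound (over the $s^{O(s)}$ shapes with $s=O(\log\log n)$ edges, using Lemma~\ref{deplemma} for dependencies) gives probability $n^{1-t+o(1)}=n^{-\gamma-1+o(1)}$; note that the large-equivalence-class step in the proof of Theorem~\ref{tightthm} only ever manufactures cherry-like, excess-free structures, so the $n^{-1/3}$ bottleneck does not recur. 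The copy of $B_{k-t}$ is ruled out by the previous paragraph with $k$ replaced by $k-t$. Adding the two estimates yields $\P[\text{max load}\ge k+1]=O(n^{-\gamma})$, which is Theorem~\ref{dahlthm}.

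I expect the hardest step to be the dependency analysis of the witness tree in the second paragraph: the expectation argument of Theorem~\ref{nobinom} only needs an $o(1)$ failure probability and is content to work with a small independent core, whereas here we must control the entire $(\log n)^{\Theta(C)}$-edge tree while simultaneously exploiting both its pruning symmetries and the key-dependency savings of Lemma~\ref{deplemma}; making these two gains cooperate, rather than reducing to the lossy ``core'' argument, is the crux.
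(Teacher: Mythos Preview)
This theorem is \emph{quoted} from Dahlgaard et al.~\cite{Dahl1} and is not proved in the present paper; it is invoked as a black box in Appendix~\ref{finishingapp}. So there is no ``paper's own proof'' to compare against --- you are attempting to re-derive the result from the machinery of Appendices~\ref{Impliapp}--\ref{Tree1app}, which is more than the paper itself does.

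Your high-level decomposition (witness tree versus small dense subgraph, with the constant in $O(\log\log n)$ growing with $\gamma$) is the right shape, and you correctly diagnose that the bounds actually proved here --- $O((\log\log n)^{-1})$ in Theorem~\ref{nobinom} and $n^{-1/3+o(1)}$ in Theorem~\ref{tightthm} --- are both too weak. However, both of your proposed strengthenings have genuine gaps.

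\emph{Witness tree.} You are right that Cases~2--4 of Theorem~\ref{nobinom}, which stop at the first dependent key, can only ever produce a bound polynomial in $1/\log n$. Your proposed fix, to ``bound the number of ways to choose \emph{all} keys of $T_{k,\ell}$'' and claim this still yields $n^{-\omega(1)}$, glosses over the fact that dependencies cut both ways: if a key $x$ is dependent on earlier keys then Lemma~\ref{deplemma} saves in the count, but $h(x)$ is \emph{determined} by the earlier hash values, so you lose the $1/g$ factor for the tree constraint at $x$. In small examples these effects essentially cancel. Turning pruning symmetries and Lemma~\ref{deplemma} into an $n^{-\gamma}$ bound on the full tree is precisely the substantial work carried out in~\cite{Dahl1}; it is not a corollary of anything in this paper, and your sketch gives no mechanism for making the two savings cooperate.

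\emph{Tight subgraphs.} The peeling step is misstated: Lemma~\ref{smalltight} produces a tight subgraph with $O(k)=O(\log\log n)$ edges, not $O(1)$, so deleting it can drop the maximum load by $\Theta(\log\log n)$, which swallows $k$ entirely. The correct route is to argue directly on the load graph that excess $\ge t$ yields a small (size $O(tk)$) subgraph of excess $\ge t$; this is plausible but needs a proof. More seriously, your claim that ``the $n^{-1/3}$ bottleneck does not recur'' is unsupported. That bottleneck in the proof of Theorem~\ref{cuckoothm} is the event $\tilde h_i(C)=0$ for some large equivalence class $C$ --- a property of the hash function, not of any particular subgraph --- and the paper explicitly notes the $n^{-1/3}$ bound is tight for cuckoo hashing. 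You would need to show that an excess-$t$ structure forces $t$ essentially independent such bad events, and nothing in Section~\ref{Cuckoosec} does that.

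In short: the plan is reasonable as a roadmap, but both halves require exactly the additional arguments that are the content of~\cite{Dahl1}, and the proposal does not supply them.
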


Using this result we in fact get that even with $d$ choices the maximum load is $O(\log \log n)$ whp. Indeed, if there is a way to insert the $m$ balls into $d$ groups $G_1,\dots,G_d$ using $h_1,\dots,h_d$ respecting the $d$-choice paradigm and obtaining a maximum load of $L$, it is easy to check that if we insert the same balls into $G_1$ and $G_2$ restricting our choices to $h_1$ and $h_2$  and using the two choice paradigm we can obtain a maximum load of at least $L$. Since $m=O(n/d)$ (as $d$ is constant) Theorem~\ref{dahlthm} applies.

 Thus, there is an $\alpha>0$ such that the probability that the maximum load is at least $\alpha \log \log n$ is at most $n^{-1}$. Putting $k=\frac{\log \log n}{\log d}+r$ we obtain that
\begin{align*}
\E L=\sum_{i=1}^{k} \P(L\geq i)+\sum_{i=k+1}^{\alpha \log \log n} \P(L\geq i)+\sum_{i=\alpha \log \log n+1}^{m} \P(L\geq i)=\frac{\log \log n}{\log d}+O(1),
\end{align*}
which completes the proof of Theorem~\ref{mythm}. A similar argument completes the proof of Theorem~\ref{alwaysleftthm}.

\section{Open problems}\label{openapp}
Several problems concerning the use of simple tabulation in the $d$-choice paradigm remains open. We mention a few here:

\textbf{High probability bounds when $d=\omega(1)$:} The result by Dahlgaard et al.~\cite{Dahl1} implies that when $d=O(1)$ the maximum load is $O(\log \log n)$ whp. What can be said for $d=\omega(1)$? Is the maximum load $O\left(\frac{\log \log n}{\log d} \right)$ whp even when $d=\omega(1)$? In particular, if $d=(\log n)^{\varepsilon}$ for some $\varepsilon>0$ is the maximum load constant? A similar question can be asked for the Always-Go-Left algorithm. 

\textbf{The expected maximum load when $d=\omega(1)$:} Using the same techniques as us but exercising more care one can show that even if $d=\omega(1)$ is allowed to grow \emph{very} slowly the expected maximum load is at most $(1+o(1))\frac{\log \log n}{\log d}$ whp and similarly $(1+o(1))\frac{\log \log n}{d\log \varphi_d}$ for the Always-Go-Left algorithm (we provide no details). Can we obtain a more complete picture? The current techniques bounds the probability of certain \emph{combinatorial} structures in the hash graph that are consequences of the existence of a bin of large load. By this approach they actually yield that regardless of the order of the insertion of the balls the probabilistic bounds remain valid. For large $d$ this seems to be allowing too much adversarial power so other techniques might be needed.

\textbf{The heavily loaded case:} In our analysis we assumed that $m=O(n)$ but what happens for $m\gg n$? Berenbrink et al.~\cite{Ber} has shown that with fully random hashing the maximum load differs from the expected average by at most $\frac{\log \log n}{\log d}+O(1)$ whp. Even for $d=2$ we don't have a similar result with simple tabulation.

\section{The independence of simple tabulation}\label{Indapp}
We will here provide the proofs of Lemma~\ref{depchar} and Lemma~\ref{deplemma} both for completeness and to fairly portray the full length of the new proof of Theorem~\ref{cuckoothm}.
\begin{proof}[Proof of Lemma~\ref{depchar}]
One direction is easy. If $I$ is as described in the lemma $\bigoplus_{i\in I} h(x_i)=0$ as $h(\alpha)$ appears an even number of times in the sum for each position character $\alpha$ and the addition is in a $\Z_2$-vector space. In particular the keys $(x_i)_{i\in I}$ are dependent.

The converse will follow from a translation to linear algebra. Note first that any set $S$ of position characters can be naturally identified with a vector in $\Z_2^{[c] \times \Sigma}$. Indeed, we have a natural bijection $\varphi:\mathcal{P}([c]\times \Sigma)\to \Z_2^{[c] \times \Sigma}$ given by $\varphi:S \mapsto v_S$ where
\begin{align*}
v_S(j,a)=
\begin{cases}
1, & (j,a)\in S \\
0, & (j,a)\notin S
\end{cases}
\end{align*}
Choosing a random simple tabulation hash function is equivalent to uniformly at random picking a linear map $\tilde h:  \Z_2^{[c] \times \Sigma} \to \Z_2^r$ (the identification being $h=\tilde h \circ \varphi$). The assumption on the keys $x_1,\dots,x_k$ is equivalent to saying that the vectors $\varphi(x_1),\dots,\varphi(x_k)$ are linearly independent vectors over $\Z_2$. In particular the hash values $h(x_i)=\tilde h(\varphi(x_i))$ are independent and uniform in $\Z_2^r$.
\end{proof}

\begin{proof}[Proof of Lemma~\ref{deplemma}]
We proceed as in~\cite{Dahl2} and apply induction on $c$. Suppose first that $c=1$. First of all the number of partitions of a set of size $2t$ into $t$ pairs is exactly $(2t-1)!!$. Now, the identity $x_1\oplus \cdots \oplus x_{2t}=\emptyset$ gives that in the sequence $(x_1,\dots,x_{2t}) $ each element appears an even number of times and thus there is a partition of $\{1,\dots,2t\}$ into $t$-pairs $(i_1,j_1),\dots,(i_t,j_t)$ such that $x_{i_\ell}=x_{j_\ell}$ for $1\leq \ell \leq  t$. Now given such a partition the number of ways to choose the $x_i$'s is at most
\begin{align*}
\prod_{\ell=1}^t|A_{i_{\ell}}\cap A_{j_{\ell}}|\leq \prod_{\ell=1}^t \min (|A_{i_{\ell}}|,|A_{j_{\ell}}|)\leq\prod_{\ell=1}^t\sqrt{|A_{i_\ell}|}\sqrt{|A_{j_\ell}|}= \prod_{i=1}^{2t}\sqrt{|A_i|}.
\end{align*}
Summing over all $(2t-1)!!$ partitions gives the desired upper bound. 

Now suppose $c\geq 2$ and that the result holds for smaller $c$. We write $x=(x[0],\dots,x[c-1])$ for $x\in U$. For $a\in \Sigma$ we define $A_i[a]=\{x\in A_i:x[0]=a\}$. Then for a fixed partition of $\{1,\dots,2t\}$ into pairs $(i_1,j_1),\dots,(i_t,j_t)$ and for fixed choices of $a_1,\dots,a_t$ the induction hypothesis gives that the number of $2t$-tuples $(x_1,\dots,x_{2t})\in A_1\times \cdots \times A_{2t}$ with $x_1\oplus \cdots \oplus x_{2t}=\emptyset$ such that $x_{i_\ell}[0]=x_{j_\ell}[0]=a_\ell$ is at most 
\begin{align*}
((2t-1)!!)^{c-1}\prod_{\ell=1}^{t}\sqrt{|A_{i_\ell}[a_\ell]|}\sqrt{|A_{j_\ell}[a_\ell]|}.
\end{align*}
We sum this over all partitions and all choices of $a_1,\dots,a_t$ to get a total upper bound on the number of $2t$-tuples $(x_1,\dots,x_{2t})\in A_1 \times \cdots \times A_{2t}$ such that $x_1\oplus \cdots \oplus x_{2t}=\emptyset$ of
\begin{align*}
&((2t-1)!!)^c\sum_{a_1,\dots,a_{t}}\prod_{\ell=1}^{t}\sqrt{|A_{i_\ell}[a_\ell]|}\sqrt{|A_{j_\ell}[a_\ell]|}=((2t-1)!!)^c\prod_{\ell=1}^t\left(\sum_a  \sqrt{|A_{i_\ell}[a]|}\sqrt{|A_{j_\ell}[a]|}\right) \\
\leq& ((2t-1)!!)^c\prod_{\ell=1}^t \sqrt{\sum_a |A_{i_\ell}[a]|}\sqrt{\sum_a |A_{j_\ell}[a]|} =((2t-1)!!)^c \prod_{\ell=1}^t \sqrt{|A_{i_\ell}|}\sqrt{|A_{j_\ell}|}\\
=&((2t-1)!!)^c\prod_{i=1}^{2t}\sqrt{|A_i|},
\end{align*}
where we used Cauchy-Schwartz's inequality in the second step. This completes the induction.
\end{proof}

\end{document}